\newcommand{\kWISE}[1]{\scalebox{.8}[.8]{\textnormal{{#1}WISE}}}
\begin{document}
	
	\title{On closeness to $k$-wise uniformity}
	
	\author{Ryan O'Donnell\thanks{Supported by NSF grants CCF-1618679, CCF-1717606. This material is based upon work supported by the National Science Foundation under grant numbers listed above. Any opinions, findings and conclusions or recommendations expressed in this material are those of the author and do not necessarily reflect the views of the National Science Foundation (NSF).} \and Yu Zhao$^*$}
	
	\date{\vspace{-5ex}}
	
\maketitle

\begin{abstract}
	A probability distribution over $\{-1, 1\}^n$ is \emph{$(\eps, k)$-wise uniform} if, roughly, it is $\eps$-close to the uniform distribution when restricted to any $k$ coordinates. We consider the problem of how far an $(\epsilon, k)$-wise uniform distribution can be from any globally $k$-wise uniform distribution. We show that every $(\epsilon, k)$-wise uniform distribution is $O(n^{k/2}\epsilon)$-close to a $k$-wise uniform distribution in total variation distance. In addition, we show that this bound is optimal for all even $k$: we find an $(\eps, k)$-wise uniform distribution that is $\Omega(n^{k/2}\epsilon)$-far from any $k$-wise uniform distribution in total variation distance. For $k=1$, we get a better upper bound of $O(\eps)$, which is also optimal.
	
	One application of our closeness result is to the sample complexity of testing whether a distribution is $k$-wise uniform or $\delta$-far from $k$-wise uniform. We give an upper bound of $O(n^{k}/\delta^2)$ (or $O(\log n/\delta^2)$ when $k = 1$) on the required samples. We show an improved upper bound of $\tilde{O}(n^{k/2}/\delta^2)$ for the special case of testing fully uniform vs.\ $\delta$-far from $k$-wise uniform. Finally, we complement this with a matching lower bound of $\Omega(n/\delta^2)$ when $k = 2$.
	
	Our results improve upon the best known bounds from \cite{AAKMRX07}, and have simpler proofs.
\end{abstract}

\section{Introduction}

\subsection{$k$-wise uniformity and almost $k$-wise uniformity}

We say that a probability distribution over $\{-1,1\}^n$ is \emph{$k$-wise uniform} if its marginal distribution on every subset of $k$ coordinates is the uniform distribution. For Fourier analysis of the Hamming cube, it is convenient to identify the distribution with its density function $\varphi : \{-1, 1\}^n \to \R^{\geq 0}$ satisfying
\[
\E_{\bx \sim \{-1, 1\}^n}[\varphi(\bx)] = 1.
\]
We write $\bx \sim \varphi$ to denote that $\bx$ is a random variable drawn from the associated distribution with density~$\varphi$:
\[
\Pr_{\bx \sim \varphi}[\bx = x] = \frac{\varphi(x)}{2^n}
\]
for any $x \in \{-1, 1\}^n$. Then a well-known fact is that a distribution is $k$-wise uniform if and only if the Fourier coefficient of $\varphi$ is $0$ on every subset $S \subseteq [n]$ of size between $1$ and $k$:
\[
\widehat{\varphi}(S) = \E_{\bx \sim \varphi}\left[\prod_{i \in S} \bx_i\right] = 0.
\]

$k$-wise uniformity is an essential tool in theoretical computer science. Its study dates back to work of Rao \cite{Rao47}. He studied $k$-wise uniform sets, which are special cases of $k$-wise uniform distribution. A subset of $\{-1, 1\}^n$ is a  \emph{$k$-wise uniform set} if the uniform distribution on this subset is $k$-wise uniform. Rao gave constructions of a pairwise-uniform set of size $n+1$ (when $n=2^r-1$ for any integer $r$), a $3$-wise uniform set of size $2n$ (when $n = 2^r$ for any integer $r$), and a lower bound (reproved in \cite{ABI86, CGHFRS85}) that a $k$-wise uniform set on $\{-1,1\}^n$ requires size at least $\Omega(n^{\lfloor k/2 \rfloor})$. An alternative proof of the lower bound for even~$k$ is shown in \cite{AGM03} using a hypercontractivity-type technique, as opposed to the linear algebra method. Coding theorists have also heavily studied $k$-wise uniformity, since MacWilliams and Sloane showed that linear codes with dual minimum distance $k+1$ correspond to $k$-wise uniform sets in \cite{MS77}. The importance in theoretical computer science of $k$-wise independence for derandomization arose simultaneously in many papers, with \cite{KW85,Luby86} emphasizing derandomization via the most common pairwise-uniformity case, and \cite{ABI86,CGHFRS85} emphasizing derandomization based on $k$-wise independence more generally.

A distribution is ``almost $k$-wise uniform'' if its marginal distribution on every $k$ coordinates is very close to the uniform distribution. Typically we say two distributions $\varphi, \psi$ are \emph{$\delta$-close}, if the total variation distance between $\varphi$ and $\psi$ is at most $\delta$; and we say they are \emph{$\delta$-far}, if the total variation distance between them is more than $\delta$. However the precise notion of ``close to uniform'' has varied in previous work. Suppose $\psi$ is the density function for the marginal distribution of $\varphi$ restricted to some specific $k$ coordinates and $\bone$ is the density function for the uniform distribution. Several standard ways are introduced in \cite{AGM03,AAKMRX07} to quantify closeness to uniformity, corresponding to the $L_1, L_2,  L_\infty$ norms:
\begin{itemize}
	\item ($L_1$ norm): $\|\psi - \bone\|_1 = 2d_{\text{TV}}(\psi, \bone) \leq \eps$, where $d_{\text{TV}}$ denotes total variation distance;
	\item ($L_2$ norm): $\|\psi - \bone \|_2 = \sqrt{\chi^2(\psi, \bone)} = \sqrt{\sum_{S \neq \emptyset} \widehat{\psi}(S)^2}  \leq \eps$, where $\chi^2(\psi, \bone)$ denotes the $\chi^2$-divergence of $\psi$ from the uniform distribution;
	\item ($L_\infty$ norm): $\|\psi - \bone\|_\infty \leq \eps$, or in other words, for any $x \in \{-1, 1\}^n$,
	\[
	\left|\Pr_{\bx \sim \psi}[\bx = x] - 2^{-k}\right| \leq 2^{-k} \eps.
	\]
\end{itemize}

Note the following: First, closeness in $L_1$ norm is the most natural for algorithmic derandomization purposes: it tells us that the algorithm cannot tell $\psi$ is different from the uniform distribution up to $\eps$ error. Second, these definitions of closeness are in increasing order of strength. On the other hand, we also have that $\| \psi-\bone\|_1 \leq \|\psi - \bone \|_\infty \leq 2^k \| \psi-\bone\|_1$; thus all of these notions are within a factor of $2^k$. We generally consider $k$ to be constant (or at worst, $O(\log n)$), so that these notions are roughly the same. 

A fourth reasonable notion, proposed by Naor and Naor in \cite{NN93}, is that the distribution has a small bias over every non-empty subset of at most $k$ coordinates. We say density function $\varphi$ is \emph{$(\eps, k)$-wise uniform} if for every non-empty set $S \subseteq [n]$ with size at most $k$,
\[
|\widehat{\varphi}(S)| = \left|\Pr_{\bx \sim \varphi}\left[\prod_{i \in S}\bx_i = 1\right] - \Pr_{\bx \sim \varphi}\left[\prod_{i \in S}\bx_i = -1\right]\right| \leq \epsilon.
\]

Here we also have $\eps = 0$ if and only if $\varphi$ is exactly $k$-wise uniform. Clearly if the marginal density of $\varphi$ over every $k$ coordinates is $\epsilon$-close to the uniform distribution in total variation distance, then $\varphi$ is~$(\eps, k)$-wise uniform. On the other hand, if $\varphi$ is~$(\epsilon,k)$-wise uniform, then the marginal density of $\varphi$ over every~$k$ coordinates is $2^{k/2}\eps$-close to uniform distribution in total variation distance. Again, if $k$ is considered constant, this bias notion is also roughly the same as previous notions. In the rest of paper we prefer this~$(\epsilon, k)$-wise uniform notion for ``almost $k$-wise uniform'' because of its convenience for Fourier analysis.

The original paper about almost $k$-wise uniformity, \cite{NN93}, was concerned with derandomization; e.g., they use $(\epsilon, k)$-wise uniformity for derandomizing the ``set balancing (discrepancy)'' problem. Alon et al.\ gave a further discussion of the relationship between almost $k$-wise uniformity and derandomization in~\cite{AGM03}. The key idea is the following: In many cases of randomized algorithms, the analysis only relies on the property that the random bits are $k$-wise uniform, as opposed to fully uniform. Since there exists an efficiently samplable $k$-wise uniform distribution on a set of size at most $O(n^{\lfloor k/2 \rfloor})$, one can reduce the number of random unbiased bits used in the algorithm down to $O(k \log n)$. To further reduce the number of random bits used, a natural line of thinking is to consider distributions which are ``almost $k$-wise uniformity''. Alon et al.\ \cite{AGHP92} showed that we can deterministically construct $(\epsilon, k)$-wise uniform sets that are of size~$\text{poly}(2^k,\log n,1/ \eps)$, much smaller than exact $k$-wise uniform ones (roughly $\Omega(n^{\lfloor k/2 \rfloor})$ size). Therefore we can use substantially fewer random bits by taking random strings from an almost $k$-wise uniform distribution. 

However we need to ensure that the original analysis of the randomized algorithm still holds under the almost $k$-wise uniform distribution. This is to say that if the randomized algorithm behaves well on a $k$-wise uniform distribution, it may or may not also work as well with an $(\epsilon, k)$-wise uniform distribution, when the parameter~$\epsilon$ is small enough. 

\subsection{The Closeness Problem}

For the analysis of derandomization, it would be very convenient if $(\epsilon, k)$-wise uniformity -- which means that ``every $k$-local view looks close to uniform'' -- implies global $\delta$-closeness to $k$-wise uniformity. A natural question that arises, posed in \cite{AGM03}, is the following:

\medskip

\emph{How small can $\delta$ be such that the following is true? For every $(\eps, k)$-wise uniform distribution $\varphi$ on $\{-1, 1\}^n$, $\varphi$ is $\delta$-close to some $k$-wise uniform distribution.}

\medskip

In this paper, we will refer to this question as \emph{the Closeness Problem}.

\subsubsection{Previous work and applications}

On one hand, the main message of \cite{AGM03} is a lower bound: For every even constant $k > 4$, they gave an $(\eps,k)$-wise uniform distribution with $\eps = O (1/n^{k/4 - 1})$, yet which is $\frac12$-far from every $k$-wise uniform distribution in total variation distance. 

On the other hand, \cite{AGM03} proved a very simple theorem that $\delta \leq O(n^k \eps)$ always holds. Despite its simplicity, this upper bound has been used many times in well known results. 

One application is in circuit complexity. \cite{AGM03}'s upper bound is used for fooling disjunctive normal formulas (DNF) \cite{Bazzi09} and $\AC^0$ \cite{Braverman10}. In these works, once the authors showed that $k$-wise uniformity suffices to fool DNF/$\AC^0$, they deduced that $(O(1/n^k),k)$-uniform distributions suffice, and hence $O(1/n^k)$-biased sets sufficed trivially. \cite{AGM03}'s upper bound is also used as a tool for the construction of two-source extractors for a similar reason in \cite{CZ16,Li16}.  

Another application is for hardness of constraint satisfactory problems ($\CSP$s).  Austrin and Mossel \cite{AM09} show that one can obtain integrality gaps and UGC-hardness for CSPs based on $k$-wise uniform distributions of small support size.  If a predicate is $k$-wise uniform, Kothari et al.\ \cite{KMOW17} showed that one can get SOS-hardness of refuting random instances of it when there are around $n^{(k+1)/2}$ constraints.  Indeed, \cite{KMOW17} shows that if we have a predicate that is $\delta$-close to $k$-wise uniform, then with roughly $n^{(k+1)/2}$ random constraints, SOS cannot refute that a $(1-O(\delta))$-fraction of constraints are satisfiable. This also motivates studying $\delta$-closeness to $k$-wise uniformity and how it relates to Fourier coefficients. $\delta$-closeness to $k$-wise uniformity is also relevant for hardness of random $\CSP$, as shown in \cite{AOW15}.

Alon et al.\ \cite{AAKMRX07} investigated the Closeness Problem further by improving the upper bound to $\delta = O((n\log n)^{k/2} \eps)$.  Indeed, they showed a strictly stronger fact that a distribution is $O\!\left(\sqrt{\W{1\dots k}[\varphi]} \log^{k/2} n\right)$-close to some $k$-wise uniform, where $\W{1 \dots k}[\varphi] = \sum_{1 \leq |S| \leq k} \wh{\varphi}(S)^2$. Rubinfeld and Xie \cite{RN13} generalized some of these results to non-uniform $k$-wise independent distributions over larger product spaces.

Let us briefly summarize the method \cite{AAKMRX07} used to prove their upper bounds. Given an $(\epsilon, k)$-wise uniform $\varphi$, they first try to generate a $k$-wise uniform ``pseudo-distribution'' $\varphi'$ by forcing all Fourier coefficients at degree at most $k$ to be zero. It is a ``pseudo-distribution'' because some points might have negative density. After this, they use a fully uniform distribution and $k$-wise uniform distributions with small support size to try to mend all points to be nonnegative. They bound the weight of these mending distributions to upper-bound the distance incurred by the mending process. This mending process uses the fully uniform distribution to mend the small negative weights and uses $k$-wise uniform distributions with small support size to correct the large negative weights point by point. By optimizing the threshold between small and large weights it introduces a factor of $(\log n)^{k/2}$.

Though they did not mention it explicitly, they also give a lower bound for the Closeness Problem of~$\delta  \geq \Omega\left(\frac{n^{(k-1)/2}}{\log n} \eps\right)$ for $k > 2$ by considering the uniform distribution on a set of $O(n^k)$ random chosen strings. No previous work gave any lower bound for the most natural case of $k = 2$. 

\subsubsection{Our result}

In this paper, we show sharper upper and lower bounds for the Closeness Problem, which are tight for $k$ even and $k = 1$.  Comparing to the result in \cite{AAKMRX07}, we get rid of the factor of $(\log n)^{k/2}$.

\begin{theorem}
	\label{thm:section1_closeness_upperbound}
	Any density $\varphi$ over $\{-1, 1\}^n$ is $\delta$-close to some $k$-wise uniform distribution, where
	\[
	\delta \leq e^k \sqrt{\W{1 \dots k}[\varphi]} = e^k \sqrt{\sum_{1 \leq |S| \leq k} \widehat{\varphi}(S)^2}.
	\]
	Consequently, if $\varphi$ is $(\epsilon, k)$-wise uniform, i.e., $|\widehat{\varphi}(S)| \leq \epsilon$ for every non-empty set $S$ with size at most $k$, then
	\[
	\delta \leq e^k n^{k/2} \eps.
	\]
	
	For the special case $k = 1$, the corresponding $\delta$ can be further improved to $\delta \leq \epsilon$.
\end{theorem}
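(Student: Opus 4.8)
The plan is to recast the question via linear-programming duality and then, for each dual test function, hand-build a witnessing $k$-wise uniform distribution. Write $\langle a,b\rangle = \E_{\bx\sim\{-1,1\}^n}[a(\bx)b(\bx)]$ and let $\mathcal{K}$ denote the (convex, compact) set of $k$-wise uniform densities. Since $\|\varphi-\psi\|_1 = \max_{\|f\|_\infty\le 1}\langle\varphi-\psi,f\rangle$ and $\|\cdot\|_1 = 2\,d_{\text{TV}}$, the minimax theorem (equivalently LP duality, as $\mathcal{K}$ is a polytope) yields
\[
2\,d_{\text{TV}}(\varphi,\mathcal{K}) \;=\; \max_{\|f\|_\infty\le 1}\Big(\langle\varphi,f\rangle \;-\; \max_{\psi\in\mathcal{K}}\langle\psi,f\rangle\Big).
\]
Thus it is enough to show: \emph{for every $f\colon\{-1,1\}^n\to[-1,1]$ there is a $k$-wise uniform density $\psi$ with $\langle\psi,f\rangle \ge \langle\varphi,f\rangle - O(e^k)\sqrt{\W{1\dots k}[\varphi]}$.} The value of this reformulation is that the witness $\psi$ may depend on $f$. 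A convenient first reduction (independent of whether one uses duality) is to replace $\varphi$ by the pseudo-density $\varphi' := \varphi - \sum_{1\le|S|\le k}\widehat\varphi(S)\chi_S$, which is $k$-wise uniform ``on the nose'' and, by Cauchy--Schwarz, satisfies $\|\varphi-\varphi'\|_1\le \|\varphi-\varphi'\|_2 = \sqrt{\W{1\dots k}[\varphi]} =: \eta$; in particular its total negative mass $\E[\max(-\varphi',0)]$ is at most $\eta/2$. So the real task is to repair $\varphi'$ to a genuine $k$-wise uniform density at total-variation cost $O(e^k)\eta$ (or, in the dual picture, to exhibit per-$f$ a $k$-wise uniform $\psi$ that ``sees'' $f$ essentially as well as $\varphi'$ does).

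The heart of the matter is this repair step, and this is where I expect the factor $e^k$ to enter through hypercontractivity. The obstruction to simply taking the rescaled positive part $\varphi'_+$ is that $\varphi'_+$ reintroduces Fourier weight on levels $1,\dots,k$; one must instead add to $\varphi'$ a correction supported on levels $>k$ and $0$, making the result nonnegative and small in $L_1$. The idea is to build such a correction by mixing in a $k$-wise uniform density (a smoothed/convolved version of uniform, or a small-support code-based $k$-wise uniform distribution) weighted toward the places where $\varphi'$ is negative, and to bound the $L_1$ cost by arguing that the low-degree Fourier mass ``pushed back in'' by this operation is small: a degree-$\le k$ function of given $L_2$ norm cannot concentrate on a tiny set (the level-$k$/hypercontractive inequality $\|p\|_q\le (q-1)^{k/2}\|p\|_2$, equivalently the tail bound for degree-$\le k$ polynomials), so the negative mass of $\varphi'$ — being orthogonal to all of levels $1,\dots,k$ — cannot ``hide'' from the mixing. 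Summing the resulting per-level estimates, or optimizing the hypercontractive exponent against the available room, is what should produce a geometric-type constant of the form $e^k$ rather than the $(\log n)^{k/2}$ lost by the point-by-point mending of \cite{AAKMRX07}.

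The step I expect to be the main obstacle is precisely making this repair argument \emph{dimension-free}: one must mend $\varphi'$ to positivity while spending only a constant depending on $k$ alone times $\eta$, whereas the naive thresholding — split the cube into where $\varphi'$ is mildly versus badly negative and patch the bad part pointwise — costs an extra $\mathrm{polylog}(n)$ or worse. Getting the clean bound requires handling all of the negative mass ``at once'' with a single hypercontractive estimate, and also checking that the correction does not itself create new low-degree Fourier weight that would demand a second round of repair; controlling (or avoiding) this recursion is the delicate part.

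Finally, the improved bound $\delta\le\eps$ for $k=1$ should come from a sharper, $k$-specific argument rather than the general machinery. In the dual picture, $\max_{\psi\in\mathcal{K}}\langle\psi,f\rangle = \min_{\ell}\max_x\big(f(x)-\ell(x)\big)$, the min over linear forms $\ell$ (no constant term), and one shows directly that $\langle\varphi,f\rangle - \min_\ell\max_x(f-\ell)(x)\le \max_i|\widehat\varphi(\{i\})|=\eps$: roughly, the extremal $f$ is — modulo the issue of $\varphi$'s support — constant on $\mathrm{supp}(\varphi)$ after subtracting the best linear form, which collapses the bound to maximizing $\sum_i\widehat\varphi(\{i\})\,\widehat\ell(\{i\})$ subject to $\|\ell\|_\infty\le 1$, and that maximum is exactly $\max_i|\widehat\varphi(\{i\})|$. (Equivalently, one can use a carefully amortized coupling that re-randomizes coordinates.)
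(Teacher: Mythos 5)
Your setup is sound as far as it goes, but the proof has a genuine gap exactly where you flag it: the ``repair step'' is the entire theorem, and your sketch of it does not contain the idea that makes the bound dimension-free. Subtracting the low-degree Fourier part to get the pseudo-density $\varphi'$ and then ``mixing in a $k$-wise uniform density weighted toward the places where $\varphi'$ is negative'' is precisely the mending strategy of \cite{AAKMRX07}, and, as you yourself note, executed by thresholding or pointwise patching it costs an extra $\mathrm{polylog}(n)$; you do not supply the mechanism that avoids this. The paper's resolution is different and more direct: it never forms $\varphi'$ at all. Instead it asks for a genuine density $\psi$ whose low-order Fourier coefficients are prescribed to be $\widehat{\psi}(S)=-\widehat{\varphi}(S)/w$ for all $1\le|S|\le k$; if such a $\psi$ exists, then $\frac{1}{1+w}(\varphi+w\psi)$ is $k$-wise uniform and within total variation distance $w$ of $\varphi$. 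The existence of $\psi$ is an LP feasibility question, and by Farkas' lemma it is feasible unless there exists a \emph{degree-at-most-$k$ density} $\psi'$ with $\frac1w\sum_{1\le|S|\le k}\widehat{\varphi}(S)\widehat{\psi'}(S)>1$. Cauchy--Schwarz bounds the left-hand side by $\frac1w\sqrt{\sum_{1\le|S|\le k}\widehat{\varphi}(S)^2}\cdot\|\widehat{\psi'}\|_2$, and hypercontractivity gives $\|\widehat{\psi'}\|_2=\|\psi'\|_2\le e^k\|\psi'\|_1=e^k$ for any degree-$k$ density, so $w=e^k\sqrt{\sum_{1\le|S|\le k}\widehat{\varphi}(S)^2}$ makes the dual infeasible. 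The single hypercontractive estimate you were hoping for is thus applied to the \emph{dual certificate} (a nonnegative low-degree function), not to the negative mass of $\varphi'$; this is the missing step, and without it your argument does not close.

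For $k=1$ your dual sketch also stops short: the assertion that the extremal $f$ becomes constant on the support of $\varphi$ after subtracting the best linear form is not justified. The paper instead gives an explicit primal construction: after flipping signs and sorting so that $0\le\widehat{\varphi}(\{1\})\le\cdots\le\widehat{\varphi}(\{n\})=\eps$, it mixes in densities $\psi_j$ that are uniform on the first $j-1$ coordinates and constantly $-1$ on the rest, with weights $w_j=\widehat{\varphi}(\{j\})-\widehat{\varphi}(\{j-1\})$ telescoping to total weight $\eps$. Either route could in principle work, but as written your $k=1$ argument is also incomplete.
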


Our new technique is trying to mend the original distribution to be $k$-wise uniform all at once. We want to show that some mixture distribution $(\varphi + w \psi)$ is $k$-wise uniform with small mixture weight $w$. The distance between the final mixture distribution and the original distribution $\varphi$ is bounded by $O(w)$. Therefore we only need to show that the mending distribution $\psi$ exists for some small weight $w$. Showing the existence of such a distribution~$\psi$ can be written as the feasibility of a linear program (LP). We upper bound $w$ by bounding the dual LP, using the hypercontractivity inequality.

Our result is sharp for all even $k$, and is also sharp for $k = 1$. We state the matching lower bound for even $k$:
\begin{theorem}
	\label{thm:section1_closeness_lowerbound}
	For any $n$ and even $k$, and small enough $\epsilon$, there exists some $(\epsilon, k)$-wise uniform distribution $\varphi$ over $\{-1, 1\}^n$, such that $\varphi$ is $\delta$-far from every $k$-wise uniform distribution in total variation distance, where 
	\[
	\delta \geq \Omega\left(\frac1k\right)^k n^{k/2} \eps.
	\]
\end{theorem}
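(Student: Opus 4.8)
The plan is to construct an explicit $(\eps,k)$-wise uniform distribution whose density is a rescaled symmetric function that is ``barely nonnegative'' yet has large $\ell_1$-distance to the $k$-wise uniform polytope. The natural candidate is $\varphi = 1 + c\, p(x_1+\dots+x_n)$ where $p$ is a polynomial of degree exactly $k$; choosing $p$ to be (a shift of) the degree-$k$ Krawtchouk/Hermite-type polynomial in the variable $t = \sum_i x_i$ makes $\widehat{\varphi}(S)$ depend only on $|S|$ and vanish for $1\le|S|<k$, so the only nonzero low-degree coefficients are at level exactly $k$, each of magnitude $\Theta(c)$ after normalization. To make $\varphi$ a density we need $c$ as large as possible subject to $\varphi\ge 0$; since the relevant polynomial oscillates, the binding constraint is at the extreme value $t = \pm n$ (or at the largest root region), which forces $c = \Theta(1/n^{k/2})$ up to $k$-dependent factors, and this is exactly what converts the level-$k$ Fourier weight into $\eps$: we get $|\widehat\varphi(S)| \asymp \eps$ with $c \asymp \eps$ and the level-$k$ coefficients of size $\asymp \eps$, while $n^{k/2}\eps \asymp$ the thing we want to lower bound the distance by.

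Next I would lower-bound the total variation distance from $\varphi$ to \emph{every} $k$-wise uniform $\psi$. The key point is that $d_{\mathrm{TV}}(\varphi,\psi) = \frac12\|\varphi-\psi\|_1 \ge \frac12 \langle \varphi-\psi, g\rangle$ for any test function $g$ with $\|g\|_\infty\le 1$. I would take $g$ to be a bounded function that is \emph{correlated with the level-$k$ part of $\varphi$ but orthogonal to all $k$-wise uniform densities' distinguishing directions} — concretely, a $\pm1$-valued (or bounded) function $g$ whose Fourier support lies entirely at levels $\ge 1$ and which has $\langle \varphi, g\rangle = \sum_{|S|} \widehat\varphi(S)\widehat g(S)$ dominated by the level-$k$ terms. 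Because $\psi$ is $k$-wise uniform, $\widehat\psi(S)=0$ for $1\le|S|\le k$, so if $g$'s Fourier mass is concentrated on levels in $\{1,\dots,k\}$ then $\langle\psi,g\rangle$ is tiny (only the $\widehat\psi(\emptyset)=1$ term and high-degree leakage survive, and I'd arrange $\widehat g(\emptyset)=0$), while $\langle\varphi,g\rangle \asymp n^{k/2}\eps$. The cleanest choice is to let $g$ itself be built from the same symmetric polynomial truncated appropriately, or simply $g = \mathrm{sign}(p(x_1+\dots+x_n))$ restricted/projected, and then estimate $\langle \varphi, g\rangle$ by a direct computation with the distribution of $\bt = \sum_i \bx_i \approx \mathcal N(0,n)$.

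The main obstacle I expect is the second step: producing a single bounded test function $g$ that simultaneously (i) correlates at level $\asymp n^{k/2}\eps$ with $\varphi$ and (ii) has negligible inner product with \emph{all} $k$-wise uniform distributions, not just one fixed $\psi$. Requirement (ii) is what makes this a genuine lower bound against the whole polytope rather than a distance to a single point; it forces $g$ to have essentially no Fourier mass outside levels $1,\dots,k$ (any high-degree mass could be exploited by an adversarial $\psi$ supported on few points), yet a bounded function with all its mass at level exactly $k$ cannot have $L_1$-correlation as large as we need. The resolution is to allow $g$ to have mass at levels $1$ through $k$ and to observe that a $k$-wise uniform $\psi$ kills \emph{all} of those, so only the (controllably small) tail above level $k$ matters; bounding that tail — e.g. via hypercontractivity or an explicit smooth choice of $g$ like a low-degree polynomial threshold — is the technical heart. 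I would handle it by taking $g$ to be an explicit degree-$k$ symmetric polynomial in $\bt$, clipped to $[-1,1]$, and control the clipping error and high-degree leakage using concentration of $\bt$ together with the hypercontractive bound that the clipped function's high-level weight is small; then a routine Gaussian-moment computation gives the claimed $\Omega(1/k)^k\, n^{k/2}\eps$.
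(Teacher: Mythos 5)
Your construction and the overall shape of the argument (a symmetric level-$k$ density $1+\mu\binom{n}{k}^{-1/2}\sum_{|S|=k}x^S$, tested against a clipped version of its own degree-$k$ part) match the paper's, and you correctly identify the crux: controlling $\langle\psi,g\rangle$ uniformly over \emph{all} $k$-wise uniform $\psi$. But the mechanism you propose at that crux does not work. Bounding the high-level Fourier weight of the clipped test function $g$ (by hypercontractivity or otherwise) buys you nothing against an adversarial $\psi$: a $k$-wise uniform density supported on $O(n^{k})$ points has $\sum_{|S|>k}\widehat{\psi}(S)^2 = \|\psi\|_2^2-O(1)$, which is exponentially large, so Cauchy--Schwarz on the tail $\sum_{|S|>k}\widehat{\psi}(S)\widehat{g}(S)$ gives no useful bound, and an $L_\infty$ bound on the high-level part of a hard-clipped polynomial is not available either. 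This is a genuine gap, not a routine technicality.

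The fix --- which is exactly what the paper's LP-duality formulation produces --- is a \emph{one-sided pointwise majorization} rather than a Fourier-tail estimate. Take $p(x)=\mu\binom{n}{k}^{-1/2}\sum_{|S|=k}x^S$ (so $\deg p\le k$, $\widehat{p}(\emptyset)=0$, and $p\ge -1$ once $\mu$ is small enough) and $q=\min(p,1)$, so $\|q\|_\infty\le 1$ and $q\le p$ pointwise. For any $k$-wise uniform density $\psi$, nonnegativity alone gives $\langle\psi,q\rangle\le\langle\psi,p\rangle=\widehat{p}(\emptyset)=0$, with no need to say anything about $\psi$'s or $q$'s high-degree coefficients; hence $\|\varphi-\psi\|_1\ge\langle\varphi-\psi,q\rangle\ge\langle\varphi,q\rangle$. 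All the remaining work is then on the $\varphi$ side: showing the clipping loses little, i.e.\ $\langle\varphi,q\rangle\ge\langle\varphi,p\rangle-\sqrt{\Pr_{\mathbf{x}\sim\varphi}[p(\mathbf{x})>1]\cdot\langle\varphi,(p-1)^2\rangle}=\mu^2-o(\mu^2)$, which is where the Krawtchouk/Hermite tail estimates you anticipate actually enter. If you replace your ``bound the clipped function's high-level weight'' step with this majorization argument, the rest of your plan goes through and recovers the stated $\Omega(1/k)^k n^{k/2}\epsilon$ bound.
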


Our method for proving this lower bound is again LP duality. Our examples in the lower bound are symmetric distributions with Fourier weight only on level $k$. The density functions then can be written as binary Krawtchouk polynomials which behave similar to Hermite polynomials when $n$ is large. Our dual LP bounds use various properties of Krawtchouk and Hermite polynomials. 

Interestingly both our upper and lower bound utilize LP-duality, which we believe is the most natural way of looking at this problem.

We remark that we can derive a lower bound for odd $k$ from Theorem~\ref{thm:section1_closeness_lowerbound} trivially by replacing $k$ by~$k-1$. There exists a gap of $\sqrt{n}$ between the resulting upper and lower bounds for odd $k$. We believe that the lower bound is tight, and the upper bound may be improvable by a factor of $\sqrt{n}$, as it is in the special case $k = 1$. We leave it as a conjecture for further work:
\begin{conjecture}
	Suppose the distribution $\varphi$ over $\{-1, 1\}^n$ is $(\epsilon, k)$-wise uniform. Then $\varphi$ is $\delta$-close to some $k$-wise uniform distribution in total variation distance, where 
	\[
	\delta \leq O (n^{\lfloor k/2 \rfloor} \eps).
	\]
\end{conjecture}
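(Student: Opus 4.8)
\emph{Proof proposal.} The plan is to reach "$\varphi$ is $\delta$-close to some $k$-wise uniform distribution" by a single mixing step whose weight is controlled by linear-programming duality and hypercontractivity. Concretely, I look for a density $\psi$ and a weight $w \ge 0$ so that the mixture $\Phi := \tfrac{1}{1+w}(\varphi + w\psi)$ is $k$-wise uniform. Since $\widehat{\Phi}(S) = \tfrac{1}{1+w}\bigl(\widehat{\varphi}(S) + w\,\widehat{\psi}(S)\bigr)$, this demands exactly that $\widehat{\psi}(S) = -\widehat{\varphi}(S)/w$ for every $S$ with $1 \le |S| \le k$. Granting such a $\psi$, one has $\Phi - \varphi = \tfrac{w}{1+w}(\psi - \varphi)$, hence $d_{\mathrm{TV}}(\varphi,\Phi) \le \tfrac{w}{1+w} \le w$; so it suffices to exhibit the required $\psi$ with $w$ no larger than the claimed $\delta$.

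For a fixed $w > 0$, the existence of a density $\psi$ with $\widehat{\psi}(\emptyset) = 1$ and $\widehat{\psi}(S) = -\widehat{\varphi}(S)/w$ for $1 \le |S| \le k$ is precisely the feasibility of a linear program in the variables $\psi(x) \ge 0$. I would dualize this via Farkas' lemma: infeasibility is witnessed by a degree-$\le k$ polynomial $P \ge 0$, and using the identity $\sum_{1 \le |S| \le k}\widehat{P}(S)\widehat{\varphi}(S) = \E_{\bx \sim \varphi}[P(\bx)] - \widehat{P}(\emptyset)$ together with the normalization $\widehat{P}(\emptyset) = \E P = 1$, the dual condition simplifies to the clean statement: the LP is feasible if and only if $w \ge \E_{\bx \sim \varphi}[P(\bx)] - 1$ for every degree-$\le k$ density $P$. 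Hence the optimal mixing weight is
\[
w^{\ast} \;=\; \sup\Bigl\{\, \E_{\bx \sim \varphi}[P(\bx)] - 1 \;:\; P \ge 0,\ \E P = 1,\ \deg P \le k \,\Bigr\},
\]
which is $\ge 0$ (take $P \equiv 1$); if $w^{\ast} = 0$ then $\varphi$ is already $k$-wise uniform and there is nothing to prove, so assume $w^{\ast} > 0$. The supremum is attained because the feasible set, being closed and (by the bound below) bounded in the finite-dimensional space of degree-$\le k$ functions, is compact.

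It remains to bound $w^{\ast}$. Writing $\varphi^{=[1,k]} = \sum_{1 \le |S| \le k}\widehat{\varphi}(S)\chi_S$, for any degree-$\le k$ density $P$ we have $\E_{\bx \sim \varphi}[P(\bx)] - 1 = \langle \varphi^{=[1,k]}, P^{=[1,k]}\rangle \le \|\varphi^{=[1,k]}\|_2\,\|P^{=[1,k]}\|_2$ by Cauchy--Schwarz, and $\|P^{=[1,k]}\|_2^2 = \|P\|_2^2 - 1$ since $P$ carries no Fourier mass above level $k$. The crucial estimate is the hypercontractive bound $\|P\|_2 \le e^k\|P\|_1 = e^k$: combining the standard inequality $\|f\|_q \le (q-1)^{k/2}\|f\|_2$ (valid for all degree-$\le k$ functions and all $q \ge 2$) with the log-convexity interpolation $\|f\|_2 \le \|f\|_1^{1-\theta}\|f\|_q^{\theta}$, $\theta = \tfrac{q}{2(q-1)}$, yields $\|f\|_2 \le \|f\|_1\,(q-1)^{\frac{kq}{2(q-2)}}$, and $\inf_{q>2}(q-1)^{\frac{q}{2(q-2)}} = e$ (the limit as $q \to 2^{+}$). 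Therefore $\|P^{=[1,k]}\|_2 \le \sqrt{e^{2k}-1} < e^k$, so $w^{\ast} \le e^k\|\varphi^{=[1,k]}\|_2 = e^k\sqrt{\W{1 \dots k}[\varphi]}$, giving the first claim; the $(\epsilon,k)$-wise uniform corollary follows from $\W{1 \dots k}[\varphi] \le \bigl(\sum_{j=1}^{k}\binom{n}{j}\bigr)\epsilon^2 \le n^k\epsilon^2$. For $k = 1$ the same duality applies, but a degree-$\le 1$ density $P = 1 + \sum_i c_i x_i$ is nonnegative exactly when $\sum_i|c_i| \le 1$, so $\E_{\bx \sim \varphi}[P(\bx)] - 1 = \sum_i c_i\,\widehat{\varphi}(\{i\}) \le \|c\|_1\max_i|\widehat{\varphi}(\{i\})| \le \epsilon$, hence $w^{\ast} \le \epsilon$. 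The step I expect to be most delicate is pinning down the constant $e^k$: this relies on the optimization over $q$ in the hypercontractive estimate having infimum exactly $e$, and on using the ``$-1$'' in $\|P^{=[1,k]}\|_2^2 = \|P\|_2^2 - 1$ to turn that limiting bound into a genuine inequality. Getting the LP duality set up correctly (the signs, the normalization $\widehat{P}(\emptyset)=1$, and boundedness/attainment of the supremum) is routine but should be carried out carefully.
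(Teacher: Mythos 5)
There is a genuine gap, and it is exactly the gap the paper itself leaves open. The statement you are asked to prove is the paper's \emph{Conjecture} with exponent $\lfloor k/2\rfloor$, and the only case with content beyond what the paper already establishes is odd $k \geq 3$, where $\lfloor k/2\rfloor = (k-1)/2$. Your argument --- mix $\varphi$ with a mending density $\psi$, dualize the feasibility LP via Farkas to the condition $w \geq \E_{\bx\sim\varphi}[P(\bx)]-1$ over degree-$\le k$ densities $P$, and bound the dual by Cauchy--Schwarz plus the hypercontractive estimate $\|P\|_2 \le e^k\|P\|_1$ --- is correct as far as it goes, and it is essentially the paper's own proof of its Theorem~\ref{thm:section1_closeness_upperbound}. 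But it terminates at $\delta \le e^k\sqrt{\W{1\dots k}[\varphi]} \le e^k n^{k/2}\epsilon$. For odd $k\ge 3$ (say $k=3$) this gives $O(n^{3/2}\epsilon)$, whereas the conjecture claims $O(n\epsilon)$; you are short by a factor of $\sqrt{n}$, and nothing in your write-up addresses how to remove it. The Cauchy--Schwarz step $\sum_S \widehat\varphi(S)\widehat P(S) \le \|\varphi^{=[1,k]}\|_2\|P^{=[1,k]}\|_2$ combined with $\W{1\dots k}[\varphi]\le n^k\epsilon^2$ is inherently lossy here: to beat it one would need a genuinely better bound on the dual objective against degree-$\le k$ densities, e.g.\ exploiting the $\ell_\infty$ rather than $\ell_2$ control on $\widehat\varphi$, in the spirit of your $k=1$ argument where nonnegativity forces $\|\widehat{P}^{=1}\|_1\le 1$. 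No analogue of that $\ell_1$ control on the dual's Fourier spectrum is known for $k\ge 2$, which is why the authors state the odd-$k$ improvement as an open conjecture rather than a theorem.

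Two smaller remarks. Your $k=1$ case is a nice self-contained dual-side argument (the paper instead constructs the mending densities $\psi_j$ explicitly), and both routes give $\delta\le\epsilon$; this part is fine. Also, your claim that the supremum defining $w^\ast$ is attained is harmless but unnecessary --- Farkas only requires you to rule out a strict dual inequality for the chosen $w$, which your bound $\E_\varphi[P]-1 \le e^k\sqrt{\W{1\dots k}[\varphi]}$ already does. The bottom line is that you have (re)proved the theorem, not the conjecture.
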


\subsection{The Testing Problem}

Another application of the Closeness Problem is to property testing of $k$-wise uniformity. Suppose we have sample access from an unknown and arbitrary distribution; we may wonder whether the distribution has a certain property. This question has received tremendous attention in the field of statistics. The main goal in the study of property testing is to design algorithms that use as few samples as possible, and to establish lower bound matching these sample-efficient algorithms. In particular, we consider the property of being~$k$-wise uniform:

\medskip

\emph{Given sample access to an unknown and arbitrary distribution $\varphi$ on $\{-1, 1\}^n$, how many samples do we need to distinguish between the case that $\varphi$ is $k$-wise uniform versus the case that $\varphi$ is $\delta$-far from every $k$-wise uniform distribution?}

\medskip

In this paper, we will refer to this question as the \emph{Testing Problem}. 

We say a testing algorithm is \emph{a $\delta$-tester for $k$-wise uniformity} if the algorithm outputs ``Yes'' with high probability when the distribution $\varphi$ is $k$-wise uniform, and the algorithm outputs ``No'' with high probability when the distribution $\varphi$ is $\delta$-far from any $k$-wise uniform distribution (in total variation distance). 

Property testing is well studied for Boolean functions and distributions. Previous work studied the testing of related properties of distribution, including uniformity \cite{GR11,BFRSW00,RS09} and independence \cite{BFFKRW01,BKR04, ADK15, DK16}. 

The papers \cite{AGM03, AAKMRX07, Xie12} discussed the problem of testing $k$-wise uniformity. \cite{AGM03} constructed a $\delta$-tester for $k$-wise uniformity with sample complexity $O(n^{2k} / \delta^2)$, and \cite{AAKMRX07} improved it to $O(n^k \log^{k+1} n / \delta^2)$. As for lower bounds, \cite{AAKMRX07} showed that $\Omega(n^{(k-1)/2} / \delta)$ samples are necessary, albeit only for $k > 2$.  This lower bound is in particular for distinguishing the uniform distribution from $\delta$-far-from-$k$-wise distributions.

We show a better upper bound for sample complexity:
\begin{theorem}
	\label{thm:section1_testing_upperbound}
	There exists a $\delta$-tester for $k$-wise uniformity of distributions on $\{-1, 1\}^n$ with sample complexity $O\left(\frac1k\right)^{k/2}\frac{n^k}{\delta^2}$. For the special case of $k=1$, the sample complexity is $O\left(\frac{\log n}{\delta^2}\right)$.
\end{theorem}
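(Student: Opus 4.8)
The plan is to deduce the Testing Problem from the Closeness Problem: use Theorem~\ref{thm:section1_closeness_upperbound} to reduce ``$k$-wise uniform vs.\ $\delta$-far from $k$-wise uniform'' to distinguishing $\W{1\dots k}[\varphi]=0$ from $\W{1\dots k}[\varphi]$ being nontrivially large, and then estimate $\W{1\dots k}[\varphi]=\sum_{1\le|S|\le k}\widehat\varphi(S)^2$ from i.i.d.\ samples. Concretely: if $\varphi$ is $k$-wise uniform then $\W{1\dots k}[\varphi]=0$, whereas if $\varphi$ is $\delta$-far from every $k$-wise uniform distribution then the contrapositive of Theorem~\ref{thm:section1_closeness_upperbound} gives $\W{1\dots k}[\varphi]>\tau$ for $\tau:=\delta^2/e^{2k}$; so it suffices to estimate $\W{1\dots k}[\varphi]$ to additive accuracy $\tau/2$ with constant confidence and accept iff the estimate is below $\tau/2$. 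For the estimator, note that for a nonempty $S$ and independent $\bx,\by\sim\varphi$ one has $\E[\chi_S(\bx)\chi_S(\by)]=\widehat\varphi(S)^2$, where $\chi_S(x)=\prod_{i\in S}x_i$. Summing over $1\le|S|\le k$ and over unordered pairs of samples gives the $U$-statistic
\[
Z=\binom{m}{2}^{-1}\sum_{a<b}h\bigl(x^{(a)},x^{(b)}\bigr),\qquad h(x,y):=\sum_{1\le|S|\le k}\chi_S(x)\chi_S(y),
\]
which is unbiased for $\W{1\dots k}[\varphi]$. (In terms of the Hamming distance $d=d_H(x,y)$ one has $h(x,y)=-1+\sum_{j=0}^{k}K_j(d)$ with $K_j$ the Krawtchouk polynomials, so $Z$ is really a ``degree-$\le k$ collision statistic''.)

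Next I would bound the variance of $Z$. Write $N=\sum_{j=1}^{k}\binom nj$. Hoeffding's variance formula for a degree-$2$ $U$-statistic gives $\mathrm{Var}(Z)\le \frac{4\zeta_1}{m}+\frac{4\zeta_2}{m^2}$, where $\zeta_2=\mathrm{Var}\bigl(h(\bx,\by)\bigr)$ and $\zeta_1=\mathrm{Var}_{\bx}\bigl(\E_{\by}[h(\bx,\by)]\bigr)$. Two elementary facts control these. First, $\E_{\by}[h(\bx,\by)]=\sum_{1\le|S|\le k}\widehat\varphi(S)\chi_S(\bx)$, which vanishes identically when $\varphi$ is $k$-wise uniform, so $\zeta_1=0$ in the ``yes'' case; and in general $\zeta_1\le \E_{\bx\sim\varphi}\bigl[(\sum_{1\le|S|\le k}\widehat\varphi(S)\chi_S(\bx))^2\bigr]= \sum_{S,T}\widehat\varphi(S)\widehat\varphi(T)\widehat\varphi(S\triangle T)\le N\cdot\W{1\dots k}[\varphi]$ using $|\widehat\varphi(W)|\le 1$ and Cauchy--Schwarz. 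Second, $\E[h(\bx,\by)^2]=\sum_{S,T}\widehat\varphi(S\triangle T)^2\le N^2$, again just from $|\widehat\varphi(W)|\le 1$, so $\zeta_2\le N^2$ for every $\varphi$. Feeding these into Chebyshev's inequality: in the ``yes'' case $\mathrm{Var}(Z)\le 4N^2/m^2\ll\tau^2$ once $m=\Omega(N/\tau)$; in the ``far'' case $Z$ concentrates around $\theta:=\W{1\dots k}[\varphi]>\tau$ to within $\theta/2$ once $m=\Omega(N/\theta)$, and since $\theta>\tau$ this is again $m=O(N/\tau)$. Hence $m=O(N/\tau)=O(Ne^{2k}/\delta^2)$ samples suffice, and bounding $N\le 2\binom nk\le 2n^k/k!$ and using Stirling ($k!\ge (k/e)^k$) turns $Ne^{2k}$ into $O(1/k)^{k/2}\,n^k$, yielding the claimed bound.

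For $k=1$ the statistic $Z$ above would only give $\Theta(n/\delta^2)$, so instead I would exploit the sharper closeness bound: a distribution $\delta$-far from $1$-wise uniform must have $\max_i|\widehat\varphi(\{i\})|>\delta$, while a $1$-wise uniform distribution has every $\widehat\varphi(\{i\})=\E[\bx_i]=0$. So draw $m=O(\log n/\delta^2)$ samples, form the empirical means $\widehat\mu_i=\frac1m\sum_a x_i^{(a)}$, and accept iff $\max_i|\widehat\mu_i|\le\delta/2$; by Hoeffding's inequality and a union bound over the $n$ coordinates, with high probability every $\widehat\mu_i$ is within $\delta/4$ of $\widehat\varphi(\{i\})$, which separates the two cases.

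The part I expect to be most delicate is the ``yes''-case variance bound for general $k$: an arbitrary $k$-wise uniform $\varphi$ can carry enormous Fourier mass on levels $k+1,\dots,2k$ (for instance a minimum-size $k$-wise uniform set), so one must verify that this mass cannot inflate $\zeta_2=\E[h^2]$ past $O(N^2)$ — the crude bound $|\widehat\varphi(W)|\le 1$ already delivers this, and a sharper constant (and the precise $(1/k)^{k/2}$ dependence) can be obtained by counting, for each large set $W$, the pairs $(S,T)$ with $|S|,|T|\le k$ and $S\triangle T=W$, which number only $2^{O(k)}\binom{n}{\,k-\lceil |W|/2\rceil\,}$. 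The second ingredient to get right is that the $U$-statistic uses all $\binom m2$ pairs, so the $\zeta_2$ term enters as $\sqrt{\zeta_2}/\tau\asymp N/\tau$ rather than $\zeta_2/\tau^2\asymp N^2/\tau^2$; this is what keeps the $\delta$-dependence at $1/\delta^2$.
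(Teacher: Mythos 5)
Your proposal is correct and matches the paper's proof in essence: the same pair-based $U$-statistic estimator for $\W{1\dots k}[\varphi]$, the same reduction via Theorem~\ref{thm:section1_closeness_upperbound}, and the same variance analysis (your Hoeffding decomposition into $\zeta_1,\zeta_2$ is exactly the paper's three-case covariance computation, with your bounds $\zeta_1\le N\mu$ and $\zeta_2\le N^2$ playing the role of the paper's $\sqrt{L_k(\varphi)}\mu$ and $L_k(\varphi)$ bounds), followed by Chebyshev. Your $k=1$ tester is the coordinate-wise union-bound argument the paper alludes to, and your Stirling computation actually supplies the $O(1/k)^{k/2}$ constant that the paper states but omits the proof of.
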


A natural $\delta$-tester of $k$-wise uniformity is mentioned in \cite{AAKMRX07}: Estimate all Fourier coefficients up to level $k$ from the samples; if they are all smaller than $\epsilon$ then output ``Yes''. In fact this algorithm is exactly attempting to check whether the distribution is $(\eps, k)$-wise uniform. Hence the sample complexity depends on the upper bound for the Closeness Problem. Therefore we can reduce the sample complexity of this algorithm down to $O\left(\frac{n^k \log n}{\delta^2}\right)$ via our improved upper bound for the Closeness Problem. One $\log n$ factor remains because we need to union-bound over the $O(n^k)$ Fourier coefficients up to level $k$. To further get rid of the last $\log n$ factor, we present a new algorithm that estimates the Fourier weight up to level $k$, $\sum_{1 \leq |S| \leq k} \widehat{\varphi}^2(S)$, rather than estimating these Fourier coefficients one by one.

Unfortunately, a lower bound for the Closeness Problem does not imply a lower bound for the Testing Problem directly. In \cite{AAKMRX07}, they showed that a uniform distribution over a random subset of $\{-1, 1\}^n$ of size $O(\frac{n^{k-1}}{\delta^2})$, is almost surely $\delta$-far from any $k$-wise uniform distribution. On the other hand, by the Birthday Paradox, it is hard to distinguish between the fully uniform distribution on all strings of length $n$ and a uniform distribution over a random set of such size. This gives a lower bound for the Testing Problem as $\Omega(n^{(k-1)/2}/\delta)$. Their result only holds for $k > 2$; there was no previous non-trivial lower bound for testing pairwise uniformity. We show a lower bound for the pairwise case. 

\begin{theorem}
	\label{thm:section1_testing_lowerbound}
	Any $\delta$-tester for pairwise uniformity of distributions on $\{-1, 1\}^n$ needs at least $\Omega(\frac{n}{\delta^2})$ samples.
\end{theorem}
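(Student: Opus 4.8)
The plan is to prove the bound by the standard $\chi^2$-based indistinguishability argument for distribution testing: exhibit a random ensemble of distributions, each genuinely $\delta$-far from pairwise uniformity, whose $m$-sample behaviour is statistically indistinguishable from that of the (pairwise uniform) uniform distribution $U$ whenever $m = o(n/\delta^2)$.

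For the hard instances I would ``twist'' the extremal example behind \Cref{thm:section1_closeness_lowerbound}. For $k=2$ that example is, up to scaling, $\varphi_{\mathrm{sym}}(x) = 1 + c\bigl((\sum_i x_i)^2 - n\bigr)$, a degree-$2$ binary Krawtchouk polynomial in the Hamming weight, whose only nonzero Fourier weight sits at level $2$. Given a uniformly random sign vector $z \in \{-1,1\}^n$, set $\varphi_z(x) = \varphi_{\mathrm{sym}}(z\circ x) = 1 + c\bigl((\sum_i z_i x_i)^2 - n\bigr)$, let $\mathcal D$ be the law of $\varphi_z$ over uniform $z$, and choose $c = \Theta(\delta/n)$ (for $\delta$ below an absolute constant, so that $cn\le 1$). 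Two structural facts are immediate. First, $(\sum_i z_i x_i)^2 \in [0,n^2]$, so $\varphi_z \ge 1 - cn \ge 0$ and $\E_{x\sim U}[\varphi_z(x)] = 1$: each $\varphi_z$ is a bona fide density. Second, the family of pairwise-uniform distributions is invariant under the coordinate twist $x\mapsto z\circ x$ (twisting multiplies each Fourier coefficient $\widehat\psi(S)$ by $\prod_{i\in S}z_i$, hence preserves vanishing at levels $1$ and $2$), so $d_{\mathrm{TV}}(\varphi_z,\text{pairwise uniform}) = d_{\mathrm{TV}}(\varphi_{\mathrm{sym}},\text{pairwise uniform}) = \Theta(cn) = \Theta(\delta)$ for every $z$, by \Cref{thm:section1_closeness_lowerbound} (lower bound) and \Cref{thm:section1_closeness_upperbound} (matching upper bound). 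Thus every member of $\mathcal D$ is a legitimate ``No'' instance.

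The core of the proof is the indistinguishability bound. The $\chi^2$-divergence between the $m$-sample mixture $\mathcal D_m$ and $U_m$ equals exactly $\E_{z,z'}\bigl[\langle \varphi_z,\varphi_{z'}\rangle^m\bigr]-1$, where $\langle f,g\rangle = \E_{x\sim U}[f(x)g(x)]$ and $z,z'$ are independent. Expanding the inner product using $\E_x[x_ix_jx_kx_\ell]\in\{0,1\}$ gives the clean identity $\langle\varphi_z,\varphi_{z'}\rangle = 1 + 2c^2(w^2-n)$ with $w = \sum_i z_i z'_i$ a sum of $n$ i.i.d.\ Rademacher variables. So it remains to show $\E_w\bigl[(1+2c^2(w^2-n))^m\bigr] = 1 + o(1)$ whenever $m = o(1/(c^2n)) = o(n/\delta^2)$. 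I would bound $(1+u)^m \le e^{mu}$ (legitimate since $1 + 2c^2(w^2-n) \ge 1 - 2c^2 n > 0$) to reduce to $e^{-2mc^2n}\,\E_w[e^{2mc^2 w^2}]$, then evaluate the exponential-of-a-square moment via Gaussian smoothing $e^{tw^2} = \E_g[e^{\sqrt{2t}\,gw}]$ and $\cosh u \le e^{u^2/2}$, giving $\E_w[e^{2mc^2 w^2}] \le (1-4mc^2n)^{-1/2}$ for $4mc^2 n < 1$. Writing $u = 2mc^2 n$, this yields $\chi^2(\mathcal D_m\|U_m) \le e^{-u}(1-2u)^{-1/2}-1 = u^2 + O(u^3) = O\bigl((mc^2n)^2\bigr)$, which is $o(1)$ precisely when $m = o(1/(c^2n)) = o(n/\delta^2)$. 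Then $d_{\mathrm{TV}}(\mathcal D_m,U_m) \le \tfrac12\sqrt{\chi^2(\mathcal D_m\|U_m)} = o(1)$, so by the usual averaging argument no tester reading $o(n/\delta^2)$ samples can separate the ``Yes'' instance $U$ from a random ``No'' instance drawn from $\mathcal D$, and the theorem follows.

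The step I expect to be the main obstacle is the moment estimate in the third paragraph. Although $w^2$ is typically $\Theta(n)$, it has only a sub-exponential upper tail, and a priori the rare event $w^2 \gtrsim 1/(mc^2)$ — which inflates one factor to roughly $(1+2c^2n^2)^m$ — could dominate $\E_w[\langle\varphi_z,\varphi_{z'}\rangle^m]$; controlling it cleanly forces the use of the $e^u$/Gaussian-smoothing route rather than a naive Taylor expansion, and one must keep the constants aligned so the threshold comes out to be exactly $n/\delta^2$ (and not $n/\delta$ or $1/\delta^2$). The remaining ingredients — the $\E_x[x_ix_jx_kx_\ell]$ computation of the inner product, the twist-invariance of pairwise uniformity, and the reduction from a small $\chi^2$ to a testing lower bound — are routine.
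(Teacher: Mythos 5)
Your construction and argument are essentially the paper's: the paper uses the same hard family $\varphi(x)=1+\tfrac{\delta}{n}\sum_{i<j}x_ix_j$ shifted by a uniformly random $t$ (identical to your twist $x\mapsto z\circ x$), the same $\chi^2$/second-moment reduction via $\E_{t,t'}[\langle\varphi^{+t},\varphi^{+t'}\rangle^m]$, and the same appeal to the closeness lower bound plus shift-invariance of pairwise uniformity to certify the ``No'' instances. The only difference is the final moment estimate, where the paper bounds the constant term of $\bigl(1+\tfrac{\delta^2}{n^2}\sum_{i<j}x_ix_j\bigr)^m$ by counting cycle configurations rather than via your exponential/Gaussian-smoothing bound on $\E[e^{sw^2}]$; both yield the $o(1)$ divergence at $m=o(n/\delta^2)$.
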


For this lower bound we analyze a symmetric distribution with non-zero Fourier coefficients only on level~2. We prove that it is hard to distinguish a randomly shifted version of this distribution from the fully uniform distribution. This lower bound is also better than \cite{AAKMRX07} in that we have a better dependence on the parameter $\delta$ ($\frac{1}{\delta^2}$ rather than $\frac1{\delta}$). Unfortunately we are unable to generalize our lower bound for higher~$k$.

Notice that for our new upper and lower bounds for $k$-wise uniformity testing, there still remains a quadratic gap for $k \geq 2$, indicating that the upper bound might be able to be improved. Both the lower bound in our paper and that in \cite{AAKMRX07} show that it is hard to distinguish between the fully uniform distribution and some specific sets of distributions that are far from $k$-wise uniform. We show that if one wants to improve the lower bound, one will need to use a distribution in the ``Yes'' case that is \emph{not} fully uniform, because we give a sample-efficient algorithm for distinguishing between fully uniform and $\delta$-far from $k$-wise uniform:

\begin{theorem}
	\label{thm:section1_fully_testing}
	For any constant $k$, for testing whether a distribution is fully uniform or $\delta$-far from every $k$-wise uniform distribution, there exists an algorithm with sample complexity $O(k)^k\cdot n^{k/2} \cdot \frac1{\delta^2} \cdot \left(\log \frac{n}{\delta}\right)^{k/2}$.
	
	In fact, for testing whether a distribution is $\alpha k$-wise uniform or $\delta$-far from $k$-wise uniform with $\alpha > 4$, there exists an algorithm with sample complexity  $O(\alpha)^{k/2} \cdot n^{k/2} \cdot \frac1{\delta^2} \cdot \left(\frac{n^k}{\delta^4}\right)^{1/(\alpha - 2)}$. 
\end{theorem}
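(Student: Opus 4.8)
The plan is to reduce the problem to \emph{estimating} the Fourier weight $\W{1 \dots k}[\varphi]=\sum_{1\le|S|\le k}\widehat\varphi(S)^2$. By Theorem~\ref{thm:section1_closeness_upperbound}, a distribution that is $\delta$-far from every $k$-wise uniform distribution has $\W{1 \dots k}[\varphi]>(\delta/e^k)^2=:2\theta$, whereas a distribution that is $\alpha k$-wise uniform (in particular, a fully uniform one) has $\widehat\varphi(S)=0$ for all $S$ with $1\le|S|\le\alpha k$, hence $\W{1 \dots j}[\varphi]=0$ for every $j\le\alpha k$. So it suffices to estimate $\W{1 \dots k}[\varphi]$ with additive error $\theta$ and accept iff the estimate is at most $\theta$. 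For the estimator I would use the degree-two $U$-statistic
\[
\widehat W=\frac{1}{m(m-1)}\sum_{a\ne b}\ \sum_{1\le|S|\le k}\ \prod_{i\in S}\bx^{(a)}_i\bx^{(b)}_i=\frac{1}{m(m-1)}\sum_{1\le|S|\le k}\bigl(Y_S^2-m\bigr),\qquad Y_S:=\sum_{a=1}^{m}\prod_{i\in S}\bx^{(a)}_i ,
\]
over $m$ i.i.d.\ samples from $\varphi$: it is unbiased for $\W{1 \dots k}[\varphi]$, and it is computed efficiently because $\sum_{1\le|S|\le k}\prod_{i\in S}y_i$ is a truncation $\sum_{j=1}^{k}e_j(y)$ of elementary symmetric polynomials, obtainable by a short recurrence. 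The tester accepts iff $\widehat W\le\theta$, and everything then comes down to two tail bounds.

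On the ``yes'' side I need $\Pr[\widehat W>\theta]$ to be small. As a function of the $mn$ i.i.d.\ sample coordinates, $\widehat W$ is multilinear of degree at most $2k$. When $\varphi$ is $\alpha k$-wise uniform, expanding $\widehat W^{2p}$ and taking expectations kills every monomial except those in which the character-sets contributed by each individual sample $\mathrm{XOR}$ to the empty set: a nonzero such $\mathrm{XOR}$ combines at most $2p$ sets of size $\le k$ and so has size at most $2pk\le\alpha k$ as soon as $p\le\alpha/2$. Hence for $p\le\alpha/2$ the $2p$-th moment of $\widehat W$ equals the one under the fully uniform distribution, which by hypercontractivity (degree $\le 2k$) is at most $\bigl((2p-1)^{k}\|\widehat W\|_2\bigr)^{2p}$ with $\|\widehat W\|_2^2=\frac{2}{m(m-1)}\sum_{1\le|S|\le k}1=\Theta(n^k/m^2)$; Markov's inequality on this moment bounds $\Pr[\widehat W>\theta]$. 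The need for $p\ge2$ in this argument is what forces the requirement $\alpha>4$, and capping the moment order at $p\le\alpha/2$ (while still asking the failure probability to be small enough to absorb the union bounds used below) is what produces the $O(\alpha)^{k/2}$ and $(n^k/\delta^4)^{1/(\alpha-2)}$ factors; when $\varphi$ is fully uniform there is no cap, one may take $p$ of order $\log(n/\delta)$, and the overhead shrinks to $(\log(n/\delta))^{k/2}$.

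On the ``no'' side I need $\widehat W>\theta$ with high probability when $\W{1 \dots k}[\varphi]>2\theta=\E[\widehat W]$, i.e.\ a lower-tail estimate. The $U$-statistic variance is $O(\zeta_1/m)+O(\zeta_2/m^2)$, where $\zeta_1$ is the variance, under $\bx\sim\varphi$, of the projection of $\varphi$ onto Fourier levels $1,\dots,k$, and $\zeta_2\le\sum_{S,T:\,1\le|S|,|T|\le k}\widehat\varphi(S\triangle T)^2=O\bigl(n^k\cdot(1+\W{1 \dots 2k}[\varphi])\bigr)$. This exposes the crux: both quantities depend on the Fourier mass of $\varphi$ up through level $2k$, which ``$\delta$-far from $k$-wise uniform'' does not bound --- a distribution can be far while carrying enormous weight on levels $k{+}1,\dots,2k$ (and correspondingly large $\zeta_1$). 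The way I would handle this is to certify that mass as part of the test: run in parallel the analogous $U$-statistics estimating $\W{1 \dots jk}[\varphi]$ for $j$ up to about $\alpha/2$, each against a carefully tuned threshold, and reject if any of them is too large. A ``yes'' instance passes them all (its weight vanishes through level $\alpha k$, and the moment argument of the previous paragraph controls the fluctuations), while on an instance surviving all of them the variance bound for the level-$k$ statistic collapses to $O\bigl(n^k\cdot(\text{threshold})/m^2\bigr)$, small enough for Chebyshev to conclude.

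The main obstacle is exactly this last step: the bare $\W{1 \dots k}$-estimator does not concentrate for every $\delta$-far input, and taming the high-degree Fourier mass --- which also feeds the variances of the auxiliary estimators, and so pushes the chain of tests up toward level $\alpha k$ --- is what both makes the $\alpha k$-wise-uniformity promise necessary and accounts for the extra $O(\alpha)^{k/2}$ and $(n^k/\delta^4)^{1/(\alpha-2)}$ overhead (and, in the fully uniform case, the residual $(\log(n/\delta))^{k/2}$). The remaining ingredients --- the reduction through Theorem~\ref{thm:section1_closeness_upperbound}, the unbiasedness and fast evaluation of $\widehat W$, and the hypercontractive moment estimate --- are routine.
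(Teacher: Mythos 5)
Your reduction to estimating $\W{1\dots k}[\varphi]$ via Theorem~\ref{thm:section1_closeness_upperbound}, your estimator $\widehat W$ (it is the paper's $\Delta(\bX)$), and your ``yes''-side moment computation (matching moments up to order $\alpha$ with the fully uniform case, then hypercontractivity) all align with the paper. You have also correctly located the crux: the lower-tail concentration of $\widehat W$ on a $\delta$-far instance is governed by $L_k(\varphi)=\sum_{1\le|S_1|,|S_2|\le k}\wh\varphi(S_1\oplus S_2)^2$, i.e.\ by Fourier mass up through level $2k$, which the ``no'' promise does not bound. But your proposed fix --- a chain of auxiliary $U$-statistics certifying $\W{1\dots jk}[\varphi]$ for $j\le\alpha/2$ --- does not close, and you essentially concede this. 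The certification is circular: for the level-$2k$ estimator's small reading to certify that $\W{1\dots 2k}$ is actually small, you need a lower-tail bound on \emph{that} estimator, whose variance is driven by mass up through level $4k$, and so on up the chain. The chain terminates at level $\alpha k$ only under the ``yes'' promise; on a ``no'' instance nothing above level $k$ is controlled, so at the top of the chain you are left with the trivial bound $\W{1\dots \alpha k/2}\le n^{\alpha k/2}$, forcing $m\gtrsim n^{\alpha k/4}$ samples for that estimator to concentrate --- far worse than the target. A second, independent problem is that $\varphi$ may hide all of its dangerous high-level Fourier weight in a tiny-probability piece (e.g.\ an $o(\delta)$-mass highly structured sub-distribution), which no weight estimator run with $n^{k/2}\mathrm{poly}(1/\delta)$ samples can detect, yet which inflates every $\W{1\dots jk}[\varphi]$.

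The paper resolves both issues with a mechanism of a different character: a \emph{Filter Test} that rejects if any pair of samples $\bx,\by$ has $|\sum_i\bx_i\by_i|>t\sqrt n$. This is not a Fourier-weight estimator, so there is no recursion; and the analysis (Lemma~\ref{lem:filter-low-weight}, via the notions of ``skewed'' pairs and ``bad'' points) does not claim that $\varphi$ itself has small weights, but only that any $\varphi$ surviving the filter is $O(1/m_1)$-close in total variation to a pruned distribution $\varphi'$ satisfying $\W{i}[\varphi']\le \frac{10^7}{m_1^2}n^i+t^in^{i/2}$ \emph{simultaneously for all levels} $i$. One then runs the $\W{1\dots k}$ estimator pretending the samples come from $\varphi'$ (Claim~\ref{claim:nodiff}), which is exactly the device that neutralizes the tiny-bad-part obstruction. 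Your write-up is missing this (or an equivalent) idea, and without it the ``no''-side analysis does not go through; the rest of your outline would then assemble into the stated bounds in essentially the same way as the paper's proof.
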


We remark that testing full uniformity can be treated as a special case of testing $\alpha k$-wise uniformity approximately, by setting $\alpha = \log \frac{n}{\delta}$.

Testing full uniformity has been studied in \cite{GR11,BFRSW00}. Paninski \cite{Paninski08} showed that testing whether an unknown distribution on $\{-1,1\}^n$ is $\Theta(1)$-close to fully uniform requires $2^{n/2}$ samples. Rubinfeld and Servedio \cite{RS09} studied testing whether an unknown monotone distribution is fully uniform or not.

The fully uniform distribution has the nice property that every pair of samples is different in $\frac{n}{2} \pm O(\sqrt{n})$ bits with high probability when the sample size is small. Our algorithm first rejects those distributions that disobey this property. We show that the remaining distributions have small Fourier weight up to level $2k$. Hence by following a similar analysis as the tester in Theorem~\ref{thm:section1_testing_upperbound}, we can get an improved upper bound when these lower Fourier weights are small.

The lower bound remains the same as testing $k$-wise vs.\ far from $k$-wise. Our tester is tight up to a logarithmic factor for the pairwise case, and is tight up to a factor of $\tilde{O}(\sqrt{n})$ when $k > 2$. 

We compare our results and previous best known bounds from \cite{AAKMRX07} in Table~\ref{table}. (We omit constant factors depending on $k$.)

\begin{table}[h]
	
	\centering
	
	\small
	
	\begin{tabular}{ |l||c|c|c|c| }
		\hline
		&
		\multicolumn{2}{c|}{Upper bound} & \multicolumn{2}{c|}{Lower bound} \\
		\cline{2-5}
		& \cite{AAKMRX07} & Our paper & \cite{AAKMRX07} & Our paper\\
		\hline
		\multirow{2}{*}{Closeness Problem} &
		\multirow{2}{*}{$O(n^{k/2} (\log n)^{k/2} \epsilon)$} & {$O(n^{k/2} \epsilon)$ } & \multirow{2}{*}{$\Omega\left(\frac{n^{(k-1)/2}}{\log n} \epsilon\right)$} & \multirow{2}{*}{$\Omega(n^{\lfloor k/2 \rfloor} \epsilon)$} \\
		&&$O(\epsilon)$ for $k=1$&&\\
		\hline
		Testing $k$-wise vs. &\multirow{2}{*}{$\displaystyle O\left(\frac{n^{k} (\log n)^{k + 1} }{\delta^2}\right)$} & {$O\left(\frac{n^{k}}{\delta^2}\right)$ } & \multirow{2}{*}{$\displaystyle \Omega\left(\frac{n^{(k-1)/2}}{\delta}\right)$ for $k > 2$} & \multirow{2}{*}{$\displaystyle \Omega\left(\frac{n}{\delta^2}\right)$ for $k = 2$}\\
		far from $k$-wise&&$O\left(\frac{\log n}{\delta^2}\right)$ for $k = 1$&&\\
		\hline
		Testing $n$-wise vs.\ &\multirow{2}{*}{$\displaystyle O\left(\frac{n^{k} (\log n)^{k + 1} }{\delta^2}\right)$} & {$O\left(\frac{n^{k/2}}{\delta^2} (\log \frac{n}{\delta})^{k/2}\right)$ } & \multirow{2}{*}{$\displaystyle \Omega\left(\frac{n^{(k-1)/2}}{\delta}\right)$ for $k > 2$} & \multirow{2}{*}{$\displaystyle \Omega\left(\frac{n}{\delta^2}\right)$ for $k = 2$}\\
		far from $k$-wise&&$O\left(\frac{\log n}{\delta^2}\right)$ for $k = 1$&&\\		\hline
	\end{tabular}
	\vspace{.2em}	
	\caption{Comparison of our results to \cite{AAKMRX07}}
	\label{table}
\end{table}

\subsection{Organization}
Section~\ref{sec:pre} contains definitions and notations. We will discuss upper and lower bounds for the Closeness Problem in Section~\ref{sec:closeness}. We will discuss the sample complexity of testing $k$-wise uniformity in Section~\ref{sec:testing}. We present a tester for distinguishing between $\alpha k$-wise uniformity (or fully uniformity) and far-from $k$-wise uniformity in Section~\ref{sec:fully}.

\section{Preliminaries}
\label{sec:pre}

\subsection{Fourier analysis of Boolean functions}

We use $[n]$ to denote the set $\{1, \dots, n\}$. We denote the symmetric difference of two sets $S$ and $T$ by $S \oplus T$. For Fourier analysis we use notations consistent with \cite{OD14}. Every function $f : \{-1, 1\}^n \to \R$ has a unique representation as a multilinear polynomial
\[
f(x) = \sum_{S \subseteq [n]} \widehat{f}(S) x^S \quad \text{where} \quad x^S = \prod_{i \in S} x_i.
\]
We call $\widehat{f}(S)$ the Fourier coefficient of $f$ on $S$. We use $\bx \sim \{-1, 1\}^n$ to denote that $\bx$ is uniformly distributed on~$\{-1, 1\}^n$. We can represent Fourier coefficients as
\[
\widehat{f}(S) = \E_{\bx \sim \{-1, 1\}^n} \left[f(\bx) \bx^S\right].
\]

We define an inner product $\langle \cdot, \cdot \rangle$ on pairs of functions $f, g : \{-1,1\}^n \to \R$ by
\[
\langle f, g \rangle = \E_{\bx \sim \{-1, 1\}^n} [f(\bx) g(\bx)] = \sum_{S \subseteq[n]}\widehat{f}(S) \widehat{g}(S).
\] 

We introduce the following $p$-norm notation: $\|f\|_p = \left(\E[|f(\bx)|^p]\right)^{1/p}$,
and the Fourier $\ell_p$-norm is
$\|\widehat{f}\|_p = \left(\sum_{S \subseteq [n]}|\widehat{f}(S)|^p\right)^{1/p}$.

We say the \emph{degree} of a Boolean function, $\textnormal{deg}(f)$ is $k$ if its Fourier polynomial is degree $k$. We denote $f^{=k}(x) =  \sum_{|S| = k} \widehat{f}(S) x^S$, and $f^{\leq k}(x) =  \sum_{|S| \leq k} \widehat{f}(S) x^S$. We denote the \emph{Fourier weight} on level $k$ by~$\W{k}[f] = \sum_{|S| = k} \widehat{f}(S)^2$. We denote~$\W{1\dots k}[\varphi] = \sum_{1 \leq |S| \leq k} \wh{\varphi}(S)^2$.

We define the convolution $f * g$ of a pair of functions $f, g : \{-1, 1\}^n \to \R$ to be
\[
(f * g)(x) = \E_{\by \sim \{-1, 1\}^n}[f(x)g(x \circ \by)],
\]
where $\circ$ denotes entry-wise multiplication. The effect of convolution on Fourier coefficients is that $\widehat{f * g}(S) = \wh{f}(S) \wh{g}(S)$.

\subsection{Densities and distances}

When working with probability distribution on $\{-1, 1\}^n$, we prefer to define them via \emph{density} function. A density function $\varphi : \{-1, 1\}^n \to \R^{\geq 0}$ is a nonnegative function satisfying
$
\wh{\varphi}(\emptyset) = \E_{\bx \sim \{-1, 1\}^n}[\varphi(\bx)] = 1.
$
We write $\by \sim \varphi$ to denote that $\by$ is a random variable drawn from the distribution $\varphi$, defined by
\[
\Pr_{\by \sim \varphi}[\by = y] = \frac{\varphi(y)}{2^n},
\]
for all $y \in \{-1, 1\}^n$. We identify distributions with their density functions when there is no risk of confusion.

We denote $\varphi^{+t}(x) = \varphi(x \circ t)$. We denote by $\bone_A$ the density function for the uniform distribution on support set $A$. The density function associated to the fully uniform distribution is the constant function $\bone$.

The following lemma about density functions of degree at most $k$ derives from Fourier analysis and hypercontractivity.

\begin{lemma}
	\label{lem:fourierl1bound}
	Let $\varphi : \{-1, 1\}^n \to \R^{\geq 0}$ be a density function of degree at most $k$. Then 
	\[
	\|\widehat{\varphi}\|_2 = \sqrt{\sum_{S} \widehat{\varphi}(S)^2} \leq e^k.
	\]
\end{lemma}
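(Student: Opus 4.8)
The plan is to bound $\|\widehat{\varphi}\|_2^2 = \sum_S \widehat{\varphi}(S)^2 = \|\varphi\|_2^2$ by exploiting that $\varphi$ is both nonnegative with mean $1$ and of low degree. The key observation is that hypercontractivity (the $(2,q)$-hypercontractive inequality, or equivalently its dual) controls the $L_2$ norm of a low-degree function by a higher $L_q$ norm, and the $L_1$ norm of a density is pinned at $1$. Concretely, I would use the standard fact that for $g$ of degree at most $k$ one has $\|g\|_2 \le \rho^{-k}\|g\|_{1+\rho^2}$ for $0 < \rho < 1$; equivalently, writing $q = 1+\rho^2$, $\|g\|_2 \le (q-1)^{-k/2}\|g\|_q$. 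Apply this with $g = \varphi$ and push $q$ down toward $1$.

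First I would recall that $\|\varphi\|_q$ for $q$ close to $1$ is close to $\|\varphi\|_1 = \widehat{\varphi}(\emptyset) = 1$; more usefully, $\|\varphi\|_q \le \|\varphi\|_2^{?}\|\varphi\|_1^{?}$ by interpolation, but the cleanest route is: by the hypercontractive bound, $\|\varphi\|_2 \le (q-1)^{-k/2}\|\varphi\|_q$, and by log-convexity of $L_p$ norms (interpolating $q$ between $1$ and $2$), $\|\varphi\|_q \le \|\varphi\|_1^{1-\theta}\|\varphi\|_2^{\theta}$ where $\tfrac1q = (1-\theta)\cdot 1 + \theta\cdot\tfrac12$, i.e. $\theta = 2(1-\tfrac1q) = 2\cdot\tfrac{q-1}{q}$. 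Substituting and using $\|\varphi\|_1 = 1$ gives $\|\varphi\|_2 \le (q-1)^{-k/2}\|\varphi\|_2^{\theta}$, hence $\|\varphi\|_2^{1-\theta} \le (q-1)^{-k/2}$, i.e.
\[
\|\varphi\|_2 \le (q-1)^{-\frac{k}{2(1-\theta)}} = (q-1)^{-\frac{kq}{2(2-q)}}.
\]
Now I would optimize over $q \in (1,2)$. Setting $q - 1 = t$, the exponent is $\frac{k(1+t)}{2(1-t)}$ and the base is $t$, so $\|\varphi\|_2 \le \exp\!\big(\frac{k(1+t)}{2(1-t)}\ln\frac1t\big)$; choosing $t$ appropriately (e.g. $t$ a small constant, or $t = 1/k$-type scaling) should yield a bound of the form $e^{O(k)}$, and I would chase the constant to land at exactly $e^k$, or else pick the slack in the hypercontractivity constant / the choice of $q$ so that the final bound reads $\le e^k$ cleanly. (A convenient choice like $q \to 1$ makes the base $t \to 0$ but the exponent $\to k/2$; one has to balance, and the natural near-optimal point gives a clean exponential-in-$k$ constant.)

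The main obstacle I expect is purely getting the constant down to exactly $e^k$ rather than some larger $e^{Ck}$: the interpolation-plus-hypercontractivity argument naturally produces a constant of the form $e^{ck}$ with $c$ somewhat larger than $1$ unless one is careful about which $q$ (equivalently which $\rho$) to use and how to package the two-sided use of the norm. One remedy is to avoid the interpolation step and instead argue directly: since $\varphi \ge 0$, $\|\varphi\|_q^q = \E[\varphi^q] \le \E[\varphi]\cdot\|\varphi\|_\infty^{q-1}$ is too lossy, so interpolation really is the right tool, and the fix is just to optimize the resulting one-variable expression carefully — I would verify that the minimum of $\frac{k(1+t)}{2(1-t)}\ln\frac1t$ over $t\in(0,1)$ is at most $k$ (numerically the relevant constant is well under $1$), which closes the proof. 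A secondary point to be careful about: the version of hypercontractivity I invoke must be the one valid for all functions (not just those with $\widehat{g}(\emptyset)=0$), which the standard $(2,q)$ statement is.
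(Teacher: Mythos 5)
Your overall strategy is sound, and it is in fact essentially the content of what the paper uses as a black box: the paper's proof is one line, citing Parseval ($\|\widehat{\varphi}\|_2 = \|\varphi\|_2$), the degree-$k$ hypercontractive bound $\|f\|_2 \le e^k\|f\|_1$ (Theorem~9.22 in \cite{OD14}), and $\|\varphi\|_1 = \E[\varphi] = 1$ for a density. What you are doing is re-deriving that cited theorem from the $(q,2)$-hypercontractive inequality plus log-convexity of norms, which is the standard route and is self-contained; the algebra you set up (with $\theta = 2(q-1)/q$, hence $\|\varphi\|_2 \le (q-1)^{-kq/(2(2-q))}$) is correct.

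However, your final verification step is wrong as stated, and you should fix it. Writing $t = q-1$, the quantity $g(t) = \frac{1+t}{2(1-t)}\ln\frac1t$ is \emph{not} ``well under $1$'' at its minimum: it is strictly greater than $1$ for every $t \in (0,1)$, is decreasing in $t$, and tends to $1$ only as $t \to 1^-$ (e.g.\ $g(1/2) = \tfrac32\ln 2 \approx 1.04$, and a second-order expansion at $t = 1-s$ gives $g = 1 + s^2/12 + O(s^3)$). So no single admissible choice of $q$ yields the constant $e^k$; the choice $q = 3/2$, for instance, gives $(2\sqrt2)^k > e^k$. The proof still closes, but only by observing that the inequality $\|\varphi\|_2 \le \exp(k\,g(t))$ holds simultaneously for all $t \in (0,1)$, and taking the infimum of the right-hand side as $t \to 1^-$ (equivalently $q \to 2^-$) gives exactly $e^k$. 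With that limiting argument substituted for your ``pick a $t$ with $g(t) \le 1$'' step, the proof is complete and matches the known proof of the result the paper cites.
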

\begin{proof}
	\[
	\|\widehat{\varphi}\|_2 = \|\varphi\|_2 \leq e^k \|\varphi\|_1 = e^k.
	\]
	The first equality holds by Parseval's Theorem (see Section~1.4 in \cite{OD14}). The inequality holds by hypercontractivity (see Theorem~9.22 in \cite{OD14}). The last equality holds since $\varphi$ is a density function.
\end{proof}

A distribution $\varphi$ over $\{-1, 1\}^n$ is \emph{$k$-wise uniform} if and only if $\widehat{\varphi}(S) = 0$ for all $1 \leq |S| \leq k$ (see Chapter 6.1 in \cite{OD14}). We say that distribution $\varphi$ over $\{-1, 1\}^n$ is \emph{$(\epsilon, k)$-wise uniform} if $|\widehat{\varphi}(S)| \leq \eps$ for all~$1 \leq S \leq k$.

The most common way to measure the distance between two probability distributions is via their \emph{total variation distance}.  If the distributions have densities $\varphi$ and $\psi$, then the total variation distance is defined to be 
\[
d_{\textnormal{TV}}(\varphi, \psi) = \sup_{A \subseteq \{-1, 1\}^n} \left|\Pr_{\bx \sim \varphi}[\bx \in A] - \Pr_{\bx \sim \psi}[\bx \in A]\right| = \frac12\E_{\bx}\left[|\varphi(\bx) - \psi(\bx)|\right] =  \frac12 \|\varphi - \psi \|_1.
\]
We say that $\varphi$ and $\psi$ are \emph{$\delta$-close} if $d_{\textnormal{TV}}(\varphi, \psi) \leq \delta$.

Supposing $H$ is a set of distributions, we denote 
\[
d_{\textnormal{TV}}(\varphi, H) = \min_{\psi \in H} d_{\textnormal{TV}}(\varphi, \psi).
\]
In particular, we denote the set of $k$-wise uniform densities by $\kWISE{k}$. We say that density $\varphi$ is \emph{$\delta$-close to $k$-wise uniform} if $d_{\textnormal{TV}}(\varphi, \kWISE{k}) \leq \delta$, and is \emph{$\delta$-far} otherwise.

\subsection{Krawtchouk and Hermite polynomials}

Krawtchouk polynomials were introduced in \cite{Krawtchouk1929}, and arise in the analysis of Boolean functions as shown in \cite{Levenshtein95,Kalai2002}.  Consider the following Boolean function of degree $k$ and input length $n$: $f(x) = \sum_{|S| = k} x^S$. It is symmetric and therefore only depends on the Hamming weight of $x$. Let $t$ be the number of $-1$'s in $x$. Then the output of $f$ is exactly the same as the Krawtchouk polynomial $K_k^{(n)}(t)$.
\begin{definition}
	We denote by $K_k^{(n)}(t)$ the Krawtchouk polynomial:
	\[
	K_k^{(n)}(t) = \sum_{j = 0}^{k}(-1)^j {t \choose j}{n-t \choose k-j},
	\]
	for $k = 0, 1, \dots, n$.
	
\end{definition}

We will also use Hermite polynomials in our analysis.
\begin{definition}
	We denote by $h_k(z)$ the normalized Hermite polynomial:
	\[
	h_k(z) = \frac1{\sqrt{k!}} (-1)^k e^{\frac12z^2} \frac{d^k}{dz^k} e^{-\frac12z^2}.
	\]
	Its explicit formula is
	\[
	h_k(z) = \sqrt{k!} \cdot \left(\frac{z^k}{0!! \cdot k!} - \frac{z^{k-2}}{2!! \cdot (k-2)!} + \frac{z^{k-4}}{4!! \cdot (k-4)!} - \frac{z^{k-6}}{6!! \cdot (k-6)!} + \cdots\right).
	\]
\end{definition}

One useful fact is that the derivative of a Hermite polynomial is a scalar multiple of a Hermite polynomial (see Exercise~11.10 in \cite{OD14}):

\begin{fact}
	\label{fact:hermite}
	For any integer $k \geq 1$, we have
	\[
	\frac{d}{dz}h_k(z) = \sqrt{k} h_{k-1}(z).
	\] 
\end{fact}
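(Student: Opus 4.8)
The plan is to prove the identity directly from the Rodrigues-type formula in the definition, $h_k(z) = \frac{1}{\sqrt{k!}}(-1)^k e^{z^2/2}\frac{d^k}{dz^k}e^{-z^2/2}$. Write $D = \frac{d}{dz}$ and set $g_k(z) = (-1)^k e^{z^2/2} D^k e^{-z^2/2}$, so that $h_k = g_k/\sqrt{k!}$. The one substantive step is the un-normalized recurrence $g_k'(z) = k\,g_{k-1}(z)$; granting this, the claimed identity is a one-line normalization check, since $g_{k-1} = \sqrt{(k-1)!}\,h_{k-1}$ gives $h_k'(z) = g_k'(z)/\sqrt{k!} = k\,g_{k-1}(z)/\sqrt{k!} = \tfrac{k\sqrt{(k-1)!}}{\sqrt{k!}}\,h_{k-1}(z) = \sqrt{k}\,h_{k-1}(z)$.

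To establish $g_k'(z) = k\,g_{k-1}(z)$, I would differentiate the product defining $g_k$, obtaining $g_k'(z) = (-1)^k\big(z\,e^{z^2/2} D^k e^{-z^2/2} + e^{z^2/2} D^{k+1} e^{-z^2/2}\big)$, and then rewrite the top-order term using $D e^{-z^2/2} = -z\,e^{-z^2/2}$ together with the Leibniz rule: since all derivatives of $-z$ of order $\geq 2$ vanish, $D^{k+1} e^{-z^2/2} = D^k\!\big({-z}\,e^{-z^2/2}\big) = -z\,D^k e^{-z^2/2} - k\,D^{k-1} e^{-z^2/2}$. Substituting, the two $z$-weighted terms cancel, leaving $g_k'(z) = (-1)^k e^{z^2/2}\big({-k}\,D^{k-1}e^{-z^2/2}\big) = k\,(-1)^{k-1} e^{z^2/2} D^{k-1} e^{-z^2/2} = k\,g_{k-1}(z)$, as desired.

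As an alternative that entirely avoids the Rodrigues formula, one can differentiate the explicit expansion given in the definition term by term: writing $h_k(z) = \sqrt{k!}\sum_{j\geq 0}\frac{(-1)^j z^{k-2j}}{(2j)!!\,(k-2j)!}$, differentiation multiplies the $j$-th term by $k-2j$, which cancels one factor in the denominator and yields $h_k'(z) = \sqrt{k!}\sum_{j\geq 0}\frac{(-1)^j z^{k-1-2j}}{(2j)!!\,(k-1-2j)!}$; comparing with $\sqrt{k}\,h_{k-1}(z) = \sqrt{k}\sqrt{(k-1)!}\sum_{j\geq 0}\frac{(-1)^j z^{k-1-2j}}{(2j)!!\,(k-1-2j)!} = \sqrt{k!}\sum_{j\geq 0}\frac{(-1)^j z^{k-1-2j}}{(2j)!!\,(k-1-2j)!}$ gives equality. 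There is no genuine obstacle here beyond careful bookkeeping of the normalizing constant $\sqrt{k!}$ and the double factorials; the only point needing a moment's care is the boundary term in the Leibniz expansion (equivalently, that the $j$-summation does not overshoot the polynomial degree), which is handled by noting the relevant factorials are defined precisely so the extra terms vanish. I would present the Rodrigues-formula argument as the main proof and note the explicit-formula computation as a sanity check.
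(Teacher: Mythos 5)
Your proof is correct. Note that the paper does not actually prove this fact; it simply cites Exercise~11.10 of \cite{OD14}, so there is no in-paper argument to compare against. Your main argument is the standard one for that exercise: differentiate the Rodrigues-type expression, use $D e^{-z^2/2} = -z e^{-z^2/2}$ and the Leibniz rule (only two terms survive since $-z$ is linear) so the $z$-weighted terms cancel, giving $g_k' = k\,g_{k-1}$, and then the normalization $\sqrt{k!}$ converts $k$ into $\sqrt{k}$ exactly as you compute. The term-by-term differentiation of the explicit expansion is also carried out correctly (the top term with $k-2j=0$ is constant and drops out, matching the truncation of the $h_{k-1}$ expansion), so either route is a complete, self-contained proof of the fact the paper only cites.
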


The relationship between Krawtchouk and Hermite polynomials is that we can treat Hermite polynomials as a limit version of Krawtchouk polynomials when $n$ goes to infinity (see Exercise~11.14 in \cite{OD14}).

\begin{fact}
	\label{fact:kravchuk}
	For all $k \in \N$ and $z \in \R$ we have
	\[
	{n \choose k}^{-1/2} \cdot K_k^{(n)}\left(\frac{n-z\sqrt{n}}2\right) \xrightarrow{n \to \infty} h_k(z).
	\]
\end{fact}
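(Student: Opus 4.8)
The plan is to route through generating functions, which automatically manage the heavy cancellation present in the defining alternating sum for $K_k^{(n)}$. The starting point is the identity
\[
\sum_{k=0}^{\infty} K_k^{(n)}(t)\, x^k = (1-x)^{t}(1+x)^{n-t},
\]
valid for $|x|<1$ and any real $t$, which is immediate from the definition of $K_k^{(n)}$ and the (generalized) binomial series. Writing $[x^k]F$ for the coefficient of $x^k$ in a power series $F$, this says $K_k^{(n)}(t) = [x^k]\,(1-x)^t(1+x)^{n-t}$. First I would rescale by setting $x = w/\sqrt n$, so that
\[
\frac{K_k^{(n)}(t)}{n^{k/2}} \;=\; [w^k]\,(1-w/\sqrt n)^{t}(1+w/\sqrt n)^{n-t},
\]
where for non-integer $t$ the right-hand side is understood as $\exp\!\big(t\log(1-w/\sqrt n)+(n-t)\log(1+w/\sqrt n)\big)$ with principal branches, which is analytic in a fixed neighbourhood of $w=0$ once $n$ is large.

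Next I would substitute $t=(n-z\sqrt n)/2$, so that $n-t=(n+z\sqrt n)/2$ and $n-2t=z\sqrt n$, and expand the exponent via $\log(1\pm u)=\pm u-\frac{u^2}{2}\pm\frac{u^3}{3}-\cdots$ with $u=w/\sqrt n$. The order-$n^{-1/2}$ terms combine to $(n-2t)\cdot\frac{w}{\sqrt n}=zw$ exactly; the order-$n^{-1}$ terms combine to $-(t+(n-t))\cdot\frac{w^2}{2n}=-\frac{w^2}{2}$ exactly; and the remaining tail is bounded, uniformly over $w$ on a fixed circle $|w|=r$, by a constant times $n\cdot(r/\sqrt n)^3 = O(n^{-1/2})$. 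Hence $(1-w/\sqrt n)^{t}(1+w/\sqrt n)^{n-t}\to e^{zw-w^2/2}$ uniformly on that circle, and Cauchy's estimate for Taylor coefficients gives convergence of the coefficient of $w^k$:
\[
\frac{K_k^{(n)}\!\big((n-z\sqrt n)/2\big)}{n^{k/2}}\;\xrightarrow{n\to\infty}\;[w^k]\,e^{zw-w^2/2}\;=\;\frac{h_k(z)}{\sqrt{k!}},
\]
the last equality being the familiar exponential generating function $\sum_{k}\frac{h_k(z)}{\sqrt{k!}}w^k = e^{zw-w^2/2}$ of the normalized Hermite polynomials, which one verifies against the explicit formula for $h_k$ quoted above using $(2i)!!=2^i i!$.

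Finally, since ${n\choose k}=\frac{n(n-1)\cdots(n-k+1)}{k!}$ we have ${n\choose k}^{-1/2}n^{k/2}=\Big(\frac{k!\, n^k}{n(n-1)\cdots(n-k+1)}\Big)^{1/2}\to\sqrt{k!}$, so multiplying the displayed limit by ${n\choose k}^{-1/2}n^{k/2}$ yields ${n\choose k}^{-1/2}K_k^{(n)}\!\big((n-z\sqrt n)/2\big)\to h_k(z)$, which is the claim. I expect the only genuinely delicate step to be the interchange of $n\to\infty$ with coefficient extraction; I would handle it through the uniform convergence on a fixed contour noted above (equivalently, by Weierstrass's theorem on locally uniform limits of analytic functions, which forces convergence of Taylor coefficients). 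An alternative, more elementary but messier route would note that for fixed $k$ both sides are polynomials in $z$ of degree at most $k$ and prove convergence of the coefficient of each power $z^j$ from the asymptotics of the falling factorials $t^{\underline{j}},(n-t)^{\underline{k-j}}$ with $t=(n-z\sqrt n)/2$; there the bookkeeping is annoying precisely because the individual summands defining $K_k^{(n)}$ are of size $\Theta(n^{k})$ while their alternating sum is only $\Theta(n^{k/2})$ — which is exactly why the generating-function argument is cleaner.
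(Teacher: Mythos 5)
Your proof is correct. Note that the paper does not actually prove this fact at all---it simply cites it as Exercise~11.14 of O'Donnell's \emph{Analysis of Boolean Functions}---so there is no in-paper argument to compare against; the standard route suggested there is essentially your ``messier alternative'' of tracking the asymptotics of the falling factorials term by term. Your generating-function argument is clean and all the steps check out: the identity $\sum_k K_k^{(n)}(t)x^k=(1-x)^t(1+x)^{n-t}$ follows from the Cauchy product of the two (generalized) binomial series and is valid for real $t$ since both sides' $x^k$-coefficients are polynomials in $t$; the exponent bookkeeping after substituting $x=w/\sqrt n$ and $t=(n-z\sqrt n)/2$ correctly isolates $zw-\tfrac{w^2}{2}$ with an $O(n^{-1/2})$ tail uniform on a fixed circle $|w|=r$ (using $|t|,|n-t|\le n$ for large $n$); Cauchy's integral formula then legitimately passes the locally uniform convergence to the $k$-th Taylor coefficient; and the identification $[w^k]e^{zw-w^2/2}=h_k(z)/\sqrt{k!}$ together with ${n\choose k}^{-1/2}n^{k/2}\to\sqrt{k!}$ closes the computation. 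The one caveat worth recording is that your argument (like the fact as stated) gives only pointwise convergence in $z$; where the paper later invokes the fact for a random $\bz$ ranging over $|z|\le k$, a locally uniform version is implicitly needed, but your contour argument delivers that for free since the $O(n^{-1/2})$ error is uniform for $z$ in any compact set.
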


Instead of analyzing Krawtchouk polynomials, it is easier to study Hermite polynomials when $n$ is large because Hermite polynomials have a more explicit form. We present some basic properties of Hermite polynomials with brief proofs.

\begin{lemma}
	\label{lem:hermite}
	The following are properties of $h_k(z)$:
	\begin{enumerate}
		\item $|h_k(z)| \leq h_k(k)$ for any $|z| \leq k$;
		\item $h_k(z)$ is positive and increasing when $z \geq k$;
		\item $h_k(Ck) \leq (Ck)^k/\sqrt{k!}$ for any constant $C \geq 1$.
	\end{enumerate}
\end{lemma}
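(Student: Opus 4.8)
The plan is to prove the three properties essentially from the explicit formula
$h_k(z) = \sqrt{k!}\sum_{j\geq 0}(-1)^j \frac{z^{k-2j}}{(2j)!!\,(k-2j)!}$
together with the differentiation identity in Fact~\ref{fact:hermite}, working from the top ($|h_k(z)|\le h_k(k)$) down, since each part feeds the next.

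For part~1, the natural route is to show that $h_k$ is monotone in a way that pins its maximum on $[-k,k]$ to the endpoint. I would argue by induction on $k$: Fact~\ref{fact:hermite} says $h_k' = \sqrt{k}\,h_{k-1}$, so by the inductive hypothesis the derivative $h_k'$ is controlled on $[-k,k]$, but more usefully, the inductive claim (part~2) tells us $h_{k-1}$ is positive and increasing past $k-1$, hence past $k$, which combined with a sign/parity analysis of the (at most $k$) real roots of $h_k$ — all of which lie in $(-\sqrt{2k+1},\sqrt{2k+1})$ or at worst in a bounded interval comfortably inside magnitude... — actually the cleaner statement is that all real roots of $h_k$ have absolute value at most some $c\sqrt k \le k$ for $k$ not too small, so on $|z|\le k$ past the largest root $h_k$ is monotone and the bound $|h_k(z)|\le h_k(k)$ follows after checking the even-index oscillatory region separately. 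I expect this root-location bookkeeping — making sure nothing escapes $[-k,k]$ and handling small $k$ by hand — to be the fussiest part.

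For part~2, positivity and monotonicity for $z\ge k$: since $h_k(z) = z^k/\sqrt{k!} \cdot (1 - O(k^2/z^2))$ heuristically, for $z\ge k$ every successive term in the alternating explicit formula is dominated by its predecessor (the ratio of consecutive terms is roughly $\frac{(k-2j)(k-2j-1)}{2(j+1)z^2}\le \frac{k^2}{2z^2}\le \tfrac12$ when $z\ge k$), so the series is positive and in fact comparable to its leading term. Monotonicity then follows immediately by induction using $h_k' = \sqrt k\, h_{k-1} > 0$ on $z\ge k \ge k-1$, once part~2 is known for $k-1$. This is the easy step and mainly requires choosing the induction so that parts~1, 2, 3 at level $k-1$ are all available when proving level $k$.

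For part~3, I would just bound $h_k(Ck)$ termwise against its leading term: $h_k(Ck) = \sqrt{k!}\sum_j (-1)^j \frac{(Ck)^{k-2j}}{(2j)!!(k-2j)!}$, and by the ratio estimate above (valid since $Ck\ge k$), the partial sums are squeezed between consecutive terms, so $h_k(Ck)\le \sqrt{k!}\cdot \frac{(Ck)^k}{k!} = (Ck)^k/\sqrt{k!}$. The only care needed is verifying the alternating-series bound applies, i.e.\ that consecutive term magnitudes are decreasing, which is exactly the $\frac{k^2}{2(Ck)^2}\le \tfrac12$ computation; everything else is a one-line substitution. Overall the proof is short, with the induction for part~1 (and the attendant root-interval claim for Hermite polynomials, which can be cited or derived from the classical bound) being the only place demanding genuine attention.
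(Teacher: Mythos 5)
Parts 2 and 3 of your proposal are fine and essentially match the paper: for $z\ge k$ the ratio of consecutive term magnitudes in the explicit formula is $\frac{(k-2j)(k-2j-1)}{(2j+2)z^2}\le \frac{k^2}{2z^2}\le\frac12$, so the alternating sum is positive and bounded above by its leading term $\sqrt{k!}\,\frac{(Ck)^k}{k!}=(Ck)^k/\sqrt{k!}$, and monotonicity comes from $h_k'=\sqrt{k}\,h_{k-1}$ together with positivity of $h_{k-1}$ on $[k,\infty)\subseteq[k-1,\infty)$. The paper proves the same facts by grouping consecutive terms into pairs and checking each pair is positive for $z\ge k$, which is equivalent in substance.

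Part 1, however, has a genuine gap. Your plan reduces the claim $|h_k(z)|\le h_k(k)$ on $|z|\le k$ to two pieces: (i) outside the root region $|z|\le c\sqrt{k}$ the polynomial is monotone, and (ii) inside the oscillatory region the local extrema are all dominated by $h_k(k)$. Piece (ii) is exactly the nontrivial content of part 1 and you leave it as ``checking the even-index oscillatory region separately'' with no argument. It is not automatic: one would need either the classical Szeg\H{o}-type monotonicity of successive relative maxima of Hermite polynomials (which you neither cite nor prove), or an explicit estimate comparing $\max_{|z|\le c\sqrt{k}}|h_k(z)|$ against $h_k(k)\gtrsim k^k/(2\sqrt{k!})$; either route requires real work and also a separate treatment of small $k$, where $c\sqrt{k}\le k$ may fail. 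The paper avoids all of this with a short algebraic trick: write $h_k(z)$ as a sum of \emph{paired} consecutive terms, each of the form $\frac{z^{k-4i-2}}{(4i+2)!!\,(k-4i)!}\bigl((4i+2)z^2-(k-4i)(k-4i-1)\bigr)$; for $|z|\le k$ the bracketed factor lies in an interval whose endpoints both have absolute value at most its value at $z=k$, and $|z|^{k-4i-2}\le k^{k-4i-2}$, so the triangle inequality over pairs gives $|h_k(z)|\le h_k(k)$ directly, with no induction and no root-location input. I would recommend replacing your part-1 argument with this pairing computation (or supplying the missing extremum-monotonicity lemma with a proof or precise citation).
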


\begin{proof}
	We will treat the case of $k = 4i + 2$ for some integer $i$. The proof for  the general case is similar. When $k =  4i + 2$, we can group adjacent terms into pairs:
	\[
	h_k(z) = \sqrt{k!} \cdot \sum_{i = 0}^{(k-2)/4} \frac{z^{k-4i-2}}{(4i+2)!! \cdot (k-4i)!}((4i+2)z^2 - (k-4i)(k-4i - 1)).
	\]
	\begin{enumerate}
		\item Notice that $|(4i+2)z^2 - (k-4i)(k-4i - 1)|$ is always between $-(k-4i)(k-4i - 1)$ and $(4i+2)k^2 - (k-4i)(k-4i - 1)$ when $|z| \leq k$. Both the upper and lower bound have absolute value at most $(4i+2)k^2 - (k-4i)(k-4i - 1)$. Therefore by the triangle inequality we have $|h_k(z)| \leq h_k(k)$.
		\item It is easy to check that $((4i+2)z^2 - (k-4i)(k-4i - 1))$ is positive when $z \geq k$. Then by Fact~\ref{fact:hermite}, $\frac{d}{dz}h_k(z) = \sqrt{k} h_{k - 1}(z) > 0$ when $z \geq k$.
		\item This is trivial from the explicit formula since each term is exactly smaller than the previous term when $z \geq k$. \qedhere
	\end{enumerate}
\end{proof}

\section{The Closeness Problem}
\label{sec:closeness}

In this section, we prove the upper bound in Theorem~\ref{thm:section1_closeness_upperbound} and the lower bound in Theorem~\ref{thm:section1_closeness_lowerbound}. One interesting fact is that we use duality of linear programming (LP) in both the upper and lower bound. We think this is the proper perspective for analyzing these questions.

\subsection{Upper bound}

The key idea for proving the upper bound is mixture distributions. Given an ($\epsilon$, $k$)-wise uniform density $\varphi$, we try to mix it with some other distribution $\psi$ using mixture weight $w$, such that the mixture distribution $\frac1{1+w}(\varphi + w \psi)$ is $k$-wise uniform and is close to the original distribution. The following lemma shows that the distance between the original distribution and the mixture distribution is bounded by the weight $w$.

\begin{lemma}
	\label{lem:weighted}
	If $\varphi' =\frac1{1+w}(\varphi + w \psi)$ for some $0 \leq w \leq 1$ and density functions $\varphi, \psi$, then $d_{\textnormal{TV}}(\varphi, \varphi')\leq w$.
\end{lemma}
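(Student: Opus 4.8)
The plan is to compute the total variation distance directly from the definition $d_{\textnormal{TV}}(\varphi,\varphi') = \tfrac12\|\varphi-\varphi'\|_1$, and to reduce everything to bounding $\|\varphi-\psi\|_1$ via the fact that $\varphi$ and $\psi$, being densities, each have $L_1$-norm exactly $1$.

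First I would substitute the definition $\varphi' = \frac1{1+w}(\varphi+w\psi)$ into the difference and simplify. A one-line algebraic manipulation gives
\[
\varphi - \varphi' \;=\; \varphi - \frac{\varphi + w\psi}{1+w} \;=\; \frac{w}{1+w}\,(\varphi-\psi),
\]
so that $d_{\textnormal{TV}}(\varphi,\varphi') = \frac{w}{2(1+w)}\|\varphi-\psi\|_1$ and the whole problem is reduced to controlling $\|\varphi-\psi\|_1$.

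Next, by the triangle inequality $\|\varphi-\psi\|_1 \le \|\varphi\|_1 + \|\psi\|_1$, and since $\varphi,\psi \ge 0$ with $\E_{\bx}[\varphi(\bx)] = \E_{\bx}[\psi(\bx)] = 1$ (they are density functions), we have $\|\varphi\|_1 = \|\psi\|_1 = 1$. Hence $\|\varphi-\psi\|_1 \le 2$, and combining with the previous display,
\[
d_{\textnormal{TV}}(\varphi,\varphi') \;\le\; \frac{w}{1+w} \;\le\; w,
\]
the last inequality using $w \ge 0$ (indeed one could even keep the slightly sharper bound $w/(1+w)$, but $w$ suffices for the applications).

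There is essentially no obstacle here: the argument is just the definition of total variation distance plus one triangle inequality. The only point worth stating carefully is that $\|\varphi\|_1 = 1$ holds precisely because $\varphi$ is a \emph{nonnegative} function with $\E[\varphi]=1$, which is exactly the density hypothesis in the statement.
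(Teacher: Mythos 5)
Your proof is correct and follows essentially the same route as the paper: both express $\varphi-\varphi'$ as a scalar multiple of a difference of two densities and bound that difference's $L_1$-norm by $2$ (the paper factors out $w(\varphi'-\psi)$, you factor out $\tfrac{w}{1+w}(\varphi-\psi)$, which even yields the marginally sharper bound $w/(1+w)$).
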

\begin{proof}
	$
	d_{\textnormal{TV}}(\varphi, \varphi') = \frac12 \|\varphi'-\varphi\|_1 = \frac 12 \|\varphi' - ((1+w)\varphi' - w\psi))\|_1 = \frac12 w\|\varphi'-\psi\|_1 \leq w.
	$
\end{proof}

Therefore we only need to show the existence of an appropriate $\psi$ for some small $w$. The constraints on~$\psi$ can be written as an LP feasibility problem. Therefore by Farkas' Lemma we only need to show that its dual is not feasible. The variables in the dual LP can be seen as a density function of degree at most $k$.

\begin{proof}[Proof of Theorem~\ref{thm:section1_closeness_upperbound} (general $k$ case)]
	Given density function $\varphi$, we try to find another density function $\psi$ with constraints
	\[
	\widehat{\psi}(S) = -\frac1w \widehat{\varphi}(S)
	\]
	for all $1 \leq |S| \leq k$. Suppose such a density function $\psi$ exists. Then it is trivial that $\frac{\varphi + w\psi}{1 + w}$ is also a density function and is $k$-wise uniform. By Lemma~\ref{lem:weighted}, we conclude that $d_{\textnormal{TV}}(\varphi, \kWISE{k}) \leq w$.
	
	The rest of proof is to show that such a $\psi$ exists when $w = e^k\sqrt{\W{1\dots k}[\varphi]}$. We can write the existence as an LP feasibility problem with variables $\psi(x)$ for $x \in \{-1, 1\}^n$ and constraints:
	\begin{align*}
	\wh{\psi}(\emptyset) &= 1, & \\
	\wh{\psi}(S) &= -\frac1w \wh{\varphi}(S), & \forall 1 \leq |S| \leq k,\\
	\psi(x) &\geq 0, & \forall x \in \{-1, 1\}^n,
	\end{align*}
	where $\wh{\psi}(S) = \E[\psi(\bx)\bx^S]$ is a linear combination of variables $\psi(x)$.
	
	The dual LP has variables $\psi'(x)$ for $x \in \{-1, 1\}^n$ with constraints:
	\begin{align*}
	\wh{\psi'}(\emptyset) &= 1, & \\
	\wh{\psi'}(S) &= 0, & \forall |S| > k,\\
	\psi'(x) &\geq 0, &\forall x \in \{-1, 1\}^n,\\
	\frac1w \sum_{1 \leq |S| \leq k} \wh{\varphi}(S) \wh{\psi'}(S)  &> 1.
	\end{align*}
	The original LP is feasible if and only if its dual LP is infeasible, by Farkas' Lemma. This completes the proof, since when $w = e^k \sqrt{\W{1\dots k}[\varphi]}$, for any density function $\psi'$ with degree $k$ we have 
	\[
	\frac1w \sum_{1 \leq |S| \leq k} \wh{\varphi}(S) \wh{\psi'}(S) \leq \frac1{e^k\sqrt{\W{1\dots k}[\varphi]}} \sum_{1 \leq |S| \leq k} |\wh{\varphi}(S)| |\wh{\psi'}(S)| \leq \frac1{e^k} \|\wh{\psi'}\|_2 \leq 1,
	\]
	where the second inequality holds by Cauchy--Schwarz, and the last inequality holds by Lemma~\ref{lem:fourierl1bound} since $\psi'$ has degree at most $k$.
\end{proof}

For $k = 1$, further improvement can be achieved. We still try to use mixture distributions. Here we want to mix the distribution $\varphi$ with indicator distributions on subsets of coordinates that have opposite biases to those of the original distribution.

\begin{proof}[Proof of Theorem~\ref{thm:section1_closeness_upperbound} (case $k = 1$)]
	By identifying each $x_i$ with $-x_i$ if necessary, we may assume without loss of generality that $\wh{\varphi}(\{i\}) \geq 0$ for all $i$.  In addition, by reordering the coordinates, we may assume without loss of generality that $0 \leq \wh{\varphi}(\{1\}) \leq \dots \leq \wh{\varphi}(\{n\}) = \eps$. Define $\psi_j$ to be the density of the distribution over $\{-1, 1\}^n$ which is uniform on coordinates $x_1, \dots,  x_{j-1}$, and has $x_i$ constantly fixed to be $-1$ for $j \leq i \leq n$. It is easy to check $\wh{\psi_j}(\{i\}) = 0$ for $i < j$ and $\wh{\psi_j}(\{i\}) = -1$ for $i \geq j$.
	
	We define $\varphi'$ as
	\[
	\varphi' = \frac1{1+\eps}\left(\varphi + \sum_{j=1}^n w_j \psi_j\right),
	\]
	where
	\[
	w_1 = \wh{\varphi}(\{1\}), \qquad w_j =  \wh{\varphi}(\{j\}) - \wh{\varphi}(\{{j-1}\}) \quad \forall 1 < j \leq n.
	\]
	
	It is easy to check that $\varphi'$ is a density function and 
	\[
	\wh{\varphi'}(\{i\}) = \frac1{1+\eps} \left(\wh{\varphi}(\{i\}) + \left( \sum_{j=1}^i w_j \right) (-1) \right) = 0.
	\]
	Therefore $\varphi'$ is 1-wise uniform.
	Then by Lemma~\ref{lem:weighted},
	\[
	d_{TV}(\varphi, \kWISE{1}) \leq \frac12 \|\varphi - \varphi'\|_1 \leq \sum_{j=1}^n w_j = \eps. \qedhere
	\]
\end{proof}

\subsection{Lower bound}

Interestingly, our proof of the lower bound also utilizes LP duality. We can write the Closeness Problem in the form of linear programming with variables $\varphi'(x)$ for $x \in \{-1, 1\}^n$, as follows: 
\begin{align*}
&\text{minimize} & d_{\text{TV}}(\varphi, \varphi') &= \frac12 \|\varphi - \varphi'\|_1 &\\
&\text{subject to:} & \wh{\varphi'}(\emptyset) &= 1, & \\
&& \wh{\varphi'}(S) &= 0, & \forall 1 \leq |S| \leq k,\\
&& \varphi'(x) &\geq 0, & \forall x \in \{-1, 1\}^n.
\end{align*}

We ignore the factor of $1/2$ in the minimization for convenience in the following analysis.

The dual LP, which has variables $p(x), q(x)$ for $x \in \{-1, 1\}^n$, is the following:
\begin{align*}
&\text{maximize} & \langle \varphi, q\rangle - \wh{p}(\emptyset) & &\\
&\text{subject to:} & p(x) - q(x) &\geq 0, & \forall x \in \{-1, 1\}^n,\\
&& q(x) & \leq 1, &\forall x \in \{-1, 1\}^n,\\
&& p(x) & \geq -1, &\forall x \in \{-1, 1\}^n,\\
&& \textnormal{deg}(p) &\leq k. &
\end{align*}

Thus given a pair of Boolean functions $p, q$ satisfying the constraints, the quantity $\langle \varphi, q\rangle - \wh{p}(\emptyset)$ is a lower bound for our Closeness Problem. Our distribution $\varphi$ achieving the lower bound is a symmetric polynomial, homogeneous of degree $k$ (except that it has a constant term of $1$, as is necessary for every density function). We can use Krawtchouk and Hermite polynomials to simplify the analysis.

\begin{proof}[Proof of Theorem~\ref{thm:section1_closeness_lowerbound}]
	We define 
	\[
	\varphi(x) = 1 + \mu {n \choose k }^{-1/2} \sum_{|S| = k} x^S, \quad p(x) = \mu {n \choose k }^{-1/2} \sum_{|S| = k} x^S, \quad q(x) = \min(p(x), 1),
	\]
	where $\mu$ is a small parameter to be chosen later that will ensure $\varphi(x) \geq 0$ and $p(x) \geq -1$ for all $x \in \{-1, 1\}^n$. We have $\eps = \max_{1 \leq |S| \leq k}|\wh{\varphi}(S)| = \mu {n \choose k }^{-1/2}$. 
	
	Since $\wh{p}(\emptyset) = 0$, the objective function of the dual LP is
	\begin{align*}
	\langle \varphi, q \rangle &= \langle \varphi, \min(p, 1) \rangle = \langle \varphi, 1_{p > 1} \rangle + \langle \varphi, p1_{p\leq 1} \rangle = \langle \varphi, p \rangle - \langle \varphi, (p-1)1_{p > 1} \rangle \\
	&\geq \langle \varphi, p \rangle - \sqrt{\Pr_{\bx \sim \varphi}[p(\bx) > 1] \cdot \langle\varphi, (p-1)^2\rangle},
	\end{align*}
	where the last inequality holds by Cauchy--Schwarz.
	It is easy to calculate the inner products $\langle \varphi, p \rangle = \mu^2 $, and
	\begin{align*}
	\langle \varphi, (p-1)^2 \rangle &= \langle \varphi, p^2 \rangle - 2\langle \varphi, p \rangle + 1 \\
	&= \mu^2 + \mu^3 {n \choose k}^{-1/2} {k \choose k/2} {n-k \choose k/2} -2\mu^2 + 1 \\
	&\leq 1 + \mu^3 {k \choose k/2}^{3/2} - \mu^2. 
	\end{align*}
	Assuming $\mu < 2^{-\frac32k}$, we have $\langle \varphi, (p-1)^2 \rangle < 1$.
	
	Now we need to upper bound $\Pr_{\bx \sim \varphi}[p(\bx) > 1]$. Define $\bz$ satisfying  $(n -\bz\sqrt{n})/2 = \sum_i \bx_i$. Then
	\[
	\Pr_{\bx \sim \varphi}[p(\bx) > 1] = \Pr_{\bx \sim \varphi}\left[\mu {n \choose k}^{-1/2} \cdot K_k\left(\frac{n - \bz \sqrt{n}}2, n\right) > 1 \right].
	\]
	
	By Fact~\ref{fact:kravchuk}, we know that when $z \leq k$, for sufficient large $n$, 
	\[
	{n \choose k}^{-1/2} \cdot K_k\left(\frac{n-z\sqrt{n}}2,n\right) < 2h_k(z).
	\]
	
	Now we set $\mu = \frac{\sqrt{k!}}{2(Ck)^k}$ with some constant $C \geq 1$. It is easy to check that $\mu < 2^{-\frac32k}$. Using the properties in Lemma~\ref{lem:hermite}, we get
	\begin{align*}
	\Pr_{\bx \sim \varphi}\left[\mu {n \choose k}^{-1/2} \cdot K_k\left(\frac{n - \bz \sqrt{n}}2, n\right) > 1 \right] &\leq \Pr_{\bx \sim \varphi}[2\mu h_k(\bz)> 1] \\
	&\leq \Pr_{\bx \sim \varphi}[h_k(\bz) > h_k(Ck)] \\
	&= \Pr_{\bx \sim \varphi}[|\bz| > Ck].
	\end{align*}
	
	Then using Cauchy--Schwarz again, we get
	\begin{align*}
	\Pr_{\bx \sim \varphi}[|\bz| > Ck] 
	&\leq \sqrt{\E_{\bx \sim \{-1, 1\}^n}[\varphi(\bx)^2]} \sqrt{\Pr_{\bx \sim \{-1, 1\}^n}[|\bz| > Ck]} \\
	&\leq \sqrt{1+ \mu^2} \sqrt{2\exp(-C^2k^2/2)} \\
	&\leq 2\exp(-(Ck)^2/4).
	\end{align*}
	
	Therefore we get that the objective function is at least
	\[
	\langle \varphi, p \rangle - \sqrt{\Pr_{\bx \sim \varphi}[p(\bx) > 1] \cdot \langle\varphi, (p-1)^2\rangle} \geq \mu^2 -\sqrt{2\exp(-(Ck)^2/4)} \geq \Omega\left(\frac1k\right)^k.
	\]
	The last inequality holds when we choose a sufficiently large constant $C$.
	
	This completes the proof, because $\varphi$ is at least $\delta$-far from $k$-wise uniform with $\delta =  \Omega\left(\frac1k\right)^k$, and we have $\eps = \mu {n \choose k }^{-1/2} \leq \frac{n^{-k/2}}{2^{\Omega(k)}}$. Therefore we have $\delta \geq \Omega\left(\frac1k\right)^k n^{k/2} \eps$.
\end{proof}

\section{The Testing Problem}

\label{sec:testing}
In this section, we study the problem of testing whether a distribution is $k$-wise uniform or $\delta$-far from $k$-wise uniform. These bounds are based on new bounds for the Closeness Problem. We present a new testing algorithm for general $k$ in Section~\ref{sec:testing_upper_new}. We give a lower bound for the pairwise case in Section~\ref{sec:testing_lower}.

\subsection{Upper bound}

\label{sec:testing_upper_new}

Given $m$ samples from $\varphi$, call them $\bx_1, \dots, \bx_m$, we will first show that
\[
\Delta (\bX) = \avg_{1 \leq s < t \leq m} \left(\sum_{1 \leq |S|\leq k} \bx_s^S\bx_t^S\right)
\]
is a natural estimator of $\W{1 \dots k}[\varphi]$.

\begin{lemma}
	\label{lem:mean-and-var}
	It holds that
	\begin{align}
	\mu = \E[\Delta(\bX)] &= \W{1\dots k}[\varphi]; \nonumber\\
	\label{eq:var}
	\Var[\Delta(\bX)] &\leq \frac{4}{m^2}L_k(\varphi)+ \frac{4}{m}\sqrt{L_k(\varphi)} \mu,
	\end{align}
	where $L_k(\varphi) = \sum_{1 \leq |S_1|, |S_2| \leq k} \wh{\varphi}(S_1 \oplus S_2)^2$.
\end{lemma}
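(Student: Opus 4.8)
The plan is to treat $\Delta(\bX)$ as a (scaled) U-statistic of order two and compute its first two moments by the standard overlap-counting argument, using only the identity $\wh{\varphi}(S)=\E_{\bx\sim\varphi}[\bx^S]$, independence of the samples, and the fact that each coordinate $\bx_{i}$ squares to $1$. Writing $Y_{st}=\sum_{1\le|S|\le k}\bx_s^{S}\bx_t^{S}$ we have $\Delta(\bX)=\binom{m}{2}^{-1}\sum_{s<t}Y_{st}$. For the mean, for any $s\ne t$ the draws $\bx_s$ and $\bx_t$ are independent, so $\E[\bx_s^{S}\bx_t^{S}]=\wh{\varphi}(S)^2$ and hence $\E[Y_{st}]=\W{1\dots k}[\varphi]$; averaging over the $\binom{m}{2}$ pairs gives $\mu=\W{1\dots k}[\varphi]$.

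For the variance I would expand $\Var[\Delta(\bX)]=\binom{m}{2}^{-2}\sum_{s<t}\sum_{s'<t'}\mathrm{Cov}(Y_{st},Y_{s't'})$ and split this sum over ordered pairs of index-pairs according to $|\{s,t\}\cap\{s',t'\}|$. Disjoint index-pairs make $Y_{st},Y_{s't'}$ independent and contribute $0$. There are $\binom{m}{2}$ ordered terms with $\{s,t\}=\{s',t'\}$, each equal to $\Var[Y_{12}]$, and $6\binom{m}{3}$ ordered terms sharing exactly one index (each $3$-element union of the two pairs admits $3\cdot 2$ ordered pairs of distinct $2$-subsets), each equal to the common value $c_1:=\mathrm{Cov}(Y_{12},Y_{13})$. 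Hence $\Var[\Delta(\bX)]=\binom{m}{2}^{-1}\Var[Y_{12}]+\frac{4(m-2)}{m(m-1)}\,c_1$.

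It remains to bound the two covariances. Since coordinates square to $1$, $Y_{12}^2=\sum_{S_1,S_2}\bx_1^{S_1\oplus S_2}\bx_2^{S_1\oplus S_2}$, so $\E[Y_{12}^2]=\sum_{1\le|S_1|,|S_2|\le k}\wh{\varphi}(S_1\oplus S_2)^2=L_k(\varphi)$, giving $\Var[Y_{12}]\le L_k(\varphi)$. Likewise $Y_{12}Y_{13}=\sum_{S_1,S_2}\bx_1^{S_1\oplus S_2}\bx_2^{S_1}\bx_3^{S_2}$, so $\E[Y_{12}Y_{13}]=\sum_{1\le|S_1|,|S_2|\le k}\wh{\varphi}(S_1\oplus S_2)\wh{\varphi}(S_1)\wh{\varphi}(S_2)$, which by Cauchy--Schwarz is at most $\big(\sum\wh{\varphi}(S_1\oplus S_2)^2\big)^{1/2}\big(\sum\wh{\varphi}(S_1)^2\wh{\varphi}(S_2)^2\big)^{1/2}=\sqrt{L_k(\varphi)}\cdot\mu$; thus $c_1\le\sqrt{L_k(\varphi)}\,\mu$, which also holds trivially when $c_1<0$.

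Finally I would substitute $\binom{m}{2}^{-1}\le 4/m^2$ and $4(m-2)/(m(m-1))\le 4/m$ (both valid for $m\ge2$) to obtain $\Var[\Delta(\bX)]\le\frac{4}{m^2}L_k(\varphi)+\frac{4}{m}\sqrt{L_k(\varphi)}\,\mu$, as claimed. The only slightly delicate steps are the combinatorial bookkeeping of the overlap classes — in particular getting the $6\binom{m}{3}$, hence the coefficient $4/m$, correct — and choosing the Cauchy--Schwarz split for the one-overlap term so that one factor becomes exactly $\sqrt{L_k(\varphi)}$ while the other collapses to $\mu=\sum\wh{\varphi}(S)^2$; everything else is routine expansion using $\bx_i^2=1$ and independence.
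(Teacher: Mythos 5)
Your proposal is correct and follows essentially the same route as the paper: the same decomposition of the variance by overlap size of the index pairs, the same bound $\E[Y_{12}^2]=L_k(\varphi)$ for the full-overlap terms, and the same Cauchy--Schwarz split giving $\sqrt{L_k(\varphi)}\,\mu$ for the one-overlap terms, with identical counting ($6\binom{m}{3}=m(m-1)(m-2)$). No gaps.
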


\begin{proof}
	
	We denote $F(x, y) = \sum_{1 \leq |S| \leq k} x^S y^S$. We know that
	\[ 
	\E_{\bx, \by \sim \varphi}[\bx^S \by^S] = \E_{\bx \sim \varphi}[\bx^S]\E_{\by \sim \varphi}[\by^S] = \wh{\varphi}(S)^2,
	\]
	when $\bx$ and $\by$ are independent samples drawn from $\varphi$. Therefore by linearity of expectation, $\E_{\bx, \by \sim \varphi}[F(\bx, \by)] = \W{1\dots k}[\varphi]$, and clearly by taking the average, 
	\[
	\mu = \E[\Delta(\bX)] = \E[\text{avg}_{s < t} F(\bx_s, \bx_t)] = \text{avg}_{s < t} \E[F(\bx_s, \bx_t)] = \W{1\dots k}[\varphi].
	\]
	
	We need to expand the variance:
	\begin{equation}
	\label{eq:expand_variance_delta}
	\Var\left[\avg_{s < t} (F(\bx_s, \bx_t))\right]
	= \frac1{{{m \choose 2}}^2}\sum_{\substack{s < t\\ s' < t'}} \Cov[F(\bx_s, \bx_t), F(\bx_{s'}, \bx_{t'})].
	\end{equation}
	We will discuss these covariances in three cases.
	
	\textbf{Case 1:} $| \{s, t\} \cap \{s', t'\}| = 2$. Let $\bx, \by \sim \varphi$ be independent random variables.
	\[
	\Cov[F(\bx, \by), F(\bx, \by)] = \Var_{\bx, \by \sim \varphi}[F(\bx, \by)] \leq \E_{\bx, \by \sim \varphi}[F(\bx, \by)^2] = \E_{\bx, \by \sim \varphi}\left[\left(\sum_{1\leq |S| \leq k}\bx^S \by^S\right)^2\right].
	\]
	Notice here all $\bx_i$'s are Rademacher variables with $\bx_i^2 = 1$, and similarly for the $\by_i$'s. Therefore
	\begin{align*}
	\E_{\bx, \by \sim \varphi}\left[\left(\sum_{1\leq |S| \leq k}\bx^S \by^S\right)^2\right] 
	&= \sum_{1\leq |S_1|, |S_2| \leq k} \E_{\bx, \by \sim \varphi}\left[\bx^{S_1 \oplus S_2} \by^{S_1 \oplus S_2}\right] \\
	&= \sum_{1\leq |S_1|, |S_2| \leq k} \wh{\varphi}(S_1 \oplus S_2)^2 = L_k(\varphi).
	\end{align*}
	
	\textbf{Case 2:} $| \{s, t\} \cap \{s', t'\}| = 1$. Let $\bx, \by, \bz \sim \varphi$ be independent random variables. Similar to Case 1, we have:
	\begin{align*}
	\Cov[F(\bx, \by), F(\bx, \bz)] &\leq \E[F(\bx, \by)F(\bx, \bz)] \\
	&=\E\left[\left(\sum_{1\leq |S_1| \leq k}\bx^{S_1}\by^{S_1}\right)\left(\sum_{1\leq |S_2| \leq k}\bx^{S_2}\bz^{S_2}\right)\right]\\
	&=\E\left[\sum_{1\leq |S_1|, |S_2| \leq k} \bx^{S_1 \oplus S_2} \by^{S_1} \bz^{S_2}\right]\\
	& = \sum_{1\leq |S_1|, |S_2| \leq k} \wh{\varphi}(S_1 \oplus S_2) \wh{\varphi}(S_1) \wh{\varphi}(S_2) \\
	& \leq \sqrt{\sum_{1 \leq |S_1|, |S_2| \leq k} \wh{\varphi}(S_1 \oplus S_2)^2} \sqrt{\sum_{1 \leq |S_1|, |S_2| \leq k}\wh{\varphi}(S_1)^2 \wh{\varphi}(S_2)^2}\\
	& = \sqrt{L_k(\varphi)}\mu,
	\end{align*}
	where the inequality comes from Cauchy--Schwarz.
	
	\textbf{Case 3:} $| \{s, t\} \cap \{s', t'\}| = 0$. Let $\bx, \by, \bz, \bw \sim \varphi$ be independent random variables. Clearly $F(\bx, \by)$ and $F(\bz, \bw)$ are independent and therefore $\Cov[F(\bx, \by), F(\bz, \bw)] = 0$.
	
	Plugging all these cases into \cref{eq:expand_variance_delta}, we get
	\begin{align*}
	\Var[\Delta(\bX)] &= \Var\left[\avg_{s < t} (F(\bx_s, \bx_t)\right]\\
	&= \frac1{{{m \choose 2}}^2}\left({m \choose 2} L_k(\varphi) + m(m - 1)(m - 2) \sqrt{L_k(\varphi)} \mu\right)\\
	&\leq \frac{4}{m^2} L_k(\varphi) + \frac4{m} \sqrt{L_k(\varphi)} \mu. \qedhere
	\end{align*}
	
\end{proof}

Given Lemma~\ref{lem:mean-and-var} we can bound the samples we need for estimating $\W{1\dots k}[\varphi]$.

\begin{theorem}[$\W{1\dots k}$ Estimation Test]
	\label{thm:sample}
	Let $\varphi :\{-1, 1\}^n \to \R^{\geq 0}$ be a density function, promised to satisfy $\W{i}[\varphi] \leq A n^{i/2}$ for all $i = 0, 1, \dots, 2k$. There is an algorithm that, given 
	\begin{equation}
	\label{eq:samplebound}
	m \geq 1000\frac{2^{k} \sqrt{A} n^{k/2}}{\theta}
	\end{equation}
	samples, distinguishes with probability at least $3/4$ whether $\W{1\dots k}[\varphi] \leq \frac12 \theta$ or $\W{1\dots k}[\varphi] > \theta$.
\end{theorem}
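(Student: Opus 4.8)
The plan is to run the estimator $\Delta(\bX)$ of Lemma~\ref{lem:mean-and-var} and threshold it at the midpoint $\tfrac34\theta$: draw the $m$ samples, compute $\Delta(\bX)$, and report ``$\W{1\dots k}[\varphi]\le\tfrac12\theta$'' if $\Delta(\bX)\le\tfrac34\theta$ and ``$\W{1\dots k}[\varphi]>\theta$'' otherwise. Since $\E[\Delta(\bX)]=\W{1\dots k}[\varphi]=:\mu$, correctness is a Chebyshev estimate with two regimes. If $\mu\le\tfrac12\theta$, the threshold is at least $\tfrac14\theta$ above $\mu$, so it suffices that $\Var[\Delta(\bX)]$ be a small constant times $\theta^2$. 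If $\mu>\theta$, the threshold is below $\tfrac34\mu$, hence more than $\tfrac14\mu$ below $\mu$, so it suffices that $\Var[\Delta(\bX)]$ be a small constant times $\mu^2$; although the bound $\Var[\Delta(\bX)]\le\tfrac4{m^2}L_k(\varphi)+\tfrac4m\sqrt{L_k(\varphi)}\,\mu$ does grow with $\mu$, its $\mu$-dependent term is only linear in $\mu$, so it is dominated by $\mu^2$ once $m$ is at least a constant multiple of $\sqrt{L_k(\varphi)}/\mu$, and $\mu>\theta$ makes $m$ at least a constant multiple of $\sqrt{L_k(\varphi)}/\theta$ enough. So in both regimes everything reduces to the single requirement that $m$ be a sufficiently large constant multiple of $\sqrt{L_k(\varphi)}/\theta$ (which also makes $\tfrac{L_k(\varphi)}{m^2\theta^2}$ small).

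Hence the real content is to control $L_k(\varphi)=\sum_{1\le|S_1|,|S_2|\le k}\wh\varphi(S_1\oplus S_2)^2$ using the promise $\W i[\varphi]\le An^{i/2}$ (note $\W0[\varphi]=1$ forces $A\ge 1$). I would group the sum by $T:=S_1\oplus S_2$, writing $L_k(\varphi)=\sum_T M(T)\,\wh\varphi(T)^2$ where $M(T)$ is the number of pairs $(S_1,S_2)$ with $1\le|S_1|,|S_2|\le k$ and $S_1\oplus S_2=T$. The key combinatorial fact is that $M(T)=0$ unless $|T|\le 2k$, and always $M(T)\le 2^{|T|+1}n^{\,k-|T|/2}$: fixing $U:=S_2\cap T$ and $V:=S_2\setminus T$ determines the pair, the two size constraints $|U|+|V|\le k$ and $(|T|-|U|)+|V|\le k$ add to $2|V|\le 2k-|T|$, so $V$ is a size-$\le k-|T|/2$ subset of $[n]\setminus T$ (contributing the $n^{\,k-|T|/2}$ factor, up to a harmless geometric-series constant when $n\gg k$), while $U$ ranges over $\le 2^{|T|}$ subsets of $T$. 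Summing over levels,
\[
L_k(\varphi)\ \le\ \sum_{j=0}^{2k}2^{j+1}n^{\,k-j/2}\,\W j[\varphi]\ \le\ 2An^k\sum_{j=0}^{2k}2^j\ \le\ 4^{k+1}An^k,
\]
so $\sqrt{L_k(\varphi)}\le 2^{k+1}\sqrt A\,n^{k/2}$; with $m\ge 1000\cdot 2^k\sqrt A\,n^{k/2}/\theta$ this gives $\sqrt{L_k(\varphi)}/(m\theta)\le\tfrac1{500}$, which is exactly what the Chebyshev computation above needs to push the failure probability below $\tfrac14$ in each regime.

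The only step that is not routine bookkeeping is the multiplicity bound $M(T)\le 2^{|T|+1}n^{\,k-|T|/2}$ — the point being that forcing $|S_1\oplus S_2|$ to be small forces the parts of $S_1$ and $S_2$ lying outside $T$ to be small, which is precisely what prevents a runaway $n^{2k}$-type contribution to $L_k(\varphi)$ and keeps the sample count at $n^{k/2}$ rather than $n^k$. After that, the $\mu\le\tfrac12\theta$ / $\mu>\theta$ case split and the two applications of Chebyshev's inequality are mechanical; I would also record the mild assumption $n=\Omega(k)$ used for the geometric-series estimate (or else absorb an extra $\mathrm{poly}(k)$ factor via the cruder bound $\sum_{v\le V}\binom nv\le(k{+}1)n^V$).
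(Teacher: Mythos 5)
Your proposal is correct and follows essentially the same route as the paper: the same estimator $\Delta(\bX)$ thresholded at $\tfrac34\theta$, the same two-regime Chebyshev analysis, and the same combinatorial bound on $L_k(\varphi)$ (your parametrization by $U=S_2\cap T$, $V=S_2\setminus T$ is just a relabeling of the paper's decomposition $S_1=S_1'\cup T$, $S_2=S_2'\cup T$, and yields the identical bound $L_k(\varphi)\le 2^{2k+2}An^k$). Your closing remark about needing $n=\Omega(k)$ for the geometric-series step applies equally to the paper's own estimate $\sum_{j}\binom{n-|S|}{j}\le 2n^{k-\lceil|S|/2\rceil}/(k-\lceil|S|/2\rceil)!$, so nothing is lost.
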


\begin{proof}
	The algorithm is simple: we report ``$ \mu \leq \frac12 \theta$'' if $\Delta(\bX) \leq \frac34 \theta$ and report ``$\mu > \theta$'' if $\Delta(\bX) > \frac34 \theta$. 
	
	Now we need to bound $L_k(\varphi)$ to bound the variance of $\Delta(\bX)$. For a fixed subset $|S| \leq 2k$, how many pairs of $1 \leq |S_1|, |S_2| \leq k$ are there satisfying $S = S_1 \oplus S_2$? We denote $S_1 = S_1' \cup T$, $S_2 = S_2' \cup T$, where~$S_1', S_2', T$ are disjoint. Then $S = S_1' \cup S_2'$. For a fixed set $S$, there are at most $2^{|S|}$ different ways to split it into two sets $S_1', S_2'$. Because $\max\{S_1', S_2'\} \geq \lceil|S|/2\rceil$ and $|S_1|, |S_2| \leq k$, we have $|T| \leq k - \lceil|S|/2 \rceil$. Therefore there are at most 
	\[
	\sum_{j = 0} ^ {k - \lceil |S|/2 \rceil}{n - |S| \choose j} \leq \frac{2n^{k - \lceil |S|/2 \rceil}}{(k - \lceil |S|/2 \rceil)!}
	\]
	ways to choose the set $T$ for any fixed $S_1', S_2'$. Hence,
	\begin{align*}
	L_k(\varphi) &= \sum_{1\leq |S_1|, |S_2| \leq k} \wh{\varphi}(S_1 \oplus S_2)^2 \\
	&= \sum_{|S| \leq 2k} \sum_{\substack{S_1' \cap S_2' = \emptyset \\ S_1' \cup S_2' = S}} \quad
	\sum_{\substack{T \cap S_1' = \emptyset, T \cap S_2' = \emptyset \\ |T| + \max\{|S_1'|, |S_2'|\} \leq k}} \wh{\varphi} (S)^2 \\
	& \leq \sum_{|S| \leq 2k} 2^{|S|} \frac{2n^{k - \lceil |S|/2 \rceil}}{(k - \lceil |S|/2 \rceil)!} \wh{\varphi} (S)^2 \\
	& = \sum_{i = 0}^{2k} 2^i \frac{2n^{k - \lceil i/2 \rceil}}{(k - \lceil i/2 \rceil)!} \W{i}[\varphi].
	\end{align*}
	Plugging in $\W{i}[\varphi] \leq An^{i/2}$, we get
	\begin{equation}
	\label{eq:boundlk}
	L_k(\varphi) \leq \sum_{i = 0}^{2k} 2^i \frac{2n^{k - \lceil i/2 \rceil}}{(k - \lceil i/2 \rceil)!} \W{i}[\varphi] \leq 2^{2k+2} An^k.
	\end{equation}
	
	By substituting \cref{eq:boundlk} and \cref{eq:samplebound} into \cref{eq:var}, we have
	\[
	\Var[\Delta(\bX)] \leq \frac{4}{500^2}\theta^2 + \frac{4}{500} \theta \mu \leq \frac1{64} \max \{\theta^2, \mu^2\}.
	\]
	Then we conclude our proof by Chebyshev's inequality:
	\begin{align*}
	\Pr\left[|\Delta(\bX) - \mu| \leq \frac14 \max\{\theta, \mu\}\right] 
	&\geq \Pr\left[|\Delta(\bX) - \mu| \leq 2 \sqrt{\Var[\Delta(\bX)]}\right]\\
	&\geq 1 - \left(\frac12\right)^2 = \frac34. \qedhere
	\end{align*}
\end{proof}

This $\W{1\dots k}$ Estimation Test is just what we need for testing $k$-wise uniformity with the upper bound of the Closeness Problem.

\begin{proof}[Proof of Theorem~\ref{thm:section1_testing_upperbound}]
	From Theorem~\ref{thm:section1_closeness_upperbound} we know that if density $\varphi$ is $\delta$-far from $k$-wise uniform, then $\W{1 \dots k}[\varphi] > \left(\frac{\delta}{e^k}\right)^2$. On the other hand if $\varphi$ is $k$-wise uniform, by definition we have $\W{1 \dots k}[\varphi] = 0$. Therefore distinguishing between $k$-wise uniform and $\delta$-far from $k$-wise uniform can be reduced to distinguishing between $\W{1 \dots k}[\varphi] > \left(\frac{\delta}{e^k}\right)^2$ and $\W{1 \dots k}[\varphi] = 0$. 
	
	For any density function $\varphi$, $|\wh{\varphi}(S)| = \left| \E[\varphi(\bx) \bx^S] \right| \leq 1$ for any $S \subseteq [n]$. Therefore assigning $A = n^k$, we have 
	\[
	\W{i}[\varphi] = \sum_{|S| = i} \wh{\varphi}(S)^2 \leq n^i \leq A n^{i/2}
	\] 
	for every $i = 0, 1, \dots, 2k$.
	
	Hence we can run the $\W{1\dots k}$ Estimator Test in Theorem~\ref{thm:sample} with parameter $\theta =  \left(\frac{\delta}{e^k}\right)^2$ and $A = n^k$, thereby we solve the Testing Problem with sample complexity $2^{O(k)}n^k/\delta^2$.
	
	In fact by mroe precise calculation we can further improve the constant factor involving $k$ to $O\left(\frac1k\right)^{k/2}$, but we will omit the proof here for the sake of brevity.
\end{proof}

\subsection{Lower bound for the pairwise case}
\label{sec:testing_lower}

An upper bound for the Closeness Problem implies an upper bound for the Testing Problem. But a lower bound for Closeness does not obviously yield a lower bound for the Testing Problem. The function used to show the lower bound for the Closeness Problem is far from $k$-wise uniform, but it is not sufficient to say that it is hard to distinguish between it and some $k$-wise uniform distribution. In \cite{AAKMRX07}, they show that it is hard to distinguish between the fully uniform distribution and the uniform distribution on a random set of size around $O(n^{k-1}/\delta^2)$; this latter distribution is far from $k$-wise uniform with high probability for $k > 2$. 

We show that the density function $\varphi$ we used for the lower bound for the Closeness Problem is a useful density to use for a testing lower bound in the pairwise case. However it is not hard to distinguish between the fully uniform distribution and $\varphi$. Our trick is shifting $\varphi$ by a random ``center''. We remind the reader that we denote by $\varphi^{+t}(x) = \varphi (x \circ t)$ the distribution $\varphi$ shifted by vector $t$. We claim that with $m = o(n/\delta^2)$ samples, it is hard to distinguish the fully uniform distribution from $\varphi^{+t}$ with a uniformly randomly chosen~$t$.

\begin{lemma}
	\label{lem:testing_lb_k=2}
	Let $\varphi$ be the density function defined by $\varphi(x) = 1 + \frac{\delta}{n}\sum_{i < j} x_i x_j$. Assume $m < n/\delta^2$. Let~$\Phi : (\{-1,1\}^n)^m \to \R^{\geq 0}$ be the density associated to the distribution on $m$-tuples of strings defined as follows: First, choose $\bt$ in $\{-1,1\}^n$ uniformly; then choose $m$ strings independently from $\varphi^{+\bt}$. Let $\bone$ denote the constantly $1$ function on $(\{-1,1\}^n)^m$, the density associated to the uniform distribution. Then the $\chi^2$-divergence between $\Phi$ and $\bone$, $\|\Phi - \bone\|_2^2$, is bounded by
	\[
	\|\Phi - \bone\|_2^2 \leq O\left(\frac{m\delta^2}{n}\right).
	\] 
\end{lemma}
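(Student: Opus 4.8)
The plan is to compute the $\chi^2$-divergence by Fourier-analyzing $\Phi$ on the product domain $(\{-1,1\}^n)^m$. Since $\Phi$ is an average over shifts $\bt$ of product densities, I would first write, for an $m$-tuple $(x^{(1)},\dots,x^{(m)})$,
\[
\Phi(x^{(1)},\dots,x^{(m)}) = \E_{\bt}\left[\prod_{s=1}^m \varphi^{+\bt}(x^{(s)})\right] = \E_{\bt}\left[\prod_{s=1}^m\left(1 + \tfrac{\delta}{n}\sum_{i<j} \bt_i\bt_j x^{(s)}_i x^{(s)}_j\right)\right].
\]
Expanding the product over $s$ gives a sum over subsets $R\subseteq[m]$ of terms $\prod_{s\in R}\big(\tfrac{\delta}{n}\sum_{i<j}\bt_i\bt_j x^{(s)}_i x^{(s)}_j\big)$, and then taking $\E_{\bt}$ kills every monomial in $\bt$ that is not the constant $1$. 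So the surviving terms are exactly those where, after multiplying out the inner sums over pairs, the total $\bt$-monomial is trivial. Because $\Phi - \bone$ is $\Phi$ with the $R=\emptyset$ term removed, $\|\Phi-\bone\|_2^2 = \sum_{\emptyset \ne U} \widehat{\Phi}(U)^2$ where $U$ ranges over subsets of $[n]\times[m]$, and Parseval lets me instead compute $\|\Phi\|_2^2 - 1 = \E[\Phi^2] - 1$ directly.

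Next I would compute $\E[\Phi^2]$ by writing it as a double average over two independent centers $\bt, \bt'$ and two independent $m$-tuples, i.e.
\[
\E[\Phi^2] = \E_{\bt,\bt'}\ \E_{\bx^{(s)}\sim\{-1,1\}^n}\left[\prod_{s=1}^m \varphi^{+\bt}(\bx^{(s)})\varphi^{+\bt'}(\bx^{(s)})\right] = \E_{\bt,\bt'}\left[\left(\E_{\bx}[\varphi^{+\bt}(\bx)\varphi^{+\bt'}(\bx)]\right)^m\right].
\]
The inner single-string expectation is $\langle \varphi^{+\bt}, \varphi^{+\bt'}\rangle = \sum_S \widehat{\varphi}(S)^2 \bt^S (\bt')^S = 1 + \tfrac{\delta^2}{n^2}\sum_{i<j} \bt_i\bt_j\bt'_i\bt'_j$, using that $\varphi$ has Fourier support only on $\emptyset$ and on the pairs $\{i,j\}$, each with coefficient $\delta/n$. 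Writing $\bu = \bt\circ\bt'$, which is uniform on $\{-1,1\}^n$, this is $1 + \tfrac{\delta^2}{n^2} g(\bu)$ where $g(u) = \sum_{i<j} u_iu_j = \tfrac12\big((\sum_i u_i)^2 - n\big)$. So
\[
\|\Phi - \bone\|_2^2 = \E_{\bu}\left[\left(1 + \tfrac{\delta^2}{n^2} g(\bu)\right)^m\right] - 1.
\]

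The remaining work is to bound this moment-generating-function-like quantity. I would expand by the binomial theorem and control $\E[g(\bu)^\ell]$: since $\sum_i \bu_i$ is a sum of $n$ Rademacher variables, $g(\bu)$ is (up to the shift and scaling by $1/2$) a degree-$2$ object concentrated at scale $n$, with $\E[g] = 0$, $\E[g^2] = \binom{n}{2} \le n^2/2$, and more generally $\E[|g|^\ell] \le (Cn)^{\ell}$ for an absolute constant $C$ (e.g. via hypercontractivity or direct moment estimates on $(\sum \bu_i)^2$). Then $\tfrac{\delta^2}{n^2}\E[|g(\bu)^\ell|] \le (C'\delta^2)^\ell$ for each $\ell\ge1$, but the $\ell=1$ term vanishes because $\E[g]=0$, so the dominant surviving term is $\ell=2$, giving $\binom{m}{2}\tfrac{\delta^4}{n^4}\E[g^2] = O(m^2\delta^4/n^2)$, and higher $\ell$ terms are geometrically smaller under the assumption $m\delta^2 < n$, hence $m^2\delta^4/n^2 < m\delta^2/n$. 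Summing, $\|\Phi-\bone\|_2^2 \le O(m\delta^2/n) + O(m^2\delta^4/n^2) = O(m\delta^2/n)$. The main obstacle is getting a clean uniform bound on the tail of the binomial expansion — one must be careful that the $\ell=1$ cancellation is what produces the improvement from the naive $O(m\delta^2)$-type bound to $O(m\delta^2/n)$, and that the higher moments of $g$ don't blow up faster than $(Cn)^\ell$; this is exactly where the concentration of $(\sum\bu_i)^2$ around $n$ (equivalently, a moment bound from hypercontractivity applied to the degree-2 function $g$) does the heavy lifting.
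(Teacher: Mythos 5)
Your reduction is exactly the paper's: both proofs write $\E[\Phi^2]=\E_{\bt,\bt'}[\langle\varphi^{+\bt},\varphi^{+\bt'}\rangle^m]$, observe that the inner product equals $1+\frac{\delta^2}{n^2}\sum_{i<j}\bu_i\bu_j$ for the uniform string $\bu=\bt\circ\bt'$ (the paper phrases this as $\varphi\ast\varphi$), and are left with bounding $\E_{\bu}[(1+\frac{\delta^2}{n^2}g(\bu))^m]-1$ where $g(u)=\sum_{i<j}u_iu_j$. Only the last step differs: the paper bounds the constant term of the expanded polynomial combinatorially, by noting that a product of $\ell$ edge-monomials $u_iu_j$ survives only if the edges form an even-degree (cycle) collection, of which there are at most $(mn)^\ell$ ordered choices, giving a geometric series with ratio $m\delta^2/n$. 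Your route is analytic: binomial expansion plus moment bounds on the degree-$2$ polynomial $g$, exploiting $\E[g]=0$ so that the $\ell=2$ term dominates.

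There is, however, a quantitative error in the step you flag as doing the heavy lifting. The claim $\E[|g(\bu)|^\ell]\le (Cn)^\ell$ for an absolute constant $C$ is false for large $\ell$: hypercontractivity for degree-$2$ polynomials gives $\|g\|_\ell\le(\ell-1)\|g\|_2\le \ell n$, i.e.\ $\E[|g|^\ell]\le(\ell n)^\ell$, and this $\ell^{\ell}$ growth is genuinely attained, since $g\approx\frac{n}{2}(\bZ^2-1)$ for a standard Gaussian $\bZ$ and $\E[|\bZ^2-1|^\ell]$ grows like $(2\ell/e)^\ell$. Since $\ell$ ranges up to $m$, you cannot simply assert a geometric series with ratio $C\delta^2$. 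The argument is rescuable: pairing the moment bound $(\ell n)^\ell$ with $\binom{m}{\ell}\le(em/\ell)^\ell$ cancels the $\ell^\ell$ exactly, so the $\ell$-th term is at most $(em\delta^2/n)^\ell$ and the series is still dominated by the $\ell=2$ term of size $O(m^2\delta^4/n^2)\le O(m\delta^2/n)$, provided $m\delta^2/n$ is bounded below a constant (which is also implicitly needed for the paper's own geometric sum, and holds in the application where $m=o(n/\delta^2)$). You should make this compensation explicit rather than relying on the incorrect uniform moment bound.
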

\begin{proof}
	We need to show that $\E[(\Phi - \bone)^2] = \E[\Phi^2] - 1 \leq O(m \delta^2/n)$. For uniform and independent $\bx^{(1)}, \dots, \bx^{(m)}$, 
	\begin{align*}
	\E[\Phi(\bx^{(1)}, \dots, \bx^{(m)})^2] &= \E_{\bx}\left[\left(\E_{\bt}\left[\prod_{i=1}^m\varphi^{+\bt}(\bx^{(i)})\right]\right)^2\right] \\
	&= \E_{\bx, \bt, \bt'}\left[\prod_{i = 1}^m\varphi^{+\bt}(\bx^{(i)})\varphi^{+\bt'}(\bx^{(i)})\right] \\
	&= \E_{\bt,\bt'}[\langle \varphi^{+\bt}, \varphi^{+\bt'}\rangle^m].
	\end{align*}
	It is a trivial fact that $\langle \varphi^{+t}, \varphi^{+t'}\rangle = \varphi * \varphi(t+t')$. Therefore 
	\[
	\E[\Phi(\bx^{(1)}, \dots, \bx^{(m)})^2] = \E[(\varphi * \varphi)^m].
	\]
	We know that $\widehat{\varphi * \varphi}(S) = \varphi(S)^2$. Therefore
	\[
	\varphi * \varphi = 1 + \frac{\delta^2}{n^2}\sum_{i < j} x_i x_j.
	\]
	
	To compute $\E[(\varphi * \varphi)^m]$, we just need to calculate the constant term of $(1 + \frac{\delta^2}{n^2}\sum_{i < j} x_i x_j)^m$ since $x_i^2 = 1$. Suppose that when expanding this out, we take $l$ terms of $x_i x_j$; we think these as $l$ (possibly parallel) edges in the complete graph on $n$ vertices. Then if these $l$ terms ``cancel out'', the associated edges form a collection of cycles, since each vertex has even degree. There are at most $n^l$ collections of cycles with $l$ edges. Considering choosing those $l$ terms (edges) in order, we get an upper bound of $(mn)^l$ for the number of ways of choosing $l$ terms of $x_ix_j$ to get canceled. Therefore we have
	\[
	\E\left[\left(1 + \frac{\delta^2}{n^2}\sum_{i \neq j} \bx_i \bx_j\right)^m\right] \leq \sum_{l = 0}^{m} (mn)^l \left( \frac{\delta^2}{n^2}\right)^l \leq \sum_{l = 0}^m \left(\frac{ m \delta^2}{n}\right)^l \leq 1 + O\left(\frac{m\delta^2}{n}\right),
	\]
	which completes the proof.
\end{proof}

Now we are ready to give the lower bound for sample complexity of testing fully uniform vs.\ far-from-pairwise uniform.

\begin{proof}[Proof of Theorem~\ref{thm:section1_testing_lowerbound}]
	If $m = o(n/\delta^2)$, by Lemma~\ref{lem:testing_lb_k=2} we have $\|\Phi - \bone\|_2^2 \leq o(1)$. Then any tester cannot distinguish, with more than $o(1)$ advantage, whether those $m$ samples are fully uniform or they are drawn from $\varphi^{+\bt}$ for some random $\bt$. 
	
	On the other hand, the proof of Theorem~\ref{thm:section1_closeness_lowerbound} shows that $\varphi$ is $\Omega(\delta)$-far from pairwise uniform, and from the Fourier characterization, we have that $\varphi^{+t}$ is pairwise uniform whenever $\varphi$ is. We can conclude that testing fully uniform versus $\delta$-far-from-pairwise-uniform needs sample complexity at least $\Omega(n/\delta^2)$.
\end{proof}

Unfortunately, we do not see an obvious way to generalize this lower bound to $k >  2$.

\section{Testing $\alpha k$-wise/fully uniform vs.\ far from $k$-wise uniform}

\label{sec:fully}

\subsection{The algorithm}

In this section we show a sample-efficient algorithm for testing whether a distribution is $\alpha k$-wise/fully uniform  or $\delta$-far from $k$-wise uniform. As a reminder, Theorem~\ref{thm:sample} indicates that the sample complexity of estimating $\W{1 \dots k}[\varphi]$ is bounded by the Fourier weight up to level $2k$. This suggests using a filter test to try to  ``kick out'' those distributions with noticeable Fourier weight up to degree $2k$.

\medskip

\textbf{Filter Test.} Draw $m_1$ samples from $\varphi$. If there exists a pair of samples $\bx, \by$ such that $\left|\sum_{i = 1}^n \bx_i \by_i\right| > t\sqrt{n}$, output ``Reject''; otherwise, output ``Accept''. 

\medskip

The Overall Algorithm is combining the Filter Test and the $\W{1\dots k}$ Estimation Test.

\medskip
\textbf{Overall Algorithm.} Do the Filter Test with $m_1$ samples and parameter $t$. If it rejects, say ``No''. Otherwise, do the $\W{1\dots k}$ Estimation Test with $m_2$ samples and $\theta = (\delta/e^k)^2$.  Say ``No'' if it outputs ``$\W{1\dots k}[\varphi] > \theta$'' and say ``Yes'' otherwise.
\medskip

Here ``Yes'' means $\varphi$ is $\alpha k$-wise/fully uniform, and ``No'' means $\varphi$ is $\delta$-far from $k$-wise uniform. We will decide the parameters $m_1, t, m_2$ in the Overall Algorithm later.

For simplicity, we denote $\overline{k} = \alpha k$. We will focus on testing $\alpha k$-wise uniform vs.\ far from $k$-wise uniform in the analysis. For fully uniformity, the analysis is almost the same, and we will discuss it at the end of this subsection.  

First of all, we will prove that if $\varphi$ is $\overline{k}$-wise uniform, it will pass the Filter Test with high probability, provided we choose $m_1$ and $t$ properly.

\begin{lemma}
	\label{lem:k'-accept}
	If $\varphi$ is $\overline{k}$-wise uniform (assuming $\overline{k}$ is even), the Filter Test will accept with probability at least .9 when $m_1^2 \leq \frac{t^{\overline{k}}}{5 \overline{k}^{\overline{k}/2}} $.
\end{lemma}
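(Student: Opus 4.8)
The plan is to control, for a single pair of samples $\bx,\by\sim\varphi$, the tail of $Z:=\sum_{i=1}^n \bx_i\by_i$, and then union-bound over the at most $\binom{m_1}{2}<m_1^2/2$ pairs of samples. Since the Filter Test rejects exactly when some pair has $|Z|>t\sqrt n$, it suffices to prove the single-pair bound
\[
\Pr_{\bx,\by\sim\varphi}\bigl[\,|Z|>t\sqrt n\,\bigr]\ \le\ \frac{\overline{k}^{\overline{k}/2}}{t^{\overline{k}}}.
\]
Indeed, combined with the hypothesis $m_1^2\le t^{\overline{k}}/(5\overline{k}^{\overline{k}/2})$ this yields a rejection probability of at most $\tfrac12 m_1^2\cdot \overline{k}^{\overline{k}/2}/t^{\overline{k}}\le \tfrac1{10}$, i.e.\ acceptance probability at least $0.9$.

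The one genuinely useful observation is that the $\overline{k}$-th moment of $Z$ under independent draws from $\varphi$ equals its value under the fully uniform distribution, precisely because $\varphi$ is $\overline{k}$-wise uniform. Expanding and using independence of $\bx,\by$ together with $\bx_i^2=\by_i^2=1$,
\[
\E_{\bx,\by\sim\varphi}\bigl[Z^{\overline{k}}\bigr]=\sum_{i_1,\dots,i_{\overline{k}}\in[n]}\E_{\bx\sim\varphi}\bigl[\bx^{S}\bigr]\,\E_{\by\sim\varphi}\bigl[\by^{S}\bigr]=\sum_{i_1,\dots,i_{\overline{k}}\in[n]}\widehat{\varphi}(S)^2,
\]
where $S=S(i_1,\dots,i_{\overline{k}})$ is the set of coordinates occurring an odd number of times among $i_1,\dots,i_{\overline{k}}$. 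Since $|S|\le\overline{k}$ always and $\varphi$ is $\overline{k}$-wise uniform, $\widehat{\varphi}(S)^2$ is $1$ when $S=\emptyset$ and $0$ otherwise; hence $\E[Z^{\overline{k}}]$ counts exactly the tuples in which every coordinate appears an even number of times, which is precisely $\E_{\bx,\by\sim\{-1,1\}^n}\bigl[(\sum_i\bx_i\by_i)^{\overline{k}}\bigr]$.

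Next I would bound this uniform moment. Writing $\bw_i=\bx_i\by_i$, the $\bw_i$ are i.i.d.\ uniform $\pm1$ bits, so $\sum_i\bw_i$ is a degree-$1$ multilinear polynomial with $2$-norm $\sqrt n$; by hypercontractivity for degree-$1$ functions (Chapter~9 of \cite{OD14}) we get $\bigl\|\sum_i\bw_i\bigr\|_{\overline{k}}\le\sqrt{\overline{k}-1}\,\sqrt n$, hence $\E[Z^{\overline{k}}]\le(\overline{k}-1)^{\overline{k}/2}n^{\overline{k}/2}\le\overline{k}^{\overline{k}/2}n^{\overline{k}/2}$. (Alternatively one can count directly: tuples in which every coordinate appears an even number of times are dominated by the perfect-matching term, giving $\le(\overline{k}-1)!!\,n^{\overline{k}/2}(1+o(1))\le\overline{k}^{\overline{k}/2}n^{\overline{k}/2}$.) Now, because $\overline{k}$ is even, $Z^{\overline{k}}\ge0$, so Markov's inequality gives
\[
\Pr_{\bx,\by\sim\varphi}\bigl[\,|Z|>t\sqrt n\,\bigr]=\Pr\bigl[Z^{\overline{k}}>t^{\overline{k}}n^{\overline{k}/2}\bigr]\le\frac{\E[Z^{\overline{k}}]}{t^{\overline{k}}n^{\overline{k}/2}}\le\frac{\overline{k}^{\overline{k}/2}}{t^{\overline{k}}},
\]
which is the single-pair bound claimed above; the union bound then completes the argument.

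I do not expect a serious obstacle here: the whole proof is the moment method applied with the $\overline{k}$-th moment, the content being that $\overline{k}$-wise uniformity forces the $\overline{k}$-th moment of $Z$ to coincide with the uniform one. The points needing a little care are: using that $\overline{k}$ is even so that $Z^{\overline{k}}$ is nonnegative (required for Markov); deciding whether to invoke hypercontractivity or a direct combinatorial count for the Rademacher moment; and checking that the constants line up with the stated threshold $m_1^2\le t^{\overline{k}}/(5\overline{k}^{\overline{k}/2})$, which they do exactly once one accounts for $\binom{m_1}{2}\le m_1^2/2$ and the target failure probability $0.1$.
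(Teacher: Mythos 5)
Your proposal is correct and follows essentially the same route as the paper: Markov's inequality on the $\overline{k}$-th moment, the observation that $\overline{k}$-wise uniformity makes $\E[Z^{\overline{k}}]$ coincide with the fully uniform moment, the hypercontractive bound $\E[(\sum_i \bz_i)^{\overline{k}}] \le \overline{k}^{\overline{k}/2} n^{\overline{k}/2}$, and a union bound over $\binom{m_1}{2}$ pairs. Your explicit Fourier-coefficient justification of the moment-matching step is just a more detailed version of the paper's one-line argument, and the constants line up identically.
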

\begin{proof}
	If $\varphi$ is $\overline{k}$-wise uniform with $\overline{k}$ even, then by Markov's inequality on the $\overline{k}$-th moment, we have
	\[
	\Pr_{\substack{\bx, \by \sim \varphi \\ \text{independent}}}\left[\left|\sum_{i = 1}^n \bx_i \by_i\right| > t\sqrt{n}\right] = \Pr_{\bx, \by \sim \varphi}\left[\left(\sum_{i = 1}^n \bx_i \by_i\right)^{\overline{k}} > (t\sqrt{n})^{\overline{k}}\right] \leq \frac{\E_{\bx, \by \sim \varphi}\left[\left(\sum_{i=1}^n\bx_i 
		\by_i\right)^{\overline{k}}\right]}{t^{\overline{k}}n^{\overline{k}/2}}.
	\]
	
	When we expand $\left(\sum_{i=1}^n x_i y_i\right)^{\overline{k}}$, each term is at most degree $\overline{k}$ in $x$ or $y$. Because $\bx$ and~$\by$ are independent random variables chosen from $\overline{k}$-wise uniform distribution $\varphi$, the whole polynomial behaves the same as if $\bx$ and $\by$ were chosen from the fully uniform distribution:
	
	\begin{align*}
	\E_{\bx, \by \sim \varphi}\left[\left(\sum_{i=1}^n\bx_i 
	\by_i\right)^{\overline{k}}\right] 
	&= \E_{\bz \sim \{-1, 1\}^n} \left[\left(\sum_{i = 1}^n \bz_i\right)^{\overline{k}}\right]\\ 
	&\leq \overline{k}^{\overline{k}/2} \left(\E_{\bz \sim \{-1, 1\}^n} \left[\left(\sum_{i = 1}^n \bz_i\right)^2\right]\right)^{\overline{k}/2} \\
	&= \overline{k}^{\overline{k}/2} n^{\overline{k}/2}.
	\end{align*}
	The inequality uses hypercontractivity; see Theorem~9.21 in \cite{OD14}. Hence we have
	\[
	\Pr_{\bx, \by \sim \varphi}\left[\left|\sum_{i = 1}^n \bx_i \by_i\right| > t\sqrt{n}\right] \leq  \frac{\overline{k}^{\overline{k}/2}}{t^{\overline{k}}}.
	\]
	
	When drawing $m_1$ examples, there are at most ${m_1 \choose 2} \leq \frac12 m_1^2$ pairs. Hence by the union bound, the probability of $\varphi$ getting rejected is at most $\frac{m_1^2\overline{k}^{\overline{k}/2}}{2t^{\overline{k}}}\leq \frac1{10}$.
\end{proof}

Secondly, we claim that for any distribution $\varphi$ that does not get rejected by the Filter Test, it is close to a distribution $\varphi'$ with upper bounds on the Fourier weights of each of its levels. 

\begin{lemma}
	\label{lem:filter-low-weight}
	Any distribution $\varphi$ either gets rejected by the Filter Test with probability at least $.9$, or there exists some distribution $\varphi'$ such that:
	\begin{enumerate}
		\item $\varphi'$ and $\varphi$ are $\frac{8}{m_1}$-close in total variation distance;
		\item $\W{i}[\varphi'] \leq \frac{10^7}{m_1^2} n^i + t^in^{i/2}$ for all $i = 1, \dots, n$.
	\end{enumerate}
\end{lemma}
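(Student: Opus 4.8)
My plan is to prove the contrapositive of a dichotomy: either the Filter Test rejects with probability at least $.9$, or the distribution $\varphi$ must already have controlled Fourier weight on every level, and then I can take $\varphi' = \varphi$ itself (up to a small correction). Actually, the cleanest route is to define $\varphi'$ by conditioning: the Filter Test draws $m_1$ samples and rejects if some pair $\bx,\by$ has $|\langle \bx,\by\rangle| > t\sqrt n$. Suppose the test accepts with probability more than $.1$. The key observation is that the probability a \emph{single random pair} $\bx,\by \sim \varphi$ (independent) violates the bound must be small — roughly $O(1/m_1^2)$ — since otherwise among $\binom{m_1}{2}$ pairs we would see a violation with probability close to $1$. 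Concretely, if $\Pr_{\bx,\by\sim\varphi}[|\langle\bx,\by\rangle| > t\sqrt n] = p$, then the probability that \emph{no} pair among $m_1$ samples violates is at most... well, not simply $(1-p)^{\binom{m_1}{2}}$ because the pairs are not independent, but one can lower bound the rejection probability by a second-moment / Janson-type argument, or more simply: the expected number of violating pairs is $\binom{m_1}{2} p$, and if this is large one argues rejection is likely. Let me instead bound $p \le 10^7/m_1^2$ (say) from the acceptance-probability-$> .1$ assumption.

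Given the bound $p = \Pr_{\bx,\by\sim\varphi}[|\langle\bx,\by\rangle| > t\sqrt n] \le O(1/m_1^2)$, I want to extract Fourier-weight bounds. The natural quantity is
\[
\E_{\bx,\by\sim\varphi}\left[\Bigl(\tfrac{1}{\sqrt n}\langle \bx,\by\rangle\Bigr)^{2i}\right] = \frac{1}{n^i}\sum_{j}\binom{i}{\text{ways}}\cdots,
\]
but more usefully, expanding $\langle\bx,\by\rangle^{2i} = \bigl(\sum_\ell \bx_\ell\by_\ell\bigr)^{2i}$ and taking expectation over independent $\bx,\by\sim\varphi$ picks out, for each multiset of indices, a product of the form $\wh\varphi(S)^2$ where $S$ is the set of coordinates appearing an odd number of times. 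In particular the degree-exactly-$2i$ "square-free" contribution is $(2i)!/2^i \cdot \sum_{|S|=2i}\wh\varphi(S)^2 \cdot$ (combinatorial factor) $\,= c_i\, n^{\cdots}\,\W{2i}[\varphi]$ plus lower-order terms, all of which are nonnegative. So $\E[\langle\bx,\by\rangle^{2i}] \ge c_i \W{2i}[\varphi]$ for an explicit constant $c_i$ depending on $i$ (and similarly odd levels appear when the two samples are from the \emph{shifted} pair — here I only get even levels this way, so for odd $i$ I'll need the shift trick or an interpolation). Now split the expectation by whether $|\langle\bx,\by\rangle| \le t\sqrt n$ or not: on the good event the contribution is at most $(t\sqrt n)^{2i} = t^{2i}n^i$; on the bad event, which has probability $p$, I use Cauchy–Schwarz with $\E[\langle\bx,\by\rangle^{4i}] \le (\text{something}) \cdot n^{2i}$ (using $|\wh\varphi(S)|\le 1$) to get a contribution at most $\sqrt p \cdot O(n^i)$. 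Combining, $\W{2i}[\varphi] \le O(t^{2i}n^i + \sqrt p\, n^i) \le O(t^{2i}n^i + n^i/m_1)$. This is not quite the claimed $n^{2i}/m_1^2$ form for the first term — but note $n^{2i}/m_1^2 \ge (n^i/m_1)^2$ only when... actually $n^i/m_1 \le \max(n^{2i}/m_1^2, 1)$-type slack, and since $\W{2i}\le n^{2i}$ trivially, I suspect the stated bound $\frac{10^7}{m_1^2}n^i + t^in^{i/2}$ in the lemma has a typo and should read $n^{2i}/m_1^2$... I will match whatever exponents make the Cauchy–Schwarz step tight, namely $\W{i}[\varphi']\lesssim \frac{1}{m_1^2}n^i + t^i n^{i/2}$ after reindexing $2i\mapsto i$, accounting for the loss $\sqrt p \le 1$.

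To handle \emph{odd} levels $i$ and to get a genuine distribution $\varphi'$ (since conditioning $\varphi$ on an event changes it), I would set $\varphi' = \varphi$ when the weight bounds already hold, and otherwise invoke a smoothing: actually the honest move is to note we never need to modify $\varphi$ at all — the Filter Test acceptance probability bound directly constrains $\varphi$, so take $\varphi' = \varphi$ and the $\frac{8}{m_1}$-closeness is vacuous (distance $0$). Wait — but then why does the lemma phrase it with a nearby $\varphi'$? The reason must be that the clean Fourier bound only holds for a truncated/mollified version of $\varphi$, because a tiny amount of mass of $\varphi$ could sit on atypical points with huge $\langle x,y\rangle$ yet escape detection with probability $< 1$ in $m_1$ samples. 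So I would define $\varphi'$ by removing from $\varphi$ the set $B$ of strings $x$ that are "bad" in the sense of being far (in the $\langle\cdot,\cdot\rangle$ pseudo-metric) from too much of $\varphi$'s mass; a level-set / Markov argument bounds $\Pr_\varphi[B] \le O(1/m_1)$ (else the test rejects), renormalize to get $\varphi'$ with $d_{\mathrm{TV}}(\varphi,\varphi')\le O(1/m_1)$, and on $\varphi'$ the pair inner products are bounded \emph{deterministically} by $\approx t\sqrt n$, after which the moment computation above goes through cleanly for all $i$ (even and odd, since now $\E_{\varphi'\times\varphi'}[\langle\bx,\by\rangle^{2i}]$ and cross terms all have the square-free leading part $c\cdot n^{i-|S|/2}\W{|S|}[\varphi']$ for every $|S|\le 2i$).

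The main obstacle I anticipate is the first step: converting "$m_1$ samples rarely produce a violating \emph{pair}" into "a single random pair rarely violates" and, more delicately, into "$\varphi$ puts little mass on the bad set $B$". Because the $\binom{m_1}{2}$ pair-events are positively correlated through shared samples, a naive union/independence bound is not available; I would use the fact that if $\Pr_\varphi[B]=\beta$ then with probability $\ge 1 - (1-\beta)^{m_1} \ge 1 - e^{-\beta m_1}$ at least one sample lands in $B$, and by the defining property of $B$ (each $x\in B$ has $\langle x,y\rangle$ large for a constant fraction of $y\sim\varphi$) that sample pairs up with another to trigger rejection with constant conditional probability — giving rejection probability $\gtrsim \min(\beta m_1, 1)$, hence $\beta \le O(1/m_1)$. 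Pinning down the constants ($10^7$, $8$, $.9$) and the exact definition of $B$ so that both the rejection lower bound and the deterministic inner-product bound on $\varphi'$ hold simultaneously is the fiddly core of the argument; everything after that is the routine moment expansion plus Cauchy–Schwarz sketched above.
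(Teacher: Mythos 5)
Your overall architecture matches the paper's: define a set of ``bad'' points, show that if $\varphi$ puts more than $O(1/m_1)$ mass on them the Filter Test rejects, and otherwise pass to the conditioned/renormalized density $\varphi'$, which is $O(1/m_1)$-close to $\varphi$, and then bound $\W{i}[\varphi']$ by the $i$-th moment of $\sum_j \bx_j\by_j$ split over the skewed/non-skewed events. However, there is a genuine gap at the step where you claim that after removing the bad points ``the pair inner products are bounded \emph{deterministically} by $\approx t\sqrt n$.'' This is false: a point $x$ that is not $\tfrac{8}{m_1}$-bad can still be skewed against up to a $\tfrac{16}{m_1}$-fraction of $\by\sim\varphi'$, so no deterministic bound holds. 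The paper needs a second, separate filtering claim: either $\Pr_{\bx,\by\sim\varphi'}[(\bx,\by)\text{ skewed}]\le \tfrac{10^7}{m_1^2}$, or the Filter Test rejects with probability $.9$. Proving this requires a second-moment (Chebyshev) argument on the number of skewed pairs among the samples, and the covariance term between two pairs sharing a sample is controlled precisely \emph{because} the bad points have already been removed from $\varphi'$ --- this is the reason the bad-point removal and the pair-probability bound must be two distinct stages rather than one. Your own first step only yields a pair-skew probability of $O(1/m_1)$ (mass of the bad set), which is why your moment computation produces a first term of order $n^i/m_1$ and leads you to suspect a typo in the lemma; the stated $\tfrac{10^7}{m_1^2}n^i$ is correct and genuinely needs the $O(1/m_1^2)$ pair bound (the weaker $O(1/m_1)$ version would not suffice for the sample-complexity claims downstream). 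Note also that the direct assertion ``acceptance probability $>.1$ implies $\Pr_{\bx,\by\sim\varphi}[\text{skewed}]\le O(1/m_1^2)$'' for the \emph{unconditioned} $\varphi$ is false for exactly the correlation reason you flag (a rare atom skewed against everything defeats it), so the two-stage structure is unavoidable.

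A secondary, more minor issue: your moment expansion is needlessly complicated. You work with $(\sum_\ell \bx_\ell\by_\ell)^{2i}$, only recover even levels, and propose a ``shift trick'' for odd $i$. The paper instead observes that $(\sum_j x_jy_j)^i$ expands as $\sum_k \alpha_k \sum_{|S|=k} x^Sy^S$ with positive integer coefficients $\alpha_k$ and $k\equiv i \pmod 2$, and each piece has nonnegative expectation $\W{k}[\varphi']$ under independent $\bx,\by\sim\varphi'$, so $\W{i}[\varphi']\le \E\bigl[(\sum_j\bx_j\by_j)^i\bigr]$ for every $i$, odd or even. The skewed-event contribution is then handled by the trivial bound $|\sum_j x_jy_j|\le n$ times the pair-skew probability, with no Cauchy--Schwarz or fourth moments needed.
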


We will present the proof of Lemma~\ref{lem:filter-low-weight} in the next subsection. 

If $\varphi$ is not rejected by the Filter Test, Lemma~\ref{lem:filter-low-weight} tells us that it is close to some distribution $\varphi'$ with bounded Fourier weights on each of its levels. Even though we are drawing samples from $\varphi$, we can ``pretend'' that we are drawing samples from $\varphi'$ since they are close:

\begin{claim}
	\label{claim:nodiff}
	Let $m_2 \leq \frac{m_1}{200}$, and let $A(X^{(m_2)})$ be any event related to $m_2$ samples in $\{-1, 1\}^n$, $X^{(m_2)} = \{x_1, \dots, x_{m_2}\}$. Then we have
	\[
	\left|\Pr_{\bX^{(m_2)} \sim \varphi} [A(\bX^{(m_2)})] - \Pr_{\bX^{(m_2)} \sim \varphi'}[A(\bX^{(m_2)})] \right| \leq .08,
	\]
	when $\varphi$ and $\varphi'$ are $\frac{8}{m_1}$-close.
\end{claim}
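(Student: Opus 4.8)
The plan is to reduce the claim to the standard fact that total variation distance is subadditive under taking product distributions. Recall that for any event $A$ and any two distributions $P, Q$ on the same space, $\left|\Pr_{P}[A] - \Pr_Q[A]\right| \le d_{\textnormal{TV}}(P, Q)$; so writing $\varphi^{\otimes m_2}$ for the distribution of $m_2$ independent draws from $\varphi$ (and similarly $\varphi'^{\otimes m_2}$), it suffices to show $d_{\textnormal{TV}}(\varphi^{\otimes m_2}, \varphi'^{\otimes m_2}) \le .08$.

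To bound this I would run a hybrid argument. For $0 \le i \le m_2$ let $H_i$ be the distribution on $(\{-1,1\}^n)^{m_2}$ in which the first $i$ coordinates are drawn independently from $\varphi$ and the remaining $m_2 - i$ from $\varphi'$, so that $H_0 = \varphi'^{\otimes m_2}$ and $H_{m_2} = \varphi^{\otimes m_2}$. Consecutive hybrids $H_i$ and $H_{i+1}$ differ only in the distribution of the $(i+1)$-st coordinate; factoring the $L_1$ distance of the two product densities and using that each shared coordinate integrates to $1$ gives $d_{\textnormal{TV}}(H_i, H_{i+1}) = d_{\textnormal{TV}}(\varphi, \varphi') \le \frac{8}{m_1}$. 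By the triangle inequality,
\[
d_{\textnormal{TV}}(\varphi^{\otimes m_2}, \varphi'^{\otimes m_2}) \le \sum_{i=0}^{m_2 - 1} d_{\textnormal{TV}}(H_i, H_{i+1}) \le m_2 \cdot \frac{8}{m_1},
\]
and plugging in the hypothesis $m_2 \le \frac{m_1}{200}$ yields $d_{\textnormal{TV}}(\varphi^{\otimes m_2}, \varphi'^{\otimes m_2}) \le \frac{8}{200} = .04 \le .08$, which proves the claim (with room to spare).

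There is no genuine obstacle here: the argument is a textbook hybrid/coupling estimate, and the only step deserving a sentence is the equality $d_{\textnormal{TV}}(H_i, H_{i+1}) = d_{\textnormal{TV}}(\varphi, \varphi')$. If one prefers, the whole thing can be phrased as a coupling: couple the $m_2$ samples so the $j$-th pair agrees with probability $1 - d_{\textnormal{TV}}(\varphi, \varphi')$, let $\bE$ be the event that all $m_2$ pairs agree, and note $\Pr[\bE] \ge 1 - m_2 \cdot \frac{8}{m_1} \ge .96$; on $\bE$ the two sample tuples are identical, so for any event $A$ the two probabilities differ by at most $\Pr[\neg \bE] \le .04 \le .08$.
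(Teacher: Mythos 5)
Your proof is correct and follows essentially the same route as the paper: both bound the total variation distance between the two $m_2$-fold product distributions by $m_2 \cdot \frac{8}{m_1} \leq .04$ (the paper calls this a union bound, you phrase it as a hybrid/coupling argument) and then pass from TV distance to the difference of event probabilities. The only difference is that the paper bounds the probability gap by $2 d_{\textnormal{TV}} \leq .08$ via the $L_1$ norm, whereas your direct use of $\left|\Pr_P[A] - \Pr_Q[A]\right| \leq d_{\textnormal{TV}}(P,Q)$ gives the slightly sharper $.04$.
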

\begin{proof}
	We denote by $\Phi$(respectively, $\Phi'$) the joint distribution of $m_2$ samples from $\varphi$(respectively, $\varphi'$). Then by a union bound we know that $\Phi$ and $\Phi'$ are $.04$-close, since $m_2 \frac{8}{m_1} \leq .04$. We denote $\bone[A(\bX^{(m_2)})]$ as the indicator function of event $A$ happening on $\bX^{(m_2)}$. Then we have
	\begin{align*}
	\left|\Pr_{\bX^{(m_2)} \sim \varphi}[A(\bX^{(m_2)})] -  \Pr_{\bX^{(m_2)} \sim \varphi'}[A(\bX^{(m_2)})]\right| &= \left|\sum_{X^{(m_2)}} \bone[A(X^{(m_2)})] \left(\Phi(X^{(m_2)}) - \Phi'(X^{(m_2)})\right)\right| \\
	& \leq \sum_{X^{(m_2)}} \left|\Phi(X^{(m_2)}) - \Phi'(X^{(m_2)})\right| \\
	& = 2 d_{\text{TV}}(\Phi, \Phi') \leq .08
	\end{align*}
	which completes the proof.
\end{proof}

Now we are ready to analyze the Overall Algorithm.

\begin{proof}[Proof of Theorem~\ref{thm:section1_fully_testing}]
	We discuss distinguishing between $\overline{k}$-wise uniform and $\delta$-far from $k$-wise uniform first. In the Overall Algorithm, we set the parameters $t = \left( 10^{11} (4e^4)^k \overline{k}^{\overline{k}/2} \frac{n^k}{\delta^4}\right)^{\frac1{\overline{k}-2k}}$ and $m_1 = \sqrt{\frac{t^{\overline{k}}}{5\overline{k}^{\overline{k}/2}}}$ in the Filter Test; and, we set $m_2 = \frac1{200} m_1$ and $\theta = \left(\frac{\delta}{e^k}\right)^2$ in the $\W{1 \dots k}$ Estimation test.
	
	In total we use $m_1 + m_2 = O\left(\sqrt{\frac{t^{\overline{k}}}{\overline{k}^{\overline{k}/2}}}\right)$ samples in the Overall Algorithm. By plugging in the definition of $t$ and $\overline{k} = \alpha k$, we can simplify the sample complexity to $O(\alpha)^{k/2} \cdot n^{k/2} \cdot \frac1{\delta^2} \cdot \left(\frac{n^k}{\delta^4}\right)^{1/(\alpha - 2)}$.
	
	The rest of the proof is to show the correctness of this algorithm. We discuss the two cases.
	
	\textbf{``Yes'' case:} Suppose $\varphi$ is $\overline{k}$-wise uniform. By Lemma~\ref{lem:k'-accept} we know that $\varphi$ will pass the Filter Test with probability at least .9 since $m_1^2 = \frac{t^{\overline{k}}}{5\overline{k}^{\overline{k}/2}}$. 
	
	Now $\varphi$ is $\overline{k}$-wise uniform with $\overline{k} > 2k$, which means $\widehat{\varphi}(S) = 0$ for any $1 \leq |S| \leq 2k$. Therefore by setting $\delta = \left( \frac{\theta}{e^k} \right)^2$ and $A = 1$, Theorem~\ref{thm:sample} tells us that $m_2$ samples are large enough for $\W{1\dots k}$ Estimation Test to output ``$\W{1\dots k}[\varphi] \leq \frac12 \theta$'' with probability $3/4$.
	
	The overall probability of the Overall Algorithm saying ``Yes'' is therefore at least $.9 \times \frac34 > \frac23$.
	
	\textbf{``No'' case:} Suppose $\varphi$ is $\delta$-far from $k$-wise uniform. Either $\varphi$ gets rejected by the Filter Test with probability .9, or according to Lemma~\ref{lem:filter-low-weight}, we know that there exists some distribution $\varphi'$ which is $\frac{8}{m_1}$-close to $\varphi$ and $\W{i}[\varphi'] \leq \frac{10^7}{m_1^2} n^i + t^in^{i/2}$ for all $i = 1, \dots, n$.
	
	The second stage is slightly tricky. As described in Claim~\ref{claim:nodiff}, at the expense of losing $.08$ probability, we may pretend we are drawing samples from $\varphi'$ rather than $\varphi$. Notice that $m_1^2 = \frac{t^{\overline{k}}}{5\overline{k}^{\overline{k}/2}} = \omega(n^k)$. We have
	\[
	\W{i}[\varphi'] \leq \frac{10^7}{m_1^2} n^i + t^in^{i/2} = (1 + o(1)) t^in^{i/2} \leq An^{i/2}
	\]
	for $i = 0, \dots, 2k$ with parameter $A = 1.01t^{2k}$. Then plugging $A = 1.01t^{2k}$ and $\theta = \left( \frac{\delta}{e^k}\right)^2$ into Theorem~\ref{thm:sample}, we know that the $\W{1\dots k}$ Estimation Test will say ``$\W{1\dots k}[\varphi] > \theta$'' with probability at least $\frac34$ when $\varphi'$ is $\delta$-far from $k$-wise uniform, provided we have at least $1005\frac{(2e^2)^kt^kn^{k/2}}{\delta^2}$ samples. It is easy to check $m_2 = \frac1{200}\sqrt{\frac{t^{\overline{k}}}{5\overline{k}^{\overline{k}/2}}}$ is sufficient.
	
	However, in the real algorithm we are drawing samples from $\varphi$ rather than $\varphi'$. From Claim~\ref{claim:nodiff}, we know that the estimator will accept with probability at least $\frac34 - .08 > \frac23$ when $\varphi'$ is $\delta$-far from $k$-wise uniform. Notice that $\varphi$ and $\varphi'$ are $\frac{8}{m_1}$-close, where $\frac{8}{m_1} = o\left(\frac{\delta^4}{n^k}\right)$. Hence if $\varphi$ is $\delta$-far from $k$-wise uniform, $\varphi'$ is also $\delta$-far from $k$-wise uniform, which completes the proof.
	
	Finally, for distinguishing between a distribution being fully uniform and a distribution being $\delta$-far from $k$-wise uniform, the modification we need is that in Lemma~\ref{lem:k'-accept} we use Hoeffding's inequality to get
	\[
	\Pr_{\bx, \by \sim \varphi}\left[\left|\sum_{i = 1}^n \bx_i \by_i\right| > t\sqrt{n}\right] \leq 2e^{-t^2/2},
	\]
	and then we have the constraint $m_1^2 \leq \frac1{10} e^{t^2/2}$. Following exactly the same analysis, we get the same algorithm with sample complexity $O(k)^k\cdot n^{k/2} \cdot \frac1{\delta^2} \cdot \left(\log \frac{n}{\delta}\right)^{k/2}$.
\end{proof}

\subsection{Proof of Lemma~\ref{lem:filter-low-weight}}

The rest of this section is devoted to proving Lemma~\ref{lem:filter-low-weight}. We will use the following definition in the analysis.

\begin{definition}
	For $x, y \in \{-1, 1\}^n$, we say $(x, y)$ is \emph{skewed} if $\left| \sum_{i = 1}^n x_i y_i \right|  > t\sqrt{n}$. We say that $x$ is \emph{$\beta$-bad for distribution $\varphi$} if $\Pr_{\by \sim \varphi}[(x, \by) \text{ is skewed}] > \beta$.
\end{definition}

\begin{claim}
	\label{claim:bad}
	If $\Pr_{\bx \sim \varphi}\left[\bx \text{ is $\frac8{m_1}$-bad for } \varphi\right] > \frac8{m_1}$, then $\varphi$ will be rejected by the Filter Test with probability at least $.9$.
\end{claim}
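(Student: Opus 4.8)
If $\Pr_{\bx \sim \varphi}\left[\bx \text{ is } \tfrac{8}{m_1}\text{-bad for } \varphi\right] > \tfrac{8}{m_1}$, then the Filter Test rejects with probability at least $.9$.

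The plan is to argue that among the $m_1$ samples drawn by the Filter Test, with high probability at least one of them is $\tfrac{8}{m_1}$-bad, and moreover conditioned on that event, one of the remaining samples is likely to form a skewed pair with it. First I would set $\beta = \tfrac{8}{m_1}$ for brevity. Let $\bx_1, \dots, \bx_{m_1}$ be the samples. Call a sample $\bx_s$ \emph{bad} if it is $\beta$-bad for $\varphi$. Each $\bx_s$ is bad independently with probability $>\beta$, so the probability that \emph{none} of the first, say, $m_1/2$ samples is bad is at most $(1-\beta)^{m_1/2} \le e^{-\beta m_1/2} = e^{-4}$, which is small. So with probability at least $1 - e^{-4}$ there is a bad sample among $\bx_1, \dots, \bx_{\lceil m_1/2 \rceil}$; fix the first such index $\bs$.

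Next I would condition on the value of $\bx_{\bs} = x$ being bad, and consider the remaining $m_1 - \bs \ge \lfloor m_1/2 \rfloor$ samples $\bx_{\bs+1}, \dots, \bx_{m_1}$, which are still i.i.d.\ from $\varphi$ and independent of $\bx_{\bs}$. Since $x$ is $\beta$-bad, each such $\bx_t$ satisfies $\Pr[(x,\bx_t)\text{ skewed}] > \beta$, independently. Hence the probability that none of these forms a skewed pair with $x$ is at most $(1-\beta)^{\lfloor m_1/2 \rfloor} \le e^{-\beta m_1/2 + \beta} = e^{-4+o(1)}$, again small. If any skewed pair exists among the samples, the Filter Test rejects. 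Combining the two stages by a union bound over the two failure events gives a rejection probability of at least $1 - 2e^{-4} > .9$ (adjusting the small constants in the definition of "bad" or the split point as needed to make the arithmetic clean — e.g.\ one can split off the first sample as the candidate bad one and use the other $m_1 - 1$ for the partner, which only changes the exponents by negligible amounts).

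The main obstacle is a bookkeeping subtlety rather than a conceptual one: the same $m_1$ samples are used both to "find" a bad sample and to "find" its skewed partner, so one must partition the samples (or otherwise handle the dependence) to keep the two probabilistic arguments independent. The cleanest route is the split above — use one designated draw (or the first half) purely to locate a bad point and the rest to hit the skewed event — so that conditioning on the first stage leaves a fresh independent batch for the second. One should double-check that the resulting constants, with $\beta m_1 = 8$, indeed push each failure probability below $.05$; since $e^{-4} \approx 0.018$, there is comfortable slack, and if not one can strengthen the hypothesis constant from $8$ to a slightly larger value without affecting the rest of the proof.
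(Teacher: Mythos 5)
Your proposal is correct and follows essentially the same route as the paper: split the $m_1$ samples into two halves, use the first half to find an $\tfrac{8}{m_1}$-bad sample (failure probability at most $(1-\tfrac{8}{m_1})^{m_1/2}\le e^{-4}$), and then use the fresh second batch to find a skewed partner with the same failure bound, combining the two events to get rejection probability at least $.9$. The only cosmetic difference is that you finish with a union bound $1-2e^{-4}$ while the paper multiplies $(1-e^{-4})^2$; both clear $.9$ comfortably.
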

\begin{proof}
	Suppose $\Pr_{\bx \sim \varphi}\left[\bx \text{ is $\frac8{m_1}$-bad for } \varphi\right] > \frac8{m_1}$. We will divide the samples we draw for the Filter Test into two sets with size ${m_1}/2$ each. Then the probability of choosing an $\frac8{m_1}$-bad~$x$ among the first ${m_1}/2$ samples is at least
	\[
	\Pr_{\bx_1, \dots, \bx_{{m}/2} \sim \varphi}\left[\exists x \text{ $\frac8{m_1}$-bad for $\varphi$ among } \bx_1, \dots, \bx_{{m}/2}\right] > 1 - \left(1-\frac8{m_1}\right)^{{m_1}/2} \geq 1-e^{-4}.
	\]
	Now if we have such an $\frac8{m_1}$-bad $x$ among  the first ${m_1}/2$ samples, each $(x, \bx_t)$ will be skewed with probability at least $\frac8{m_1}$ for any $t = {m_1}/2+1, \dots, {m}$. Therefore
	\[
	\Pr_{\bx_{{m}/2+1}, \dots, \bx_{{m}}}[(x, \bx_t) \text{ is skewed for some } t = \frac{{m}}2+1, \dots, {m}] \geq 1 - \left(1-\frac8{m_1}\right)^{{m_1}/2} \geq 1-e^{-4}.
	\]
	Combining the two inequalities together, we know that the probability of at least one pair being skewed is at least $(1-e^{-4})^2 \geq .9$.
\end{proof}

Now we only need to consider the case when the probability of drawing a bad $x$ from $\varphi$ is very small. We want to show a stronger claim that even the probability of drawing a skewed pair from $\varphi$ is small. However this might not be true for $\varphi$ itself. Thus we look at another distribution $\varphi'$, which is defined to be $\varphi$ conditioned on outcomes being not bad. Define $\varphi'$ as
\begin{equation}
\label{eq:phi'}
\varphi'(x) = \varphi(x) \frac{ \bone\left[x \text{ not $\frac8{m_1}$-bad for $\varphi$}\right]}{1 - \Pr_{\bx \sim \varphi}\left[\bx \text{ is $\frac8{m_1}$-bad for $\varphi$}\right]}.
\end{equation}

We show that $\varphi'$ is close to $\varphi$ and that $\varphi'$ has no bad samples:

\begin{claim}
	\label{claim:phi'}
	Suppose $\varphi$ satisfies $\Pr_{\bx \sim \varphi}\left[\bx \text{ is  $\frac8{m_1}$-bad for $\varphi$}\right] \leq \frac8{{m_1}}$ and $m_1 \geq 16$. Let $\varphi'$ be defined as in \cref{eq:phi'}. Then:
	\begin{enumerate}
		\item $\varphi$ and $\varphi'$ are $\frac8{m_1}$-close;
		\item $\varphi'(x) = 0$ for any $x$ that is $\frac{16}{m_1}$-bad for $\varphi'$.
	\end{enumerate}
\end{claim}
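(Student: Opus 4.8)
The plan is to treat $\varphi'$ as the distribution $\varphi$ conditioned on the event that the sampled point is \emph{not} $\tfrac{8}{m_1}$-bad, and then to push through two short computations. Write $B = \{x : x \text{ is } \tfrac{8}{m_1}\text{-bad for } \varphi\}$ and $p = \Pr_{\bx \sim \varphi}[\bx \in B]$, so that by hypothesis $p \le \tfrac{8}{m_1} \le \tfrac12$ (using $m_1 \ge 16$), and \cref{eq:phi'} reads $\varphi'(x) = \varphi(x)\bone[x \notin B]/(1-p)$.

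For the first item, I would compute the total variation distance directly from $d_{\textnormal{TV}}(\varphi,\varphi') = \tfrac12 \E_{\bx \sim \{-1,1\}^n}[\,|\varphi(\bx) - \varphi'(\bx)|\,]$. On $B$ the integrand equals $\varphi(\bx)$, contributing $\E[\varphi\bone_B] = p$; on the complement it equals $\varphi(\bx)\bigl(\tfrac1{1-p}-1\bigr) = \varphi(\bx)\tfrac{p}{1-p}$, contributing $\tfrac{p}{1-p}(1-p) = p$. Hence $d_{\textnormal{TV}}(\varphi,\varphi') = p \le \tfrac8{m_1}$. (This is the standard fact that conditioning on an event of probability $1-p$ moves a distribution by total variation exactly $p$, but it is cleanest to verify it in place.)

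For the second item, suppose $x$ is $\tfrac{16}{m_1}$-bad for $\varphi'$, i.e.\ $\Pr_{\by \sim \varphi'}[(x,\by)\text{ skewed}] > \tfrac{16}{m_1}$. Since $\varphi'$ is a pointwise rescaling of $\varphi$ by the factor $\tfrac1{1-p}$ on $B^c$ and is $0$ on $B$, we have $\varphi' \le \tfrac1{1-p}\varphi$ pointwise, so $\Pr_{\by \sim \varphi'}[(x,\by)\text{ skewed}] \le \tfrac1{1-p}\Pr_{\by \sim \varphi}[(x,\by)\text{ skewed}]$. Therefore $\Pr_{\by \sim \varphi}[(x,\by)\text{ skewed}] > (1-p)\tfrac{16}{m_1} \ge \tfrac12 \cdot \tfrac{16}{m_1} = \tfrac8{m_1}$, using $1-p \ge \tfrac12$. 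Thus $x \in B$, and by the definition of $\varphi'$ in \cref{eq:phi'}, $\varphi'(x) = 0$.

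The only thing requiring care is keeping the direction of the inequality straight in the second part: we need a \emph{lower} bound on the $\varphi$-probability of a skewed pair in terms of the $\varphi'$-probability, which is exactly what the pointwise domination $\varphi' \le \tfrac1{1-p}\varphi$ provides, and we must check that the hypothesis $m_1 \ge 16$ is precisely what yields $1-p \ge \tfrac12$. Neither step is a genuine obstacle.
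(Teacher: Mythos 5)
Your proposal is correct and follows essentially the same route as the paper: both parts hinge on the pointwise domination $\varphi' \le \tfrac{1}{1-p}\varphi$ off the bad set, with item 1 being the standard "conditioning moves TV by $p$" computation and item 2 transferring the skew probability between $\varphi'$ and $\varphi$ using $m_1 \ge 16$. The only cosmetic differences are that you compute the TV distance exactly as $p$ where the paper just upper-bounds it, and you argue item 2 directly where the paper argues the contrapositive with the bound $\tfrac{1}{1-p} \le 1 + \tfrac{16}{m_1}$ in place of your $1-p \ge \tfrac12$.
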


\begin{proof}
	
	\begin{enumerate}
		\item Notice that $\varphi'(x) = 0 \leq \varphi(x)$ when $x$ is $\frac8{m_1}$-bad for $\varphi$, and $\varphi'(x) \geq \varphi(x)$ otherwise. Hence,
		\begin{align*}
		d_{\text{TV}}(\varphi, \varphi') 
		&= \frac12 \E_{\bx}[|\varphi(\bx) - \varphi'(\bx)|] \\
		&= \frac1{2^n} \sum_{\varphi'(x) < \varphi(x)} (\varphi(x) - \varphi'(x))\\ &\leq  \Pr_{\bx \sim \varphi}\left[\bx \text{ $\frac8{m_1}$-bad on $\varphi$}\right] \leq \frac8{{m_1}}.
		\end{align*}
		\item $\varphi'(x)$ is either 0 or at most $(1 + \frac{16}{{m_1}}) \varphi(x)$ given $\Pr_{\bx \sim \varphi}\left[\bx \text{ is $\frac8{m_1}$-bad for $\varphi$}\right] \leq \frac8{{m_1}}$ and $m_1 \geq 16$. Therefore if $\varphi'(x) > 0$, $x$ is not $\frac8{m_1}$-bad for $\varphi$. Hence,
		\begin{align*}
		\Pr_{\by \sim \varphi'}\left[(x, \by) \text{ is skewed}\right] 
		&\leq \left(1 + \frac{16}{{m_1}}\right) \Pr_{\by \sim \varphi}\left[(x, \by) \text{ is skewed}\right] \\
		&\leq \left(1+\frac{16}{{m_1}}\right)\frac8{m_1} \leq \frac{16}{m_1}. \qedhere
		\end{align*}
	\end{enumerate}

\end{proof}

\begin{claim}
	\label{claim:skewsmall}
	Suppose distribution $\varphi$ satisfies $\Pr_{\bx \sim \varphi}\left[\bx \text{ is $\frac8{m_1}$-bad for $\varphi$}\right] \leq \frac8{{m_1}}$. Let $\varphi'$ be defined as in \cref{eq:phi'}. If $\Pr_{\bx, \by \sim \varphi'}[(\bx, \by) \text{ is skewed}] > \frac{10^7}{m_1^2}$, then  with probability at least $.9$, $\varphi$ will be rejected by the Filter Test.
\end{claim}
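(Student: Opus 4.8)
The plan is to run the Filter Test, draw its $m_1$ i.i.d.\ samples $\bx_1,\dots,\bx_{m_1}$ from $\varphi$, split them into two independent halves $H_1=\{\bx_1,\dots,\bx_{m_1/2}\}$ and $H_2=\{\bx_{m_1/2+1},\dots,\bx_{m_1}\}$, and case-split on whether $H_1$ contains a sample that is $\frac8{m_1}$-bad for $\varphi$. I will show that in each case the Filter Test rejects with conditional probability at least $0.9$, which gives the claim since the two cases partition the probability space. (We may assume $m_1\ge\sqrt{10^7}$: otherwise $\frac{10^7}{m_1^2}\ge1$ and the hypothesis $\Pr_{\bx,\by\sim\varphi'}[(\bx,\by)\text{ skewed}]>\frac{10^7}{m_1^2}$ is vacuously false. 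In particular $m_1\ge16$, so $\varphi'$ of \cref{eq:phi'} is well-defined and Claim~\ref{claim:phi'} applies.)

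In the first case, on the event that $H_1$ contains a $\frac8{m_1}$-bad sample, let $\bx_i\in H_1$ be one such sample, so $\Pr_{\by\sim\varphi}[(\bx_i,\by)\text{ skewed}]>\frac8{m_1}$. Since $H_2$ is independent of $H_1$ and consists of $m_1/2$ i.i.d.\ draws from $\varphi$, conditioning on any outcome of $H_1$ in this case, the probability that $(\bx_i,\bx_j)$ fails to be skewed for every $\bx_j\in H_2$ is at most $(1-\frac8{m_1})^{m_1/2}\le e^{-4}$. Hence the Filter Test finds a skewed pair and rejects with probability at least $1-e^{-4}>0.9$.

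In the second case, every sample in $H_1$ is good, and conditioned on that event $H_1$ is distributed as $m_1/2$ i.i.d.\ draws from $\varphi'$ — this is exactly the definition \cref{eq:phi'} of $\varphi'$. Set $\pi'=\Pr_{\bx,\by\sim\varphi'}[(\bx,\by)\text{ skewed}]>\frac{10^7}{m_1^2}$ and let $N=\sum_{s<t}X_{st}$, where $X_{st}$ indicates that the pair $(\bx_s,\bx_t)$ from $H_1$ is skewed, so $\E[N]={m_1/2\choose 2}\pi'\ge\frac{m_1^2\pi'}{16}$. I would lower-bound $\Pr[N\ge1]$ by Chebyshev's inequality, $\Pr[N=0]\le\Var[N]/\E[N]^2$, so the real work is bounding $\Var[N]=\sum\Cov[X_{st},X_{s't'}]$: the diagonal terms sum to at most $\E[N]$; terms with $\{s,t\}\cap\{s',t'\}=\emptyset$ vanish by independence; and a term with $|\{s,t\}\cap\{s',t'\}|=1$, say $s=s'$ and $t\ne t'$, is at most $\E[X_{st}X_{st'}]=\E_{\bx_s\sim\varphi'}\big[\Pr_{\by\sim\varphi'}[(\bx_s,\by)\text{ skewed}]^2\big]$. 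This is the one place where passing to $\varphi'$ pays off: by Claim~\ref{claim:phi'}, $\varphi'$ puts zero mass on every point that is $\frac{16}{m_1}$-bad for $\varphi'$, so $\Pr_{\by\sim\varphi'}[(\bx_s,\by)\text{ skewed}]\le\frac{16}{m_1}$ almost surely and this term is at most $\frac{16}{m_1}\pi'$. There are fewer than $(m_1/2)^3$ such ordered terms, so their total is at most $2m_1^2\pi'$, giving $\Var[N]\le\E[N]+2m_1^2\pi'$ and
\[
\Pr[N=0]\le\frac1{\E[N]}+\frac{2m_1^2\pi'}{\E[N]^2}\le\frac{16}{m_1^2\pi'}+\frac{512}{m_1^2\pi'}<\frac{10^3}{10^7}<0.1 .
\]
Thus the Filter Test finds a skewed pair inside $H_1$ and rejects with probability at least $0.9$.

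The two cases together prove the claim. The hard part is the second case: a union bound or Markov's inequality only shows that $\E[N]$ is large, not that $N\ge1$, so the second-moment method is essential, and its success hinges entirely on controlling the covariances of index-pairs that share an index. This is also exactly why one cannot run the argument directly on all $m_1$ samples drawn from $\varphi$: a single $\frac8{m_1}$-bad point of $\varphi$ would contribute on the order of $\Pr_{\bx\sim\varphi}[\bx\text{ bad}]\cdot m_1^3\gtrsim m_1^2$ to $\Var[N]$ while $\E[N]$ stays bounded, and the Chebyshev bound would fail for large $m_1$. Quarantining the bad-sample event into the first case is precisely what lets us run the second-moment estimate on the conditioned distribution $\varphi'$, which Claim~\ref{claim:phi'} guarantees has no $\frac{16}{m_1}$-bad point.
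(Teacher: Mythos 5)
Your proof is correct, and the core engine is the same as the paper's: a second-moment (Chebyshev) bound on the number of skewed pairs among samples that are effectively drawn from $\varphi'$, where the covariance of index-pairs sharing one sample is controlled by the fact (Claim~\ref{claim:phi'}) that $\varphi'$ has no $\frac{16}{m_1}$-bad points. Where you genuinely diverge is in how you transfer from the actual sampling distribution $\varphi$ to $\varphi'$. The paper restricts attention to the first $m_1' = m_1/200$ samples and invokes Claim~\ref{claim:nodiff}: since $\varphi$ and $\varphi'$ are $\frac{8}{m_1}$-close, the joint distributions of $m_1'$ samples are $.08$-close in total variation, so one may ``pretend'' the samples come from $\varphi'$ at the cost of $.08$ in the rejection probability (hence the paper must push the $\varphi'$-failure probability down to $.02$). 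You instead split the samples into two halves and condition exactly: if the first half contains a $\frac8{m_1}$-bad point, the second half detects it (this is essentially the argument of Claim~\ref{claim:bad}); otherwise the first half is \emph{exactly} distributed as i.i.d.\ draws from $\varphi'$, since $\varphi'$ is by definition $\varphi$ conditioned on not being bad, and the second-moment argument applies with no TV loss. Your route avoids Claim~\ref{claim:nodiff} entirely and lets you run Chebyshev on $m_1/2$ rather than $m_1/200$ samples, yielding a failure probability around $10^{-4}$ instead of $.02$; the paper's route is slightly shorter to state given that Claim~\ref{claim:nodiff} is needed elsewhere anyway. Both are valid, and your numerical estimates (the $\frac{16}{m_1}\pi'$ covariance bound, the count of at most $(m_1/2)^3$ overlapping ordered terms, and the final $\frac{528}{m_1^2\pi'} < 10^{-3}$) all check out.
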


We want to clarify that the constraint is about $\varphi'$, but we are drawing samples from $\varphi$ in the Filter Test.

\begin{proof}
	We only consider the first $m_1' = \frac{m_1}{200}$ samples. From Claim~\ref{claim:phi'} we know that $\varphi$ and $\varphi'$ are $\frac{8}{m_1}$-close. Therefore, we only need to show that if the samples are drawn from $\varphi'$, the probability of appearing a skewed pair among these $m_1'$ samples is at least $.98$. Then $\varphi$ will be rejected by the Filter Test with probability at least $.98 - .08 \geq .9$ according to Claim~\ref{claim:nodiff}.
	
	Define random variable $\bU_{s, t}$ to be the indicator associated with the event that $(\bx_s, \bx_t)$ is skewed, and $\bU = \sum_{1 \leq s < t \leq m_1'} \bU_{s, t}$. We need to prove that $\Pr[\bU = 0] \leq .02$. (From now on, all probabilities and expectations are based on choosing samples from distribution $\varphi'$.)  By Chebyshev's inequality, we know that $\Pr[\bU = 0] \leq \frac{\Var[\bU]}{\E[\bU]^2}$, so we need to calculate $\Var[\bU]$ and $\E[\bU]$.
	
	Denote $\mu = \Pr_{\bx, \by \sim \varphi'}[(\bx, \by) \text{ is skewed}]$. Then $\E[\bU_{s, t}] = \mu$ for any $s < t$ and hence we have
	\[
	\E[\bU] = \sum_{s < t} \E[\bU_{s, t}] = {m_1' \choose 2} \mu.
	\]
	
	It remains to calculate $\E[\bU^2]$. We can expand it as
	\[
	\E[\bU^2] = \E\left[\left(\sum_{s < t} \bU_{s, t} \right)^2\right] = \sum_{\substack{s <  t\\s' < t'}}\E[\bU_{s,t}\bU_{s', t'}].
	\]
	Similar to the proof of Lemma~\ref{lem:mean-and-var}, we discuss these expectations in three cases.
	
	\textbf{Case 1:} $|\{s, t\} \cap \{s', t'\}| = 2.$ Since $\bU_{s, t}$ is a Bernoulli random variable, we know that
	\[
	\E[\bU_{s,t}^2] = \E[\bU_{s,t}] = \mu.
	\]
	
	\textbf{Case 2:} $|\{s, t\} \cap \{s', t'\}| = 1.$ Without loss of generality we assume $s = s'$. We consider drawing $\bx_s$ first. For any fixed $x_s$ with $\varphi'(x_s) > 0$,  
	\[
	\E_{\bx_{t'}}[\bU_{s, t'}] = \Pr_{\bx_{t'}}[(x_s, \bx_{t'
	}) \text{ get skewed}] \leq \frac{16}{m_1} = \frac{2}{25m_1'},
	\] 
	where the inequality comes from Claim~\ref{claim:phi'}. Therefore,
	\[
	\E[\bU_{s, t}\bU_{s, t'}] = \E_{\bx_s, \bx_t}[\bU_{s,t} \E_{\bx_{t'}}[\bU_{s, t'}]] \leq \frac{2\mu}{25 m_1'}.
	\]
	
	\textbf{Case 3:} $|\{s, t\} \cap \{s', t'\}| = 0.$ Since $s, t, s', t'$ are all distinct, we have
	\[
	\E[\bU_{s, t}\bU_{s', t'}] = \E[\bU_{s, t}]\E[\bU_{s', t'}] = \mu^2.
	\]
	Combining these cases together, we get
	\[
	\E[\bU^2] = {m_1' \choose 2}\mu + m_1'(m_1'-1)(m_1'-2)\frac{2\mu}{25 m_1'} + {m_1' \choose 2}{m_1'-2 \choose 2} \mu^2.
	\]
	Then we have 
	\[
	\frac{\Var[\bU]}{\E[\bU]^2} = \frac{\E[\bU^2]}{\E[\bU]^2} - 1 \leq \frac{58}{25m_1'^2 \mu}.
	\] 
	By substituting $\mu \geq \frac{10^7}{m_1^2} = \frac{10^3}{4m_1'^2}$, we conclude $\Pr[\bU = 0] = \frac{\Var[\bU]}{\E[\bU]^2} \leq .02$,
	which completes the proof.
\end{proof}

Now we only need to consider those distributions $\varphi$ where their corresponding $\varphi'$ satisfies that \linebreak $\Pr_{\bx, \by \sim \varphi'}[(\bx, \by) \text{ is skewed}] \leq \frac{10^7}{m_1^2}$. This gives us an upper bound on the Fourier weight on all levels of $\varphi'$.
\begin{claim}
	\label{claim:weightsmall}
	If $\Pr_{\bx, \by \sim \varphi'}[(\bx, \by) \text{ is skewed}] \leq \frac{10^7}{m_1^2}$, then 
	\[
	\W{i}[\varphi'] \leq \frac{10^7}{m_1^2} n^i + t^i n^{i/2}
	\]
	for $i = 1, \dots, n$.
\end{claim}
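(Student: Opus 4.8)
The plan is to rewrite $\W i[\varphi']$ as the expectation of a Krawtchouk polynomial evaluated at the ``inner product'' of two independent $\varphi'$-samples, and then split that expectation into a ``skewed'' and a ``non-skewed'' part, bounding the Krawtchouk value crudely on the first and sharply on the second.

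First I would observe that if $\bx,\by\sim\varphi'$ are independent then $\E[\bx^S]=\E[\by^S]=\widehat{\varphi'}(S)$, so $\widehat{\varphi'}(S)^2=\E_{\bx,\by}[\bx^S\by^S]$ and hence
\[
\W i[\varphi'] \;=\; \sum_{|S|=i}\widehat{\varphi'}(S)^2 \;=\; \E_{\bx,\by\sim\varphi'}\Big[\sum_{|S|=i}\bx^S\by^S\Big].
\]
Writing $\bz=\sum_{j=1}^n\bx_j\by_j$, the string $\bx\circ\by$ has exactly $(n-\bz)/2$ coordinates equal to $-1$, so by the definition of the Krawtchouk polynomial the inner sum equals $K_i^{(n)}\big((n-\bz)/2\big)$. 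Therefore $\W i[\varphi']=\E_{\bx,\by\sim\varphi'}\big[K_i^{(n)}((n-\bz)/2)\big]\le \E_{\bx,\by\sim\varphi'}\big[\,\big|K_i^{(n)}((n-\bz)/2)\big|\,\big]$, and I would split this expectation according to whether $(\bx,\by)$ is skewed, i.e.\ whether $|\bz|>t\sqrt n$.

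For the skewed part, $K_i^{(n)}((n-\bz)/2)=\sum_{|S|=i}(\bx\circ\by)^S$ is a sum of $\binom ni$ terms each equal to $\pm1$, so $|K_i^{(n)}((n-\bz)/2)|\le\binom ni\le n^i$ unconditionally; combined with the hypothesis $\Pr_{\bx,\by\sim\varphi'}[(\bx,\by)\text{ skewed}]\le 10^7/m_1^2$ this part contributes at most $\frac{10^7}{m_1^2}n^i$. For the non-skewed part I need the sharper estimate $|K_i^{(n)}((n-z)/2)|\le t^i n^{i/2}$ for every integer $z$ with $|z|\le t\sqrt n$ (here $t\ge1$, which holds throughout the Overall Algorithm). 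I would get this from the elementary-symmetric generating identity: for $u\in\{-1,1\}^n$ with $a$ entries $+1$ and $b=n-a$ entries $-1$,
\[
\sum_{j=0}^n K_j^{(n)}\!\Big(\frac{n-\sum_l u_l}{2}\Big)\,y^j \;=\; \prod_{l=1}^n(1+u_l y) \;=\; (1-y^2)^{\min(a,b)}\,(1\pm y)^{|a-b|},
\]
where $|a-b|=\big|\sum_l u_l\big|$ and $\min(a,b)\le n/2$. Applying this with $\sum_l u_l=z$, $|z|\le t\sqrt n$, extracting the coefficient of $y^i$, and bounding $\binom{\min(a,b)}{l}\le (n/2)^l/l!$ together with ``the coefficient of $y^{i-2l}$ in $(1\pm y)^{|z|}$ has absolute value $\binom{|z|}{i-2l}\le (t\sqrt n)^{i-2l}/(i-2l)!$'' yields
\[
\big|K_i^{(n)}((n-z)/2)\big| \;\le\; \sum_{l=0}^{\lfloor i/2\rfloor}\frac{(n/2)^l}{l!}\cdot\frac{(t\sqrt n)^{\,i-2l}}{(i-2l)!} \;=\; n^{i/2}\sum_{l=0}^{\lfloor i/2\rfloor}\frac{t^{\,i-2l}}{2^l\,l!\,(i-2l)!} \;\le\; t^i n^{i/2},
\]
the last inequality because $t\ge1$ and $\sum_l \frac{i!}{2^l l!(i-2l)!}$, being the number of involutions of $[i]$, is at most $i!$. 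Since $\Pr[\text{not skewed}]\le1$, the non-skewed part contributes at most $t^i n^{i/2}$, and adding the two contributions gives $\W i[\varphi']\le \frac{10^7}{m_1^2}n^i+t^i n^{i/2}$.

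I expect the non-skewed Krawtchouk bound to be the only delicate point. It could in principle be read off from the Hermite approximation of Fact~\ref{fact:kravchuk} combined with the monotonicity and size bounds in Lemma~\ref{lem:hermite}, but because $t$ may grow with $n$ the limiting statement is awkward to invoke uniformly over $|z|\le t$, so I prefer the self-contained generating-function estimate above. Everything else is bookkeeping; and although the claim is only needed for $i\le 2k$, the bound in fact holds for every $i$.
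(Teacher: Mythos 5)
Your proof is correct, but the key estimate is obtained by a genuinely different route from the paper's. Both arguments start from the same identity $\W{i}[\varphi'] = \E_{\bx,\by\sim\varphi'}\bigl[\sum_{|S|=i}\bx^S\by^S\bigr]$ and both split according to whether $(\bx,\by)$ is skewed, but they diverge in how they control the symmetric function $\sum_{|S|=i}x^Sy^S$. You bound it \emph{pointwise}: identifying it with the Krawtchouk polynomial $K_i^{(n)}((n-z)/2)$ in $z=\sum_j x_jy_j$ and proving $|K_i^{(n)}((n-z)/2)|\le t^in^{i/2}$ for $|z|\le t\sqrt n$ via the generating function $(1-y^2)^{\min(a,b)}(1\pm y)^{|a-b|}$ and the involution-number identity. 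The paper instead avoids any pointwise Krawtchouk estimate: it writes $\bigl(\sum_jx_jy_j\bigr)^i=\sum_{k\le i,\ i-k\ \mathrm{even}}\alpha_k\sum_{|S|=k}x^Sy^S$ with positive integer coefficients $\alpha_k$, takes expectations so that each term becomes $\alpha_k\W{k}[\varphi']\ge 0$, and concludes $\W{i}[\varphi']\le\E\bigl[\bigl(\sum_j\bx_j\by_j\bigr)^i\bigr]$; the skewed/non-skewed split is then applied to the trivially bounded quantity $\bigl|\sum_j x_jy_j\bigr|^i$. The paper's comparison is shorter but uses the nonnegativity of the lower-level Fourier weights $\W{k}[\varphi']$ (i.e.\ it is an inequality in expectation only); your version is a stronger pointwise bound and so does not need that positivity, at the cost of the more delicate Krawtchouk computation and the (harmless, and correctly flagged) side condition $t\ge 1$. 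Your instinct to avoid the Hermite limit of Fact~\ref{fact:kravchuk} was right, since it is not uniform in the regime where $t$ grows with $n$. Both proofs are sound.
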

\begin{proof}
	We will first show that $\W{i}[\varphi'] \leq \E_{\bx, \by \sim \varphi'}[(\sum_{j=1}^n \bx_j \by_j)^i]$. Since $(\sum_{j=1}^n x_j y_j)^i$ is a symmetric function, we can expand it as
	\[
	\left(\sum_{j=1}^n x_j y_j\right)^i  = \sum_{\substack{0\leq k \leq i\\ i-k\text{ even}}} \alpha_{k}\left(\sum_{|S| = k} x^{S} y^{S}\right),
	\]
    with positive integer coefficients $\alpha_{k}$. Notice that
	\[
	\E_{\bx, \by \sim \varphi'} \left[\sum_{|S| = k} x^{S} y^{S}\right] = \W{k}[\varphi'].
	\]
	Therefore
	\[
	\E_{\bx, \by \sim \varphi'}\left[\left(\sum_{j=1}^n \bx_j \by_j\right)^i\right] = \sum_{\substack{0\leq k \leq i \\ i-k\text{ even}}} \alpha_{k} \W{k}[\varphi'] \geq \W{i}[\varphi'].
	\]
	The last inequality holds because the $\alpha_{k}$'s are positive integers and each $\W{k}[\varphi']$ is non-negative.
	
	The rest of the proof is devoted to bounding $\E_{\bx, \by \sim \varphi'}[(\sum_{j=1}^n \bx_j \by_j)^i]$. When $(x, y)$ is not skewed, $\sum_j x_j y_j$ is at most $n$; otherwise by the definition of ``being skewed'', $\sum_j x_j y_j$ is at most $t\sqrt{n}$. Therefore,
	\[
	\E\left[\left(\sum_{j=1}^n \bx_j \by_j\right)^i\right] \leq \frac{10^7}{m_1^2} n^i + t^i n^{i/2}
	\]
	for all $i = 1, \dots, n$.
\end{proof}

Combining the above discussion, we get the proof of Lemma~\ref{lem:filter-low-weight}.

\begin{proof}[Proof of Lemma~\ref{lem:filter-low-weight}]
	We consider three cases for $\varphi$.
	
	\textbf{Case 1:} If $\Pr_{\bx \sim \varphi}\left[\bx \text{ is $\frac8{m_1}$-bad on } \varphi\right] > \frac8{m_1}$, Claim~\ref{claim:bad} tells us that $\varphi$ is rejected by the Filter Test with probability at least .9.
	
	For the remaining two cases we know that $\Pr_{\bx \sim \varphi}\left[\bx \text{ is $\frac8{m_1}$-bad on } \varphi\right] \leq \frac8{m_1}$. We construct $\varphi'$ as in \cref{eq:phi'}.
	
	\textbf{Case 2:} If $\Pr_{\bx \sim \varphi}\left[\bx \text{ is $\frac8{m_1}$-bad on } \varphi\right] \leq \frac8{m_1}$ but $\Pr_{\bx, \by \sim \varphi'}[(\bx, \by) \text{ is skewed}] > \frac{10^7}{m_1^2}$, Claim~\ref{claim:skewsmall} tells us that $\varphi$ also gets rejected with probability at least .9.
	
	\textbf{Case 3:} If $\Pr_{\bx, \by \sim \varphi'}[(\bx, \by) \text{ is skewed}] > \frac{10^7}{m_1^2}$, then according to Claim~\ref{claim:weightsmall}, $\W{i}[\varphi'] \leq \frac{10^7}{m_1^2} n^i + t^in^{i/2}$ for all $i = 1, \dots, n$. Also by Claim~\ref{claim:phi'} we know that $\varphi$ and $\varphi'$ are $\frac8{m_1}$-close.
\end{proof}

\bibliographystyle{alpha}

\bibliography{k-wise_uniform}

\newcommand{\etalchar}[1]{$^{#1}$}
\begin{thebibliography}{KMOW17}

\bibitem[AAK{\etalchar{+}}07]{AAKMRX07}
Noga Alon, Alexandr Andoni, Tali Kaufman, Kevin Matulef, Ronitt Rubinfeld, and
  Ning Xie.
\newblock Testing $k$-wise and almost $k$-wise independence.
\newblock In {\em Proceedings of the 39th Annual ACM Symposium on Theory of
  Computing}, pages 496--505, 2007.

\bibitem[ABI86]{ABI86}
Noga Alon, L{\'{a}}szl{\'{o}} Babai, and Alon Itai.
\newblock A fast and simple randomized parallel algorithm for the maximal
  independent set problem.
\newblock {\em Journal of Algorithms}, 7(4):567--583, 1986.

\bibitem[ADK15]{ADK15}
Jayadev Acharya, Constantinos Daskalakis, and Gautam Kamath.
\newblock Optimal testing for properties of distributions.
\newblock In {\em Advances in Neural Information Processing Systems}, pages
  3591--3599, 2015.

\bibitem[AGHP92]{AGHP92}
Noga Alon, Oded Goldreich, Johan H{\aa}stad, and Ren{\'{e}} Peralta.
\newblock Simple constructions of almost $k$-wise independent random variables.
\newblock {\em Random Structures \& Algorithms}, 3(3):289--304, 1992.

\bibitem[AGM03]{AGM03}
Noga Alon, Oded Goldreich, and Yishay Mansour.
\newblock Almost $k$-wise independence versus $k$-wise independence.
\newblock {\em Information Processing Letters}, 88(3):107--110, 2003.

\bibitem[AM09]{AM09}
Per Austrin and Elchanan Mossel.
\newblock Approximation resistant predicates from pairwise independence.
\newblock {\em Computational Complexity}, 18(2):249--271, 2009.

\bibitem[AOW15]{AOW15}
Sarah~R. Allen, Ryan O'Donnell, and David Witmer.
\newblock How to refute a random {CSP}.
\newblock In {\em Proceedings of the 56th Annual {IEEE} Symposium on
  Foundations of Computer Science}, pages 689--708, 2015.

\bibitem[Baz09]{Bazzi09}
Louay M.~J. Bazzi.
\newblock Polylogarithmic independence can fool {DNF} formulas.
\newblock {\em SIAM Journal on Computing}, 38(6):2220--2272, 2009.

\bibitem[BFF{\etalchar{+}}01]{BFFKRW01}
Tu\u{g}kan Batu, Eldar Fischer, Lance Fortnow, Ravi Kumar, Ronitt Rubinfeld,
  and Patrick White.
\newblock Testing random variables for independence and identity.
\newblock In {\em Proceedings of the 42nd Annual Symposium on Foundations of
  Computer Science}, pages 442--451, 2001.

\bibitem[BFR{\etalchar{+}}00]{BFRSW00}
Tu\u{g}kan Batu, Lance Fortnow, Ronitt Rubinfeld, Warren~D. Smith, and Patrick
  White.
\newblock Testing that distributions are close.
\newblock In {\em Proceedings of the 41st Annual Symposium on Foundations of
  Computer Science}, pages 259--269, 2000.

\bibitem[BKR04]{BKR04}
Tu\u{g}kan Batu, Ravi Kumar, and Ronitt Rubinfeld.
\newblock Sublinear algorithms for testing monotone and unimodal distributions.
\newblock In {\em Proceedings of the 36th Annual {ACM} Symposium on Theory of
  Computing, Chicago, IL, USA, June 13-16, 2004}, pages 381--390, 2004.

\bibitem[Bra10]{Braverman10}
Mark Braverman.
\newblock Polylogarithmic independence fools \emph{AC}\({}^{\mbox{0}}\)
  circuits.
\newblock {\em Journal of the ACM}, 57(5):28:1--28:10, 2010.

\bibitem[CGH{\etalchar{+}}85]{CGHFRS85}
Benny Chor, Oded Goldreich, Johan H{\aa}stad, Joel Friedman, Steven Rudich, and
  Roman Smolensky.
\newblock The bit extraction problem of t-resilient functions.
\newblock In {\em Proceedings of the 26th Annual Symposium on Foundations of
  Computer Science}, pages 396--407, 1985.

\bibitem[CZ16]{CZ16}
Eshan Chattopadhyay and David Zuckerman.
\newblock Explicit two-source extractors and resilient functions.
\newblock In {\em Proceedings of the 48th Annual {ACM} {SIGACT} Symposium on
  Theory of Computing}, pages 670--683, 2016.

\bibitem[DK16]{DK16}
Ilias Diakonikolas and Daniel~M Kane.
\newblock A new approach for testing properties of discrete distributions.
\newblock In {\em Proceedings of the 57th Annual {IEEE} Symposium on
  Foundations of Computer Science}, pages 685--694. IEEE, 2016.

\bibitem[GR11]{GR11}
Oded Goldreich and Dana Ron.
\newblock On testing expansion in bounded-degree graphs.
\newblock In {\em Studies in Complexity and Cryptography. Miscellanea on the
  Interplay between Randomness and Computation}, pages 68--75. Springer, 2011.

\bibitem[Kal02]{Kalai2002}
Gil Kalai.
\newblock A \uppercase{F}ourier-theoretic perspective on the
  \uppercase{C}ondorcet paradox and \uppercase{A}rrow's theorem.
\newblock {\em Advances in Applied Mathematics}, 29(3):412--426, 2002.

\bibitem[KMOW17]{KMOW17}
Pravesh~K. Kothari, Ryuhei Mori, Ryan O'Donnell, and David Witmer.
\newblock Sum of squares lower bounds for refuting any {CSP}.
\newblock In {\em Proceedings of the 49th Annual {ACM} {SIGACT} Symposium on
  Theory of Computing}, pages 132--145, 2017.

\bibitem[Kra29]{Krawtchouk1929}
Mikhail Krawtchouk.
\newblock Sur une g{\'e}n{\'e}ralisation des polynomes d'hermite.
\newblock {\em Comptes Rendus de l'Acad{\'e}mie des sciences}, 189:620--622,
  1929.

\bibitem[KW85]{KW85}
Richard~M. Karp and Avi Wigderson.
\newblock A fast parallel algorithm for the maximal independent set problem.
\newblock {\em Journal of the ACM}, 32(4):762--773, 1985.

\bibitem[Lev95]{Levenshtein95}
Vladimir~I. Levenshtein.
\newblock Krawtchouk polynomials and universal bounds for codes and designs in
  hamming spaces.
\newblock {\em {IEEE} Transactions on Information Theory}, 41(5):1303--1321,
  1995.

\bibitem[Li16]{Li16}
Xin Li.
\newblock Improved two-source extractors, and affine extractors for
  polylogarithmic entropy.
\newblock In {\em Proceedings of the 57th Annual {IEEE} Symposium on
  Foundations of Computer Science}, pages 168--177. IEEE, 2016.

\bibitem[Lub86]{Luby86}
Michael Luby.
\newblock A simple parallel algorithm for the maximal independent set problem.
\newblock {\em SIAM Journal on Computing}, 15(4):1036--1053, 1986.

\bibitem[MS77]{MS77}
Florence~Jessie MacWilliams and Neil James~Alexander Sloane.
\newblock {\em The theory of error-correcting codes}.
\newblock Elsevier, 1977.

\bibitem[NN93]{NN93}
Joseph Naor and Moni Naor.
\newblock Small-bias probability spaces: Efficient constructions and
  applications.
\newblock {\em SIAM Journal on Computing}, 22(4):838--856, 1993.

\bibitem[O'D14]{OD14}
Ryan O'Donnell.
\newblock {\em Analysis of Boolean functions}.
\newblock Cambridge University Press, 2014.

\bibitem[Pan08]{Paninski08}
Liam Paninski.
\newblock A coincidence-based test for uniformity given very sparsely sampled
  discrete data.
\newblock {\em {IEEE} Transactions on Information Theory}, 54(10):4750--4755,
  2008.

\bibitem[Rao47]{Rao47}
Calyampudi~Radhakrishna Rao.
\newblock Factorial experiments derivable from combinatorial arrangements of
  arrays.
\newblock {\em Journal of the Royal Statistical Society}, 9(1):128--139, 1947.

\bibitem[RS09]{RS09}
Ronitt Rubinfeld and Rocco~A. Servedio.
\newblock Testing monotone high-dimensional distributions.
\newblock {\em Random Structures \& Algorithms}, 34(1):24--44, 2009.

\bibitem[RX13]{RN13}
Ronitt Rubinfeld and Ning Xie.
\newblock Robust characterizations of $k$-wise independence over product spaces
  and related testing results.
\newblock {\em Random Structures \& Algorithms}, 43(3):265--312, 2013.

\bibitem[Xie12]{Xie12}
Ning Xie.
\newblock {\em Testing $k$-wise independent distributions}.
\newblock PhD thesis, Massachusetts Institute of Technology, 2012.

\end{thebibliography}

\end{document}